\documentclass[11pt,reqno]{amsart}
\pdfoutput=1
\usepackage[margin=2.3cm]{geometry}
\usepackage[ascii]{inputenc}
\usepackage[dvipsnames]{xcolor}
\usepackage[bookmarksnumbered,linktocpage,hypertexnames=false,colorlinks=true,linkcolor=NavyBlue,urlcolor=NavyBlue,citecolor=ForestGreen,anchorcolor=green,breaklinks=true,pagebackref=true,pdfusetitle]{hyperref}
\usepackage{bbm,braket,microtype,mathrsfs,amsmath,amssymb,amsthm,amsfonts,latexsym,mathtools,graphicx,enumitem,booktabs,bm,xspace,float,mathdots,caption,subcaption,ellipsis,mleftright}
\usepackage[T1]{fontenc}
\usepackage{textcomp}
\usepackage[english]{babel}
\usepackage[capitalize,nameinlink]{cleveref}
\newtheorem{theorem}{Theorem}[section]
\newtheorem*{theorem*}{Theorem}
\newtheorem{lemma}[theorem]{Lemma}
\newtheorem{proposition}[theorem]{Proposition}
\newtheorem{corollary}[theorem]{Corollary}

\theoremstyle{definition}
\newtheorem{definition}[theorem]{Definition}
\theoremstyle{remark}
\newtheorem{remark}[theorem]{Remark}

\AddToHook{env/lemma/begin}{\crefalias{theorem}{lemma}}
\AddToHook{env/proposition/begin}{\crefalias{theorem}{proposition}}
\AddToHook{env/corollary/begin}{\crefalias{theorem}{corollary}}
\AddToHook{env/conjecture/begin}{\crefalias{theorem}{conjecture}}
\AddToHook{env/definition/begin}{\crefalias{theorem}{definition}}
\AddToHook{env/remark/begin}{\crefalias{theorem}{remark}}
\AddToHook{env/example/begin}{\crefalias{theorem}{example}}
\crefname{appendix}{Appendix}{Appendices}
\numberwithin{equation}{section}
\DeclareMathOperator{\tr}{tr}

\DeclareMathOperator{\supp}{supp}
\DeclareMathOperator{\id}{id}
\DeclareMathOperator{\cb}{cb}

\DeclareMathOperator{\image}{image}
\DeclareMathOperator{\round}{round}

\newcommand{\ot}{\otimes}

\newcommand{\eps}{\varepsilon}

\newcommand{\cA}{\mathcal A}
\newcommand{\cB}{\mathcal B}
\newcommand{\cC}{\mathcal C}

\newcommand{\cH}{\mathcal H}

\newcommand{\cL}{\mathcal L}
\newcommand{\cR}{\mathcal R}
\newcommand{\cX}{\mathcal X}
\newcommand{\cY}{\mathcal Y}

\newcommand{\cM}{\mathcal M}
\newcommand{\bigO}{\mathcal O}

\DeclarePairedDelimiter\abs{\lvert}{\rvert}
\DeclarePairedDelimiter\norm{\lVert}{\rVert}

\allowdisplaybreaks[4]
\newcommand{\approxSub}[1]{\subset_{#1}}

\newcommand{\approxEq}[1]{=_{#1}}
\newcommand{\subsetsim}{\mathrel{\substack{\subset\\[-0.35ex]\mathclap{\sim}}}}
\newcommand{\distloc}{\mathrm{dist}_\mathrm{loc}}
\newcommand{\distlocH}{\mathrm{dist}_\mathrm{loc}^H}
\newcommand{\approxIn}[1]{\in_{#1}}

\begin{document}

\title{Approximate QCAs in one dimension using approximate~algebras}
\date{}
\author{Daniel Ranard}
\address{California Institute of Technology, USA}
\email{dranard@caltech.edu}
\author{Michael Walter}
\address{Ludwig-Maximiliations-Universit\"at M\"unchen \& Munich Center for Quantum Science and Technology (MCQST), Germany}
\email{michael.walter@lmu.de}
\author{Freek Witteveen}
\address{CWI, Amsterdam, The Netherlands}
\email{F.Witteveen@cwi.nl}
\begin{abstract}
Quantum cellular automata (QCAs) are automorphisms of tensor product algebras that preserve locality, with local quantum circuits as a simple example.
We study approximate QCAs, where the locality condition is only satisfied up to a small error, as occurs for local quantum dynamics on the lattice.
A priori, approximate QCAs could exhibit genuinely new behavior, failing to be well-approximated by any exact QCA.
We show this does not occur in one dimension: every approximate QCA on a finite circle can be rounded to a strict QCA with approximately the same action on local operators, so these systems are classified by the same index as in the exact case.
Previous work considered the case of the infinite line, by using global methods not amenable to finite systems.
Our new approach proceeds locally and now applies to finite systems, including circles or homomorphisms from sub-intervals.  We extract exact local boundary algebras from the approximate QCA restricted to local patches, then glue these to form a strict~QCA.
The key technical ingredient is a robust notion of the intersection of two subalgebras: when the projections onto two subalgebras approximately commute, we construct an exact subalgebra that serves as a stable proxy for their intersection.
This construction uses a recent theorem of Kitaev on the rigidity of approximate $C^*$-algebras.
\end{abstract}
\maketitle

\section{Introduction}

Quantum cellular automata (QCAs) on tensor product algebras are automorphisms whose action on operators preserves locality.
They are also called locality-preserving unitaries.
In one dimension, they are characterized by the GNVW index~\cite{gross2012index}, and all 1D QCAs are compositions of circuits and shifts (translations).
In two dimensions, QCAs are likewise composed of circuits and generalized shifts, while their classification in three dimensions and higher is much richer \cite{freedman2020classification,shirley2022three,haah2023nontrivial,haah2023invertible,fidkowski2025quantum,haah2025topological,jones2024quantum,sun2025clifford} 

We study \emph{approximate} quantum cellular automata, whose action on operators preserve locality up to some small error, motivated by the approximate Lieb--Robinson ``light cones'' omnipresent in nonrelativistic local quantum dynamics.
A priori, the classification of approximate QCAs could be different than that of strict QCAs.
In one direction, one could imagine that any two QCAs are connected by a path of approximate QCAs, rendering the classification trivial.
In another direction, there could be approximate QCAs that are not well-approximated by any strict QCA, even when compared only locally.
Such a discrepancy in the classification of strict versus approximate objects is conjectured to occur in the closely related subject of invertible topological phases.
In particular, it is believed that there exist invertible states with exponentially decaying correlations (``tails'') such that every state in the same phase must have also have tails: no invertible states in the same phase have strictly zero correlations beyond a finite scale.
One such candidate example is Kitaev's $E_8$ state in two dimensions.
Whether similar phenomena can occur in the study of QCAs is an open question, which we resolve here in the negative for the case of 1D QCAs.

The study of approximate QCAs was initiated in Ref.~\cite{ranard2022converse} by the present authors, for the case of QCAs defined on the infinite line.
For this special case, the classification of approximate QCAs was shown to match that of strict QCAs, characterized by a robust version of the GNVW index.
The proof relied on partitioning the infinite line into two infinite half-lines, working with the associated infinite-dimensional algebras.
However, these global techniques did not allow any statements about approximate QCAs for systems defined on a \textit{finite} number of sites, e.g., a finite circle.

In this work, we develop stronger, \emph{local} techniques for studying approximate QCAs in one dimension, including on the circle.
Crucially, we are able to extract exact ``boundary algebras'' \cite{freedman2020classification, haah2023invertible} from the approximate QCA restricted to a local patch.
Our techniques also apply to QCA-like homomorphisms that are only defined locally, without assuming they arise from the restriction of some global automorphism.
We hope that the study of approximate QCAs in higher dimension may benefit from this more fine-grained control in one dimension.
 For instance, a QCA on a torus also yields a QCA on a circle, by coarse-graining the transverse dimensions.
 More generally, exact QCAs in all dimensions are characterized by their boundary algebras~\cite{haah2023invertible}.

We find that one-dimensional approximate QCAs in finite volume are essentially classified in the same way as strict QCAs, by the GNVW index.
We emphasize that perhaps counterintuitively, the case of finite volume is more difficult than the case of infinite volume.
While the latter requires more analytical care, certain difficulties are pushed to infinity.
More concretely, the infinite half-line considered by Ref.~\cite{ranard2022converse} has a $0$-dimensional boundary with a \textit{single} connected component, which aids in the extraction of the boundary algebra. In contrast, finite sub-intervals have two boundary components, rendering the extraction of the boundary algebra more difficult.

Our analysis in finite volume makes use of a difficult recent theorem of Kitaev~\cite{kitaev2024almost}, concerning a notion of approximate $C^*$-algebras.
(The present work was a partial motivation for Ref.~\cite{kitaev2024almost}, in addition to related applications in the study of topological phases.)
Using this result, we develop a robust notion of the intersection of two subalgebras.  Then we can extract boundary algebras of approximate QCAs, after casting them as approximate intersections.

This local approach can also be used to reproduce the results for infinite systems in Ref.~\cite{ranard2022converse}, though that is not our focus. We work in an elementary setting where the ``tails'' of the approximate QCA are not assumed to decay with distance; instead, we only assume they are small beyond some fixed radius.

\subsection{Organization of the paper}
In \cref{sec:qcas}, we give definitions of exact and approximate QCAs and discuss how to measure their closeness.
In \cref{sec:summary of results}, we state the main results of this work (\cref{thm:circle-round,thm:round-homomorphism,thm:index,thm:discrete path}).
In \cref{sec:notation}, we introduce some basic notions of approximate equality and inclusions.
In \cref{sec:locally-rounding}, we prove a rounding theorem for local QCAs (\cref{thm:local-rounding}) and use it to deduce our main results.
In \cref{sec:approx algebras}, we prove a theorem exhibiting a robust notion of the intersection of two subalgebras (\cref{lemma:intersections}), which serves as a key technical ingredient of the rounding theorem.
Finally, \cref{app:near-inclusion-lemmas} collects facts about near inclusions of algebras used in our proofs.

\subsection{Acknowledgments}
We are grateful to Alexei Kitaev for sharing a preprint of Ref.~\cite{kitaev2024almost}.
MW acknowledges support by the European Union (ERC Grant SYMOPTIC, 101040907) by the Deutsche Forschungsgemeinschaft (DFG, German Research Foundation, 556164098), by the Deutsche Forschungsgemeinschaft (DFG, German Research Foundation) under Germany's Excellence Strategy~--~EXC-2111~--~390814868, and by the German Federal Ministry of Research, Technology and Space (QuSol, 13N17173).
DR acknowledges support by the Simons Foundation under grant 376205.
Part of this work was conducted while MW was visiting Q-FARM and the Leinweber Institute for Theoretical Physics at Stanford University and the Simons Institute for the Theory of Computing at UC Berkeley.

\section{QCAs and approximate QCAs}\label{sec:qcas}

\subsection{Definitions} \label{sec:QCA-definitions}
We introduce quantum cellular automata (QCAs) with some generality for the purpose of illustration, before focusing on one spatial dimension.
Let $\Omega$ denote a finite set of ``sites'' with a metric, perhaps a subset of a manifold with metric.
Most of our notions are easily adapted to an infinite number of sites, because we only work with constant-size subsets.
Assign a finite-dimensional simple $C^*$-algebra~$\cA_x$ to each site~$x \in \Omega$.
That is, $\cA_x$ is isomorphic to the algebra of operators on $\mathbb{C}^{d_x}$ for some $d_x \in \mathbb{N}$.
Then the global algebra of observables is $\cA_\Omega = \bigotimes_{x \in \Omega} \cA_x$ (we will sometimes abbreviate $\cA=\cA_\Omega$).
For a subset of sites $X \subset \Omega$, we let
\begin{equation*}
    \cA_X = \bigotimes_{x \in X} \cA_x \subset \cA_\Omega.
\end{equation*}
For $X \subset \Omega$, let $X^{+r} = \{ y  \in \Omega : \exists x \in X \text{ s.t.\ } \mathrm{dist}(x,y) \le r \}$ denote the neighborhood of radius~$r$.
For~$x \in \Omega$, we abbreviate $x^{+r} = \{x\}^{+r}$.
When~$r$ is fixed to a constant, we also denote $X^{+} = X^{+r}$.

Now we can define a notion of locality-preserving homomorphism.
Here, a \emph{homomorphism} always refers to a unital $*$-homomorphism of $C^*$-algebras.
Note that when the domain is simple, such as $\cA_X$, such a homomorphism is always an injective isometry.

\begin{definition}\label{def:local-homomorphism}
    For $S \subset \Omega$, an \emph{(exactly) locality-preserving homomorphism} with range~$r$ is a homomorphism $\alpha : \cA_S \to \cA_\Omega$ such that, for each $X \subset S$,
    \begin{align}\label{eq:exact-LP}
        \alpha(\cA_X) \subset \cA_{X^{+r}}.
    \end{align}
    When $S=\Omega$, then $\alpha$ is automatically an automorphism, and we call it a \emph{quantum cellular automaton (QCA)} or a \emph{locality-preserving automorphism}.
\end{definition}
\noindent
To obtain \cref{eq:exact-LP}, note it is sufficient to demand locality on single sites: $\alpha(\cA_x) \subset \cA_{x^{+r}}$ for all $x \in S$.
If $\alpha$ is a QCA, then $\alpha^{-1}$ is a QCA of the same range, shown by applying \cref{eq:exact-LP} to complements, applying $\alpha^{-1}$, and taking commutants.

For subalgebras $\cX, \cY \subset \cA$, we write $\cX \subset_\eps \cY$ to denote an approximate inclusion,
\begin{equation*}
    \cX \approxSub{\eps} \cY  \quad \Leftrightarrow \quad  \forall x \in \cX, \exists y \in \cY \text{ s.t. } \norm{x-y} \leq \eps \norm{x}.
\end{equation*}
We discuss approximate inclusions and further notation along these lines in \cref{sec:notation}.
Now we can introduce a straightforward generalization of \cref{def:local-homomorphism} to the approximate setting.

\begin{definition}\label{def:eps-local-homomorphism}
    For $S \subset \Omega$, an \emph{($\eps,r$)-locality-preserving homomorphism} is a homomorphism $\alpha : \cA_S \to \cA_\Omega$ such that, for each $x \in S$,
    \begin{align} \label{eq:approx-LP}
        \alpha(\cA_x) \subset_\eps \cA_{x^{+r}}
    \end{align}
    We omit $r$ when it is fixed to some constant.
    When $S = \Omega$, then $\alpha$ is automatically an automorphism, and we call it an $\eps$-quantum cellular automaton (\emph{$\epsilon$-QCA}) or an \emph{$\epsilon$-locality-preserving automorphism}.
\end{definition}
\noindent
When $\epsilon=0$, we recover \cref{def:local-homomorphism}.

We emphasize that \cref{def:eps-local-homomorphism} does not constrain the ``far tails'' of the homomorphism: we do \emph{not} demand $\alpha(\cA_x)$ is increasingly well-approximated by algebras on increasing subsets, only that $\alpha(\cA_x)$ is localized up to a fixed error~$\eps$ to the neighborhood $x^{+r}$ for fixed~$r$. This weak assumption on the tail will suit our present purposes.

 Meanwhile, \cref{def:eps-local-homomorphism} \textit{does} automatically constrain the locality of $\alpha(\cA_S)$ for larger subsets $S$, albeit in a weak way.
 In particular:
\begin{proposition}\label{prop:extend-locality}
    An $(\epsilon,r)$-locality-preserving homomorphism $\alpha : \cA_S \to \cA_\Omega$ satisfies, for each $X \subset S$,
    \begin{equation*}
        \alpha(\cA_{X}) \subset_\delta \cA_{X^{+r}}, \quad \text{where} \quad \delta = 4 \epsilon |X|.
    \end{equation*}
\end{proposition}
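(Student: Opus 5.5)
We will build the approximation of $\alpha(a)$, for $a \in \cA_X$, by composing single-site conditional expectations. For a site $x \in \Omega$ let $E_{x^c}$ denote the trace-preserving conditional expectation $\cA_\Omega \to \cA_{\Omega\setminus\{x\}}$, i.e.\ $E_{x^c} = \frac{1}{\dim \cA_x}\tr_x \ot \id$, or equivalently the Haar average $E_{x^c}(b) = \int u b u^* \, du$ over unitaries $u \in \cA_x$. Set $Y = \Omega \setminus X^{+r}$ and let $E = \prod_{y \in Y} E_{y^c}$, which is the conditional expectation onto $\cA_{X^{+r}}$; it is unital, completely positive, and contractive. We will show $\norm{\alpha(a) - E(\alpha(a))} \le 4\eps|X|\norm{a}$.

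First, the key commutator estimate. Fix $y \in Y$ and a unitary $u \in \cA_y$. For each $x \in X$, the approximate locality \eqref{eq:approx-LP} gives, for every $b \in \cA_x$, some $b' \in \cA_{x^{+r}}$ with $\norm{\alpha(b) - b'} \le \eps\norm{b}$. Since $y \notin x^{+r}$, the element $u$ commutes with $b'$, so $\norm{[u,\alpha(b)]} \le 2\eps\norm{b}$. Thus $\alpha(\cA_x)$ almost commutes with $\cA_y$, with constant $2\eps$.

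The main step, and the expected obstacle, is passing from single sites $x$ to the whole $\alpha(\cA_X)$ with an error linear in $|X|$ rather than depending on the complexity of $a$. A general $a \in \cA_X$ is a sum of products of single-site operators, and naively expanding it costs a dimension factor. I would instead use the structure directly. We have $\cA_X = \bigotimes_{x\in X}\cA_x$, and the unitary group of $\cA_X$ contains the product of the unitary groups $\mathcal U(\cA_x)$. For a fixed unitary $u$ outside, consider the map $\Phi_u(b) = u^* b u$. We aim to show that $\norm{u^*\alpha(a)u - \alpha(a)} \le 2\eps|X|\cdot 2 \norm{a}$ or similar. The natural route is to write $\alpha(a) = \mathbb E_{v}\,[\ldots]$. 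An alternative and likely cleaner route is a hybrid argument with unitaries: for each $x$, the near-commutation above says that the automorphism $\mathrm{Ad}_u$ acts on $\alpha(\cA_x)$ within $2\eps$ of the identity. Both $\mathrm{Ad}_u\circ\alpha$ and $\alpha$ are homomorphisms of $\cA_X$ that agree up to $2\eps$ on each tensor factor. Averaging over the Haar measure of $\mathcal U(\cA_y)$ makes this into a statement about $E_{y^c}\circ\alpha$ versus $\alpha$: on each $\alpha(\cA_x)$ the map $E_{y^c}$ moves elements by at most $2\eps\norm{b}$.

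To conclude linearly in $|X|$, I would telescope over the sites of $X$ via the multiplicative structure. Write $\alpha(\cA_X)$ as generated by the commuting subalgebras $\alpha(\cA_x)$, $x \in X$. Use the known fact, to be proved as a lemma, that if commuting subalgebras $\cB_1,\dots,\cB_n$ satisfy $\cB_i \subset_{\eps} \cC$ for a fixed subalgebra $\cC$ and a conditional expectation onto $\cC$ exists, then $\cB_1\vee\dots\vee\cB_n \subset_{4\eps n}\cC$. Here we take $\cC = \cA_{X^{+r}}$, since each $\cA_{x^{+r}} \subset \cA_{X^{+r}}$. I would prove this lemma by induction on $n$. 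For the inductive step, a unitary $w \in \cB_1\vee\cdots\vee\cB_n$ is handled via its commutator with unitaries of $\cC' = \cA_Y$: a commutator of a product obeys a Leibniz bound, and commutant-based near inclusion translates a $2\eps$-commutator bound back into an inclusion bound via the Haar average over $\mathcal U(\cA_Y)$. The delicate point is that a general element of the joined algebra is not a product of single-factor elements. I would handle this by working with the commutant characterization: $b \approx E(b)$ whenever $\norm{[b,u]}$ is small for all unitaries $u \in \cA_Y$, with $\norm{b - E(b)} \le \sup_u \norm{[b,u]}$. The remaining task is then to bound $\sup_u\norm{[\alpha(a),u]}$ by $4\eps|X|\norm a$, which I would attempt through a Haar-averaged telescoping over $x\in X$.
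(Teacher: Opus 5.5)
Your reduction is the paper's proof. The paper applies the simultaneous near-inclusion lemma (\cref{lemma:simultaneous-near-inclusions}, quoted from earlier work) to the pairwise commuting algebras $\alpha(\cA_x)\subset_\eps\cA_{x^{+r}}\subset\cA_{X^{+r}}$, $x\in X$, which generate $\alpha(\cA_X)$, and reads off $\delta=4\eps|X|$. If you simply cite that lemma, you are done. Your preliminary estimates are also correct: $\norm{[u,\alpha(b)]}\le 2\eps\norm{b}$ for $b\in\cA_x$ and unitary $u\in\cA_Y$, and $\norm{c-E(c)}\le\sup_u\norm{[c,u]}$.

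However, you set out to prove the lemma yourself, and your sketch stops at its only nontrivial step: bounding $\sup_u\norm{[\alpha(a),u]}$ for a \emph{general} $a\in\cA_X$ linearly in $|X|$. The tools you list before that do not achieve it on their own. $E_{y^c}$ is not multiplicative, and a Leibniz bound only controls commutators with products of single-site elements, not with a general unitary of $\alpha(\cA_X)$.

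The missing idea is to average the \emph{outside} unitary instead. Fix a unitary $u\in\cA_Y$ and set $w:=\int\alpha(v)\,u\,\alpha(v)^*\,dv$, with $v=\prod_{x\in X}v_x$ Haar-distributed on $\prod_{x\in X}\mathcal U(\cA_x)$.
\begin{itemize}
\item Invariance of the Haar measure in each factor gives $w\in\alpha(\cA_X)'$ exactly.
\item The Leibniz bound, now legitimately applied to the product $\alpha(v)=\prod_x\alpha(v_x)$, gives $\norm{u-w}\le\sup_v\norm{[u,\alpha(v)]}\le 2\eps|X|$.
\item Hence $\norm{[\alpha(a),u]}=\norm{[\alpha(a),u-w]}\le 4\eps|X|\norm{a}$ for every $a\in\cA_X$.
\end{itemize}
Your twirl estimate then yields $\alpha(\cA_X)\subset_{4\eps|X|}\cA_{X^{+r}}$ with the stated constant.

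In the language of the appendix, the averaging step proves $\cA_Y=\cA_{X^{+r}}'\subset_{2\eps|X|}\alpha(\cA_X)'$. Returning to commutants via \cref{lemma:near-inclusion-commutants} costs the second factor of $2$. That is the content of \cref{lemma:simultaneous-near-inclusions}.

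Your ``hybrid'' observation, that $\alpha$ and $\mathrm{Ad}_u\circ\alpha$ agree to within $2\eps$ on each $\cA_x$, would also close the gap through \cref{lemma:local-to-global-hom}. The price is a constant of up to $4\sqrt2$ instead of $4$.
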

\noindent
This follows from \cref{lemma:simultaneous-near-inclusions} applied to $\alpha(\cA_x) \subset_\eps \cA_{X^{+r}}$ for each $x \in X$.

We can define a distance on homomorphisms $\alpha,\beta : \cA_S \to \cA_\Omega$ by comparing them only on single-site operators:
\begin{equation}\label{eq:dist-local}
\distloc(\alpha,\beta)
:= \sup_{\substack{\norm{x}=1,\\ |\supp(x)|=1}}\;\|\alpha(x)-\beta(x)\|,
\end{equation}
where $|\supp(x)|=1$ denotes a single-site operator.
It may be useful to compare this to a metric discussed by Haah~\cite[App.~A]{haah2023invertible}, with $H$ indicating Haah,
\begin{equation*}
\distlocH(\alpha,\beta)
:= \sup_{\norm{x}=1}\;
\frac{\|\alpha(x)-\beta(x)\|}{|\mathrm{supp}(x)|}.
\end{equation*}
For homomorphisms, these metrics are comparable within a constant,
\begin{equation*}
\distloc(\alpha,\beta) \le \distlocH(\alpha,\beta)\le 2\sqrt{2}\;\distloc(\alpha,\beta).
\end{equation*}
The first inequality is immediate, and the second follows directly from \cref{lemma:local-to-global-hom}.

An $\eps$-locality-preserving homomorphism defined on $S$ need not be approximately surjective, even onto the interior of $S$.
For instance, it may leave a ``hole,'' sending all operators away from a given site.
We therefore define a notion of approximate local surjectivity:

\begin{definition}
    For $S \subset \Omega$, a homomorphism $\alpha : \cA_S \to \cA_\Omega$ is called \emph{$(\epsilon,r)$-locally-surjective} if, for each~$x \in S$ such that $x^{+r} \subset S$, it holds that
    \begin{equation*}
    \cA_x \subset_\eps \alpha(\cA_{x^{+r}}).
    \end{equation*}
    We omit $r$ when it is fixed to some constant.
    If $\eps=0$, this is called an \emph{(exactly) locally surjective} homomorphism.
\end{definition}
\noindent
This property also extends automatically to larger subsets $V$, at the cost of a volume factor $|V|$:
\begin{proposition} \label{prop:extend-loc-surjectivity}
An $(\epsilon,r)$-locally-surjective homomorphism $\alpha : \cA_S \to \cA_\Omega$ satisfies, for each $X \subset S$ such that~$X^{+r} \subset S$,
\begin{equation*}
    \cA_X \subset_\delta \alpha(\cA_{X^{+r}}), \quad \text{where} \quad \delta = 4 \epsilon |X|.
\end{equation*}
\end{proposition}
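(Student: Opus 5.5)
The plan mirrors the argument sketched for \cref{prop:extend-locality}: we reduce the statement to \cref{lemma:simultaneous-near-inclusions}. That lemma, as used there, says: if $\cB_1,\dots,\cB_n \subset \cA$ are pairwise commuting subalgebras that each generate a copy of a tensor factor, and each satisfies $\cB_i \subset_\eps \cC$ for a fixed subalgebra $\cC$, then the algebra they generate satisfies $\bigvee_i \cB_i \subset_{4\eps n} \cC$. Here we will apply it to the commuting single-site algebras $\cB_x = \cA_x$ for $x \in X$, which generate $\cA_X = \bigotimes_{x\in X}\cA_x$, and to the target algebra $\cC = \alpha(\cA_{X^{+r}})$.

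First, fix $X \subset S$ with $X^{+r}\subset S$. For each $x \in X$ we have $x^{+r} \subset X^{+r} \subset S$, so local surjectivity applies at $x$ and gives $\cA_x \subset_\eps \alpha(\cA_{x^{+r}})$. Next, since $\cA_{x^{+r}} \subset \cA_{X^{+r}}$ and $\alpha$ is a homomorphism, $\alpha(\cA_{x^{+r}}) \subset \alpha(\cA_{X^{+r}})$. Hence $\cA_x \subset_\eps \alpha(\cA_{X^{+r}})$ for every $x\in X$. This is the same family of near inclusions as in \cref{prop:extend-locality}, except that the roles are swapped: the small algebras are now the exact tensor factors $\cA_x$, and the large algebra is the image $\alpha(\cA_{X^{+r}})$. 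That image is a finite-dimensional $C^*$-subalgebra, in fact isomorphic to a full matrix algebra because $\alpha$ is injective on the simple algebra $\cA_{X^{+r}}$.

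Finally, apply \cref{lemma:simultaneous-near-inclusions} with $n = |X|$ to obtain $\cA_X \subset_{4\eps|X|} \alpha(\cA_{X^{+r}})$, which is the claim. The only possible obstacle is checking that the lemma's hypotheses are met in this orientation. Specifically, we need that the target algebra may be an arbitrary subalgebra rather than a local algebra $\cA_Y$, and that the sources are commuting, mutually independent tensor factors. The sources here are the exact factors $\cA_x$, which is the most favorable case. If the lemma as stated in \cref{app:near-inclusion-lemmas} instead required the target to be the local algebra and the sources to be arbitrary commuting subalgebras, we would transport the problem by the isomorphism $\alpha^{-1}$ on $\alpha(\cA_S)$. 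However, this does not directly convert a near inclusion into $\alpha(\cA_S)$ into one in $\cA_S$, since $\cA_x$ need not lie in $\alpha(\cA_S)$. So we expect instead to use the version that allows a general subalgebra target, for which the argument via a conditional expectation onto $\alpha(\cA_{X^{+r}})$ and a telescoping product over sites goes through.
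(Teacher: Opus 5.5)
Your proof is correct and matches the paper's argument. Local surjectivity at each $x\in X$ (which applies because $x^{+r}\subset X^{+r}\subset S$) gives $\cA_x \subset_\eps \alpha(\cA_{X^{+r}})$, and \cref{lemma:simultaneous-near-inclusions} applied to the pairwise commuting algebras $\cA_x$ then gives $\cA_X \subset_{4\eps|X|} \alpha(\cA_{X^{+r}})$. Your worry about the orientation of that lemma is unnecessary: as stated, it allows an arbitrary unital subalgebra $\cB$ as the target and only needs the sources to commute pairwise, so you can drop the fallback discussion at the end.
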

\noindent This follows from \cref{lemma:simultaneous-near-inclusions} applied to $\cA_x \subset_\eps \alpha(\cA_{V^{+r}})$ for each $x \in V$.

If $\alpha$ has a right inverse that is an $(\eps,r)$-locality-preserving homomorphism, then $\alpha$ is also $(\eps,r)$-locally-surjective.
Conversely, since $\alpha$ is injective and finite-dimensional, it admits a left inverse (which may or may not be a homomorphism); if $\alpha$ is $(\eps,r)$-locally-surjective, then any such left inverse is $(\eps,r)$-locality-preserving.

\subsection{Obstacles to naive rounding}
Given some notion of an approximate QCA $\alpha$, we are interested in whether it can be approximated by a strict QCA $\tilde \alpha$ with small $\distloc(\alpha,\tilde \alpha)$.
To appreciate the potential difficulty of this problem, even using this weak notion of distance, let us examine some possible strategies.
One possible strategy might proceed as follows: If
\[
\alpha(\cA_X) \subset_\eps \cA_{X^+},
\]
we can find some nearby $\tilde \alpha$ whose image on this region is strictly localized,
\[
\tilde \alpha(\cA_X) \subset \cA_{X^+}.
\]
To see this, note there exists $u$ such that $u \alpha(\cA_X) u^* \subset \cA_{X^+}$ and $\norm{u-I}=O(\eps)$, guaranteed to exist by \cref{thm:near-inclusion-unitary}; then we take $\tilde \alpha(x) = u \alpha(x) u^*.$
Moreover, one can take $u$ in the algebra generated by $\cA_{X^+}$ and $\alpha(\cA_X)$.
However, it is unclear how to continue.
If $\tilde \alpha$ were further modified by the same method to strictly localize the image of some overlapping region, say enforcing $\tilde \alpha(Y) \subset \cA_{Y^+}$ for $X \cap Y \neq \emptyset$,
then generally the previous condition $\tilde \alpha(\cA_X) \subset \cA_{X^+}$ would no longer hold exactly.

We sketch another method which fails, but not entirely: given an approximate QCA $\alpha$, we can approximate it with a unital CP map $F$ that is strictly local, i.e.\ $F(\cA_X) \subset \cA_{X^{+r}}$, but not a homomorphism.
To see this, first note that for the case of a strict QCA $\alpha$ of range $r$, the automorphism $\alpha \otimes \alpha^{-1}$ on the doubled system $\cA_1 \otimes \cA_2$ can be expressed as (conjugation by) a product of commuting unitaries, each with support of radius $r$ on the doubled system~\cite[Prop.~1]{gross2012index}.
For suitable $\Omega$, this can also be organized as a constant-depth circuit of unitaries with disjoint support.
These unitaries take the form
\begin{equation*}
    u_x = s_x (\alpha \otimes \id)(s_x)
\end{equation*}
where $s_x$ is the unitary that swaps the two copies of $\cA_x$. When $\alpha$ is $\eps$-locality-preserving, the unitaries~$u_x$ can be approximated by strictly local unitaries (\cref{prop:unitary-near-subalgebra}), composing a constant-depth circuit.
The circuit yields a strict QCA
\begin{equation*}
    \beta : \cA_1 \otimes \cA_2 \to \cA_1 \otimes \cA_2, \qquad \beta \approx \alpha \otimes \alpha^{-1}
\end{equation*}
(This approximation has errors independent of local Hilbert space dimension but depending on volume $|\Omega|$; the latter is removed if one uses stronger assumptions on the tails of $\alpha$.)
Let $\pi_1$ denote the Hilbert-Schmidt projection $\cA_1 \otimes \cA_2 \to \cA_1$, and let $\iota_1 : \cA_1 \to \cA_1 \otimes \cA_2$ denote the inclusion.
Then
\begin{equation*}
    F = \pi_1 \circ \beta \circ \iota_1 : \cA_1 \to \cA_1
\end{equation*}
is a unital CP map which approximates $\alpha$ but is strictly local.  While $F$ could be modified to an exact homomorphism (\cref{thm:make homomorphism exact}), it would no longer be strictly local.

In summary, it is easy to modify unitary circuits with approximately local gates to obtain strictly local gates.
 Hence it is easy to make approximate QCAs exactly local --- at the price of working with $\alpha \otimes \alpha^{-1}$ on the doubled system, or working with a unital CP map on the original system that is only an approximate homomorphism.

The strategy we ultimately use to approximate $\alpha$ by an exact QCA $\tilde \alpha$ is based on the classification of one-dimensional QCAs. This classification identifies ``left-moving'' and ``right-moving'' algebras. We will show that one can identify such structure from an approximate QCA as well, and use this to define a global QCA approximating $\alpha$. We describe this strategy in detail in \cref{sec:locally-rounding}.


\section{Approximate QCAs in one dimension: summary of results}\label{sec:summary of results}
We summarize our results on approximate QCAs in one dimension.
Here $\Omega$ is a finite interval or circle.
By an interval, we mean $\Omega = [a,b] := \{a,a+1,\ldots,b\}$ for $a,b \in \mathbb{Z}$, with the usual metric.
By a circle, we mean the quotient of an interval with endpoints identified.
Throughout this section, we set range~$r=1$, except when otherwise specified.
For instance, $x^+ := [x-1,x,x+1]$ when $x \in [a+1,b-1]$.
A larger range can be treated effectively by coarse-graining sites.

Let us begin with a statement for $\epsilon$-QCAs on the circle.
We show that these can always be approximated by some strict QCA, which is nearby in the local distance of \cref{eq:dist-local}:

\begin{theorem}\label{thm:circle-round}
Let $\alpha$ be an $\epsilon$-QCA on a finite circle~$\Omega$ with $|\Omega| \ge 8$.
Then there exists an exact QCA~$\tilde{\alpha}$ of range $3$ with $\distloc(\alpha,\tilde \alpha) = O(\eps).$
\end{theorem}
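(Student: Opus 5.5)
Our plan is to mimic the GNVW structure theory of exact 1D QCAs: we extract exact ``boundary algebras'' from $\alpha$ locally, and glue them into a strict QCA. For exact QCAs of range $1$, partition the circle into blocks of two consecutive sites, $B_j = \{2j, 2j+1\}$. One then has $\alpha(\cA_{B_j \cup B_{j+1}}) \supset \cL_j \ot \cR_j$, where
\begin{equation*}
\cR_j = \alpha(\cA_{B_j \cup B_{j+1}}) \cap \cA_{\{2j+1,2j+2\}}
\end{equation*}
and $\cL_j$ is defined analogously. Moreover, $\alpha(\cA_{B_{j}}) \subset \cR_{j-1}\ot\cL_{j}$ up to the appropriate identification, and the algebras $\cR_{j-1}\ot \cL_j$ generate $\cA_{B_j^{+}}$-local pieces that tile $\cA_\Omega$. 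In the approximate setting, we replace each intersection $\alpha(\cA_{B_j\cup B_{j+1}}) \cap \cA_{\{2j+1,2j+2\}}$ by the robust intersection of \cref{lemma:intersections}. To use it, we first verify that the conditional expectations (Hilbert--Schmidt projections) onto $\alpha(\cA_{X})$ and onto $\cA_{Y}$ approximately commute. Here $X$ is a patch of a constant number of sites, and $Y$ is the corresponding local window. We establish this from \cref{prop:extend-locality} and \cref{prop:extend-loc-surjectivity}, since the inverse $\alpha^{-1}$ is also an $O(\eps)$-QCA, and both errors are $O(\eps)$ because $|X|$ is constant. The output of \cref{lemma:intersections} is then an exact subalgebra $\cR_j \subset \cA_{\{2j+1,2j+2\}}$ with $\cR_j \approxSub{O(\eps)} \alpha(\cA_{B_j\cup B_{j+1}})$, and similarly an exact $\cL_j$.

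The second step proves the exact algebraic relations needed for gluing. Two facts are required. First, $\cR_{j-1}$ and $\cL_j$ commute and generate a subalgebra isomorphic to their tensor product. Second, the algebras $\cA_{B_j}' \cap \ldots$ fit together so that
\begin{equation*}
\cA_\Omega = \bigotimes_j (\cL_j \ot \cR_j).
\end{equation*}
Here $\cL_j$ and $\cR_j$ live on disjoint sites $2j+1$ and $2j+2$, and the requirement is that $\cL_j\ot\cR_j$ is a tensor factor of $\cA_{\{2j+1,2j+2\}}$ whose commutant is trivial. This is where exactness is subtle. Approximate statements such as $\cA_{\{2j+1,2j+2\}} \approxSub{O(\eps)} \cL_j\ot\cR_j$ upgrade to exact equalities of dimensions, because a near-inclusion of finite-dimensional $C^*$-algebras with small error forces an embedding (\cref{thm:near-inclusion-unitary}). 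Comparing dimensions then gives exact equality. The same dimension argument shows that $\cR_{j-1}\ot\cL_j$ is $O(\eps)$-close to $\alpha(\cA_{B_j})$ with equal dimension. I expect the main obstacle to be arranging that the robust intersections from neighboring patches are compatible. The construction is local, so $\cR_j$ built from one patch must be exactly the one used in the neighboring factorization. My plan is to define each $\cL_j,\cR_j$ once and only then derive all relations, at the cost of enlarging the patches; this is why range $3$ appears and why $|\Omega|\ge 8$ is needed, so that patches do not wrap around.

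Finally, we glue. By the previous step, $\alpha(\cA_{B_j})$ and $\cR_{j-1}\ot\cL_j$ are $O(\eps)$-close, isomorphic full matrix algebras. So \cref{thm:near-inclusion-unitary} gives a unitary $u_j$ with $\norm{u_j - I}=O(\eps)$ and $u_j\alpha(\cA_{B_j})u_j^* = \cR_{j-1}\ot\cL_j$. We define
\begin{equation*}
\tilde\alpha|_{\cA_{B_j}} = \mathrm{Ad}(u_j)\circ\alpha|_{\cA_{B_j}}.
\end{equation*}
The images of distinct blocks are commuting tensor factors that jointly generate $\cA_\Omega$, so these maps assemble into an automorphism $\tilde\alpha$. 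Its range is at most $3$, since $\cR_{j-1}\ot\cL_j \subset \cA_{\{2j-1,\dots,2j+2\}}$. Each single-site operator is moved by $O(\eps)$, which gives $\distloc(\alpha,\tilde\alpha)=O(\eps)$.
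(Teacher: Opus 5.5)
Your architecture is the paper's: boundary algebras from robust intersections, each block image conjugated onto the product of its two boundary pieces, then gluing. But the steps that carry the content are either mis-specified or asserted without a mechanism.

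\textbf{The intersection is wrong, and the commutation step is missing.} As written, $\alpha(\cA_{B_j\cup B_{j+1}})\cap\cA_{\{2j+1,2j+2\}}$ is, up to $O(\eps)$, all of $\cA_{\{2j+1,2j+2\}}$, because local surjectivity gives $\cA_{\{2j+1,2j+2\}}\subset_{O(\eps)}\alpha(\cA_{[2j,2j+3]})$. That is why the commutation hypothesis of \cref{lemma:intersections} seemed to follow directly from \cref{prop:extend-locality,prop:extend-loc-surjectivity}. The boundary algebras must come from the image of a single block, e.g.\ $\alpha(\cA_{B_j})$ against the window $\{2j+1,2j+2\}$ straddling its right edge. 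For that pair neither algebra is nearly contained in the other, and near-inclusions alone give no approximate commutation of the projections.

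The paper's key computation, relabeling $B_j=[1,2]$, runs as follows.
First, $F_R:=P_{\alpha(\cA_{[1,2]})}P_{[2,3]}=T_{\alpha(\cA_{[3,4]})}P_{\alpha(\cA_{[1,4]})}P_{[2,3]}\approx T_{\alpha(\cA_{[3,4]})}P_{[2,3]}$, using $\cA_{[2,3]}\subset_\eps\alpha(\cA_{[1,4]})$.
Second, $\alpha(\cA_{[1,2]})\subset_\eps\cA_{[0,3]}$ is applied to the first expression.
Third, $\alpha(\cA_{[3,4]})\subset_\eps\cA_{[2,5]}$ is applied to the last, after rotating to an exact $\cB\subset\cA_{[2,5]}$ so that $T_\cB$ commutes with $P_{[2,5]}$.
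Together these give $P_{[2,3]}F_R\approx F_R$, and a mirrored argument handles $P_{[2,3]}P_{\alpha(\cA_{[1,2]})}$. Nothing in your proposal plays this role.

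\textbf{Generation is assumed rather than proved.} \cref{lemma:intersections} says the output is nearly inside both algebras and captures elements nearly in both. It does not say that the two boundary pieces of $\alpha(\cA_{B_j})$ nearly generate it. Your dimension argument starts from exactly such a near-inclusion, $\cA_{\{2j+1,2j+2\}}\subset_{O(\eps)}\cL_j\ot\cR_j$. Counting can turn a near-inclusion into an equality, but it cannot supply one. The paper proves $\alpha(\cA_{[1,2]})\subset_{O(\eps)}\tilde\cL\ot\tilde\cR$ from conclusion (c), in the form $P_{\tilde\cR}\approx T_\cB P_{[2,3]}$: restrict to $\cA_{[2,5]}$, tensor with $\id_{\cA_{[-2,1]}}$ (this is why cb norms are tracked throughout), and compose with $\alpha|_{\cA_{[1,2]}}$, on which the twirl acts trivially. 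The left side is handled likewise.

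\textbf{Exact commutation within a window is not addressed.} The two algebras in a window do not, in general, live on disjoint single sites. Both sit in the two-site window and come from independent applications of \cref{lemma:intersections} to neighboring blocks. So they commute only up to $O(\eps)$, and defining each once does not make $\cL_j\ot\cR_j$ an algebra. An explicit correction is needed, as in the paper. Let $C_{i+1}$ be the window between blocks $i$ and $i+1$. \cref{lemma:near-inclusion-commutators}, applied inside $\cA_{C_{i+1}}$, gives $\tilde\cR_i\subset_{O(\eps)}\tilde\cL_{i+1}'\cap\cA_{C_{i+1}}$. Conjugating by a unitary $u_i\in\cA_{C_{i+1}}$ with $\norm{u_i-I}=O(\eps)$, from \cref{thm:near-inclusion-unitary}, then makes them commute exactly without disturbing the locality of the other pieces.

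\textbf{Odd circles and the range.} Blocks $\{2j,2j+1\}$ do not tile an odd circle. The paper merges two adjacent sites, and that, not patch enlargement, is where range $3$ comes from. For even $|\Omega|$ your own last step already gives range $2$.
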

\noindent
We use $O(\eps)$ to indicate to an upper bound by~$c \epsilon$ for a universal constant~$c$.  We expect $|\Omega| \ge 8$ is an artifact of the proof and plan to address $|\Omega| \ge 4$ in the future.

We can also formulate a statement for $\epsilon$-locality-preserving homomorphisms on an interval.
These can be likewise approximated by strict locality-preserving homomorphisms:

\begin{theorem}\label{thm:round-homomorphism}
    Let $\Omega$ be an interval with $[a,b] \subset \Omega$ and $b-a \geq 8$.  Let
    $\alpha : \cA_{[a,b]} \to \cA_\Omega$ be $\epsilon$-locality-preserving and $\eps$-locally surjective homomorphism.
    Then there exists an exactly locality-preserving and exactly locally surjective homomorphism~$\tilde{\alpha} : \cA_{[a+2,b-3]} \to \cA_\Omega$ of range $2$, with $\distloc(\alpha|_{\cA_{[a+2,b-3]}},\tilde \alpha) = O(\eps).$
\end{theorem}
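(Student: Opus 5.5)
We plan to follow the structure of exact 1D QCAs, where the image of a block decomposes into ``left-moving'' and ``right-moving'' boundary algebras. The paper indicates the main rounding argument is carried out by a local rounding theorem (\cref{thm:local-rounding}), with the robust intersection (\cref{lemma:intersections}) as the key tool. We would first coarse-grain $[a,b]$ into blocks of two sites, say $B_j$, so that $\alpha(\cA_{B_j})$ is $O(\eps)$-included in $\cA_{B_j^{+}}$ by \cref{prop:extend-locality}. Local surjectivity then gives the reverse inclusion: $\cA_{B_j}$ is $O(\eps)$-included in $\alpha(\cA_{B_j^{+}})$ by \cref{prop:extend-loc-surjectivity}.

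\textbf{Step 1: approximate boundary algebras.} At each cut between consecutive blocks $B_j,B_{j+1}$ we consider the two subalgebras
\[
\cX_j = \alpha(\cA_{B_j \cup B_{j+1}}) \qquad\text{and}\qquad \cY_j = \cA_{\text{right half of } B_j^{+} \,\cup\, \text{left half of } B_{j+1}^{+}}
\]
(in the exact case, their intersection is the boundary algebra that the GNVW construction identifies). Using the near inclusions above, together with the commutation relations inherited from $\alpha$ being a homomorphism and from tensor locality, we would show that the Hilbert--Schmidt conditional expectations onto $\cX_j$ and $\cY_j$ approximately commute, with error $O(\eps)$. We would then apply \cref{lemma:intersections} to obtain an exact subalgebra $\cC_j$ that is $O(\eps)$-close to both $\cX_j$ and $\cY_j$ in the sense of two-sided near inclusion. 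Similarly, we would extract the algebras $\cL_j,\cR_j$ (the left- and right-moving parts of $\alpha(\cA_{B_j})$) as robust intersections of $\alpha(\cA_{B_j})$ with the relevant half-neighborhoods.

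\textbf{Step 2: exactness and gluing.} We need the $\cC_j$ to be exactly supported in $\cA_{[x-2,x+2]}$ around the cut and to mutually commute, with neighbouring ones generating the right local algebras. Support can be enforced by conjugating with a unitary $O(\eps)$-close to the identity (\cref{thm:near-inclusion-unitary}), taking the unitary inside the algebra generated by the relevant local algebras so that distant $\cC_j$ are unaffected. Commutation of adjacent $\cC_j,\cC_{j+1}$ then comes from placing them on disjoint supports after this rounding; we would try to arrange this by rounding the even cuts first and then the odd cuts within commutants.

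Next we need a dimension count: $\alpha(\cA_{B_j})$ should be exactly isomorphic to a subalgebra of $\cC_{j-1}\otimes\cC_j$ of equal dimension. Dimensions are integers, so $O(\eps)$-closeness of algebras forces exact equality once $\eps$ is below a universal constant; for larger $\eps$ the statement is trivial. We then define $\tilde\alpha$ on each $\cA_{B_j}$ as $\operatorname{Ad}(u_j)\circ\alpha$, where $u_j$ is a near-identity unitary carrying $\alpha(\cA_{B_j})$ exactly onto the rounded algebra generated by the right part of $\cC_{j-1}$ and the left part of $\cC_j$. Because these target algebras commute and have supports of range $2$, the pieces extend to a homomorphism on $\cA_{[a+2,b-3]}$. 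The shrinkage of the domain absorbs the boundary blocks, where no cut algebra exists on one side. Exact local surjectivity would follow from the $\cC_j$ tiling the interior, again by the dimension count, and single-site distances stay $O(\eps)$ since each $u_j$ is $O(\eps)$-close to the identity.

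\textbf{Main obstacle.} The hardest step is Step 1's verification that the projections onto $\cX_j$ and $\cY_j$ approximately commute with error independent of local dimension, and that the resulting $\cC_j$ combine consistently. These are exactly the places where Kitaev's rigidity theorem enters through \cref{lemma:intersections}. We would attempt it by writing both projections as averages of conjugations over unitary groups and using $\alpha$'s homomorphism property to move between them.
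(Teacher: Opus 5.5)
\textbf{There is a genuine gap: the proposal never shows that the two boundary pieces exhaust $\alpha(\cA_{B_j})$, and the dimension count you offer in its place is circular.}

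Your outline has the right shape: boundary algebras from \cref{lemma:intersections}, near-identity unitaries, and block-by-block gluing. The construction you mention in passing is exactly the paper's (via \cref{thm:local-rounding}): the left/right pieces as robust intersections of $\alpha(\cA_{B_j})$ with the two halves of $B_j^+$.

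The cut algebras $\cC_j$ of Step 1, however, carry no information. With $B_j=[2j,2j+1]$, your $\cY_j$ is the pair $C_{j+1}=[2j+1,2j+2]$ straddling the cut. Local surjectivity already gives $\cY_j\subset_{O(\eps)}\cX_j$. So the projections commute for trivial reasons, and $\cC_j$ is just a rounded copy of $\cA_{C_{j+1}}$; already for $\eps=0$, $\cX_j\cap\cY_j=\cY_j$. Also, \cref{lemma:intersections} gives only the one-sided inclusions $\cC\subset_{O(\eps)}\cX_j$ and $\cC\subset_{O(\eps)}\cY_j$, not two-sided closeness.

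Everything therefore rests on $\tilde\cL_j\subset\cA_{C_j}$ and $\tilde\cR_j\subset\cA_{C_{j+1}}$, the robust intersections of $\alpha(\cA_{B_j})$ with $\cA_{C_j}$ and $\cA_{C_{j+1}}$. The lemma yields only $\tilde\cL_j\ot\tilde\cR_j\subset_{O(\eps)}\alpha(\cA_{B_j})$. Your Step 2 needs the converse, $\alpha(\cA_{B_j})\subset_{O(\eps)}\tilde\cL_j\ot\tilde\cR_j$. Integrality of dimensions forces equality only once two-sided $O(\eps)$-closeness is known. The one-sided inclusion gives merely $\dim(\tilde\cL_j\ot\tilde\cR_j)\le\dim\cA_{B_j}$. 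Property (b) of the lemma does not help either, since a generic element of $\alpha(\cA_{B_j})$ is close to neither $\cA_{C_j}$ nor $\cA_{C_{j+1}}$.

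The paper proves the converse by a separate argument. It rests on an identity that also settles your ``main obstacle'', which is real for $\alpha(\cA_{B_j})$ versus $\cA_{C_{j+1}}$, not for $\cX_j,\cY_j$. Relabel so that $B_j=[1,2]$, $C_j=[0,1]$, $C_{j+1}=[2,3]$, and write $\approx$ for $O(\eps)$ in cb norm. Inside $\alpha(\cA_{[1,4]})$, projecting onto $\alpha(\cA_{[1,2]})$ is twirling over $\alpha(\cA_{[3,4]})$. Hence $P_{\alpha(\cA_{[1,2]})}P_{[2,3]}=T_{\alpha(\cA_{[3,4]})}P_{\alpha(\cA_{[1,4]})}P_{[2,3]}\approx T_{\alpha(\cA_{[3,4]})}P_{[2,3]}$. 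Support considerations then give the approximate commutation by elementary means; Kitaev's theorem is used only inside the lemma.

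The same formula gives $P_{\tilde\cR}\approx T_\cB P_{[2,3]}$, where $\cB\subset\cA_{[2,5]}$ is a rounded copy of $\alpha(\cA_{[3,4]})$. Restrict to $\cA_{[2,5]}$ and tensor with $\id_{\cA_{[-2,1]}}$; this step is why cb norms are tracked. Since the twirl is over the commutant of $\alpha(\cA_{[1,2]})$, this yields $E_R\,\alpha|_{\cA_{[1,2]}}\approx\alpha|_{\cA_{[1,2]}}$. The same holds for $E_L$, and $E_LE_R=P_{\tilde\cL\ot\tilde\cR}$.

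Your gluing step also misplaces the commutation problem. What must commute are $\tilde\cR_j$ and $\tilde\cL_{j+1}$, which both live on the same pair $C_{j+1}$, so no ordering of roundings gives them disjoint supports. The paper uses that the preimages of these pieces commute exactly in $\cA$, so their images under $\alpha$ commute. Hence $\tilde\cR_j\subset_{O(\eps)}\tilde\cL_{j+1}'\cap\cA_{C_{j+1}}$ by \cref{lemma:near-inclusion-commutators}. It then conjugates block $j$ by a near-identity unitary lying in $\cA_{C_{j+1}}$, which leaves the left piece in $\cA_{C_j}$ untouched.

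Exact local surjectivity then follows roughly as you suggest. $\cA_{C_i}$ is $O(\eps)$-included in, and contains, the algebra generated by the two rounded boundary pieces on $C_i$, so the two are equal.
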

\noindent
These two results are proven at the end of Section~\ref{sec:locally-rounding}.

Let us recall the GNVW index.
While originally defined for exact QCAs on an infinite line~\cite{gross2012index}, the GNVW index can be defined for any homomorphism that is exactly locality-preserving and locally surjective, and moreover it is robust:
for two such $\alpha,\beta$, we have $\mathrm{Ind}(\alpha) = \mathrm{Ind}(\beta)$ whenever $\mathrm{dist}_\mathrm{loc}(\alpha,\beta) < \eps_0$, for some universal constant~$\eps_0$.
This is essentially the content of Remark~4.11 in Ref.~\cite{ranard2022converse}.
Moreover, the index obtained by first restricting to some sub-interval is independent of the choice of sub-interval.
We extend this index to the approximate case:

\begin{theorem}\label{thm:index}
There exists an assignment $\alpha \mapsto \mathrm{Ind}(\alpha)$ for $\epsilon$-locality-preserving and $\eps$-locally surjective homomorphisms~$\alpha$ as in \cref{thm:round-homomorphism}, with $\eps \leq \eps_0$ for a universal constant~$\eps_0$, such that the following holds:
(i) $\mathrm{Ind}(\alpha) = \mathrm{Ind}(\beta)$ when $\distloc(\alpha,\beta) \leq \eps_0$, and (ii) for $\eps=0$, $\mathrm{Ind}(\alpha)$ equals the GNVW index.
\end{theorem}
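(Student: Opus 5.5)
The plan is to define the index by rounding. Given $\alpha$ as in \cref{thm:round-homomorphism} with $\eps \le \eps_0$, apply \cref{thm:round-homomorphism} to obtain an exactly locality-preserving and exactly locally surjective homomorphism $\tilde\alpha : \cA_{[a+2,b-3]} \to \cA_\Omega$ of range $2$. It satisfies $\distloc(\alpha|_{\cA_{[a+2,b-3]}},\tilde\alpha) \le c\eps$ for a universal constant $c$. We then set $\mathrm{Ind}(\alpha) := \mathrm{Ind}(\tilde\alpha)$, the GNVW index of the exact object. This index is defined for exactly locality-preserving, locally surjective homomorphisms as recalled above; after coarse-graining pairs of sites, range $2$ becomes range $1$.

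First we check that the definition is independent of the choice of rounding. Suppose $\tilde\alpha_1,\tilde\alpha_2$ are two exact roundings of $\alpha$. By the triangle inequality for $\distloc$ on the common domain,
\begin{equation*}
\distloc(\tilde\alpha_1,\tilde\alpha_2) \le 2c\eps.
\end{equation*}
If $\eps_0$ is chosen so that $2c\eps_0$ is below the robustness threshold for the exact GNVW index (Remark~4.11 of Ref.~\cite{ranard2022converse}), the two indices agree. We also need that passing to a sub-interval, or coarse-graining, does not change the exact index; the text above states this as known for exact objects.

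For (i), let $\alpha,\beta$ be two such homomorphisms with $\distloc(\alpha,\beta)\le\eps_0$, and let $\tilde\alpha,\tilde\beta$ be their roundings. Then
\begin{equation*}
\distloc(\tilde\alpha,\tilde\beta) \le (2c+1)\eps_0.
\end{equation*}
Shrinking $\eps_0$ so that $(2c+1)\eps_0$ is below the exact robustness threshold gives $\mathrm{Ind}(\tilde\alpha)=\mathrm{Ind}(\tilde\beta)$. One subtlety is that the exact robustness statement must apply to exact homomorphisms of possibly different ranges, here $2$ after rounding. We handle this by coarse-graining both to a common range before invoking it, which only changes universal constants. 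If $\alpha,\beta$ are defined on slightly different intervals, we restrict both to a common sub-interval, using the restriction-invariance of the exact index.

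For (ii), suppose $\eps=0$, so $\alpha$ is already exact. Its rounding $\tilde\alpha$ satisfies $\distloc(\alpha|,\tilde\alpha)\le c\cdot 0 = 0$, so $\tilde\alpha$ equals the restriction of $\alpha$. Even without using this equality, robustness gives equal indices. Restriction-invariance then identifies $\mathrm{Ind}(\tilde\alpha)$ with the GNVW index of $\alpha$.

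The main obstacle is making the exact robustness and restriction statements precise in this finite-interval, mixed-range setting, with universal thresholds. I would first try to reduce everything to the exact statement by coarse-graining to range $1$. I would then verify that the robustness argument (comparison of support algebras via near-inclusions, \cref{app:near-inclusion-lemmas}) only uses finitely many sites near a cut, so that it holds on any interval of length at least a fixed constant.
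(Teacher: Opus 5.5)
Your proposal is correct and matches the paper's argument. The paper likewise defines $\mathrm{Ind}(\alpha)$ as the GNVW index of the exact rounding $\tilde\alpha$ supplied by \cref{thm:round-homomorphism}, and deduces (i) and (ii) from the robustness and restriction-invariance of the exact index; your triangle-inequality bookkeeping (the $2c\eps_0$ bound for well-definedness and the $(2c+1)\eps_0$ bound below the exact robustness threshold) spells out what the paper leaves implicit.
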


\noindent
This theorem is a straightforward corollary of \cref{thm:round-homomorphism}:
we simply define  $\mathrm{Ind}(\alpha):= \mathrm{Ind}_{\mathrm{GNVW}}(\tilde \alpha)$ by the GNVW index of the associated exact $\tilde \alpha$ supplied by \cref{thm:round-homomorphism}.
The result then follows directly from the robustness mentioned above.
Likewise, the index is invariant when restricting a homomorphism on a larger interval to a smaller interval.

We are not immediately supplied with an efficient algorithm for calculating $\mathrm{Ind}(\alpha)$ in the approximate case, because \cref{thm:round-homomorphism} relies on Ref.~\cite{kitaev2024almost},
whose proof is not constructive in the algorithmic sense.
However, an algorithm for constructing exact algebras from approximate algebras will be presented in future work~\cite{ranard26forthcoming}, with runtime polynomial in the dimension of the relevant algebras.
This will supply an algorithm for constructing the exact algebras in Ref.~\cite{kitaev2024almost}, and thereby for constructing the strictly locality-preserving homomorphism of \cref{thm:round-homomorphism}, and finally the index of \cref{thm:index}.

While \cref{thm:index} establishes an invariant, one might like to know it is a \textit{complete} invariant.
That is, we might want a classification of the form: two approximate QCAs can be connected by a path iff they have the same index.
With our weak notion of $\eps$-locality-preserving automorphisms, it may be difficult to formulate such a theorem using continuous paths.
However, it is straightforward to show the following, replacing continuous paths by finite sequences.
We formulate it on the circle, but the analogous statement holds for homomorphisms from an interval.
\begin{theorem}\label{thm:discrete path}
    There exist universal constants $\eps_0 > \eps_1 > 0$ such that the following holds for any two $\eps_1$-QCAs $\alpha, \beta : \cA_\Omega \to \cA_\Omega$ on a finite circle~$\Omega$ with at least $6$ sites:
    $\mathrm{Ind}(\alpha)=\mathrm{Ind}(\beta)$ iff there exists a finite sequence of $\eps_0$-QCAs $\alpha_0, \ldots, \alpha_n$ with $\alpha_0=\alpha$, $\alpha_n=\beta$, and $\mathrm{dist}_\mathrm{loc}(\alpha_i,\alpha_{i+1}) \leq \eps_0$ for $i \in [0,n-1]$.
\end{theorem}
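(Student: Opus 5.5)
Both directions go through the rounding theorem, \cref{thm:circle-round}, together with the classification of exact QCAs on the circle. For an $\eps_1$-QCA $\alpha$ we define $\mathrm{Ind}(\alpha)$ as the GNVW index of the exact QCA $\tilde\alpha$ (range $3$) from \cref{thm:circle-round}, where $\distloc(\alpha,\tilde\alpha)\le c\eps_1$. On small circles we coarse-grain sites so that $\tilde\alpha$ has range $\le 1$ relative to blocks. Choosing $\eps_1$ small keeps the index well defined by the robustness statement quoted before \cref{thm:index}. The constants are fixed in the order: first $\eps_0$, then $\eps_1\ll\eps_0$.

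\emph{Easy direction ($\Leftarrow$).} Suppose the chain $\alpha_0,\dots,\alpha_n$ exists. Round each $\alpha_i$ to an exact $\tilde\alpha_i$ with $\distloc(\alpha_i,\tilde\alpha_i)\le c\eps_0$. The triangle inequality then gives
\begin{equation*}
\distloc(\tilde\alpha_i,\tilde\alpha_{i+1})\le (2c+1)\eps_0 .
\end{equation*}
If $\eps_0$ is small compared to the universal robustness constant, consecutive indices agree, and so do the endpoints, where rounding $\alpha,\beta$ with the finer error $c\eps_1$ is compatible with this. There is a tension: $\eps_0$ must be small enough for robustness, while the $\eps_0$-QCAs in the chain must still be roundable. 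Both requirements are satisfied simply by taking $\eps_0$ below the threshold where \cref{thm:circle-round} and the robustness apply.

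\emph{Hard direction ($\Rightarrow$).} Suppose $\mathrm{Ind}(\alpha)=\mathrm{Ind}(\beta)$. First step from $\alpha$ to $\tilde\alpha$ in one move: $\distloc(\alpha,\tilde\alpha)\le c\eps_1\le\eps_0$, and $\tilde\alpha$ is an exact QCA of range $3$. To make it an $\eps_0$-QCA we coarse-grain to range $1$, or we simply require that the chain use the coarse-grained range; with fewer sites this is where the hypothesis of at least $6$ sites enters. Do the same for $\beta$. It remains to connect two exact QCAs $\tilde\alpha,\tilde\beta$ of equal GNVW index on the circle by a discrete chain of exact QCAs with small $\distloc$ steps.

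For this, use the structure theory: $\tilde\beta^{-1}\tilde\alpha$ is an exact QCA of index $1$. On a circle, after blocking, it factors as a shift-free product of two layers of block unitaries $U=\prod_j u_j\prod_k v_k$, with possibly an additional "twist" that we must handle; this is the step I expect to be the main obstacle, since the circle version of GNVW requires care with the wrap-around. Given the factorization, connect each block unitary to the identity via $u_j^{t}=\exp(tH_j)$ and discretize $t$ in steps of size $\eps_0$. Each intermediate automorphism is then an exact bounded-range circuit composed with $\tilde\beta$, hence an exact QCA, and consecutive steps differ in $\distloc$ by $O(\eps_0)$ per site. The number of steps $n$ is finite, and it may depend on the local dimensions, which is allowed. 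Concatenating the three chains gives the required sequence.
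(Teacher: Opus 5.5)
Your architecture matches the paper's. You round with \cref{thm:circle-round}, define the index through the rounding, prove ($\Leftarrow$) by rounding every $\alpha_i$ and invoking robustness, and for ($\Rightarrow$) join the two exact roundings by a path of exact QCAs and discretize. The ($\Leftarrow$) part is fine. The gap is the joining step in ($\Rightarrow$): you leave it open, and the route you sketch loses the one thing the statement needs, namely locality of the intermediate maps.

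\begin{itemize}
\item Factoring $\tilde\beta^{-1}\tilde\alpha$ composes ranges, giving range up to $6$. On an $8$-site circle any range $\ge 4$ is no constraint at all, so no blocking yields a range-$1$ map on enough blocks for a two-layer decomposition.
\item Even on large circles, your intermediate maps (block circuits composed with $\tilde\beta$) have range far above $1$. So they are not $\eps_0$-QCAs, and your own ($\Leftarrow$) argument, which rounds each $\alpha_i$ with \cref{thm:circle-round}, cannot accept them.
\item This is not cosmetic. Without a uniform locality bound on the intermediate steps the equivalence is false: on a finite circle the shift is inner, $\sigma=\mathrm{Ad}(S)$, and $t\mapsto\mathrm{Ad}(e^{itH})$ with $e^{iH}=S$ joins it to the identity in arbitrarily small $\distloc$ steps.
\item Coarse-graining, or letting the chain ``use the coarse-grained range'', changes the statement instead of proving it.
\item At the endpoints no such device is needed. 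From $\distloc(\alpha,\tilde\alpha)\le c\eps_1$ and $\alpha(\cA_x)\subset_{\eps_1}\cA_{x^{+1}}$ you get $\tilde\alpha(\cA_x)\subset_{(1+c)\eps_1}\cA_{x^{+1}}$. So $\tilde\alpha$ is already an $\eps_0$-QCA once $\eps_1\le\eps_0/(1+c)$.
\end{itemize}

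The paper closes the joining step by citing GNVW (Theorem~9 there: exact QCAs of equal index are connected by a continuous path of exact QCAs). It also remarks that the intermediate maps can be taken to be exact QCAs of range~$2$. You can obtain this directly, with no composite and no wrap-around issue, from the standard form the rounding already produces.

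Use a common pairing $B_j$, $C_j$ as in the proof of \cref{thm:round-homomorphism}. Write $\cA_{B_j}=\cL_j^\alpha\otimes\cR_j^\alpha$ with $\tilde\alpha(\cL_j^\alpha)\subset\cA_{C_j}$, $\tilde\alpha(\cR_j^\alpha)\subset\cA_{C_{j+1}}$ and $\langle\tilde\alpha(\cR_{j-1}^\alpha),\tilde\alpha(\cL_j^\alpha)\rangle=\cA_{C_j}$, and likewise for $\tilde\beta$. The index is the $j$-independent ratio of the matrix size of $\tilde\alpha(\cR_j^\alpha)$ to that of the left site of $C_{j+1}$. So equal index forces $\dim\cL_j^\alpha=\dim\cL_j^\beta$ and $\dim\cR_j^\alpha=\dim\cR_j^\beta$ for every $j$. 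Hence there are unitaries $V_j\in\cA_{B_j}$ with $\mathrm{Ad}(V_j)(\cL_j^\alpha)=\cL_j^\beta$ and $\mathrm{Ad}(V_j)(\cR_j^\alpha)=\cR_j^\beta$; set $V=\prod_j V_j$.

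Then $\tilde\alpha\circ(\tilde\beta\circ\mathrm{Ad}(V))^{-1}$ maps each $\cA_{C_j}$ onto itself. It therefore equals $\mathrm{Ad}(W)$ with $W=\prod_j W_j$, $W_j\in\cA_{C_j}$, that is, $\tilde\alpha=\mathrm{Ad}(W)\circ\tilde\beta\circ\mathrm{Ad}(V)$. Interpolate $V_j=e^{iH_j}$ and $W_j=e^{iK_j}$ with $\norm{H_j},\norm{K_j}\le\pi$. This gives exact QCAs mapping $\cA_{B_j}$ into $\cA_{C_j\cup C_{j+1}}$ for every $t$, with $\distloc$ increments $O(|t-s|)$ independent of the local dimensions and of $|\Omega|$.

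This path has range $2$ rather than $1$, as in the paper's remark. So fix one range convention for $\eps_0$-QCAs and use it in both directions.
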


\noindent
This follows from \cref{thm:circle-round}, \cref{thm:index}, and the fact that any two exact QCAs with the same GNVW index are connected by a continuous path of exact QCAs (Ref.~\cite{gross2012index}, Theorem 9).
Instead of taking $\eps_0$-QCAs, we can also take $\alpha_1,\dots,\alpha_{n-1}$ to be exact QCAs of range $2$.

\section{Notation and technical preliminaries}\label{sec:notation}
Throughout the text, all algebras are finite-dimensional unital $C^*$-algebras, and all subalgebras are unital.
In particular, every algebra is $*$-isomorphic to a direct sum of matrix algebras.

We first specify our notation for approximate equality and approximate inclusions.
Let $V$ be a normed vector space.
Let $x,y \in V$, and $\eps > 0$. We write
\begin{align*}
    x \approxEq{\eps} y \quad \Leftrightarrow \quad  \norm{x - y} \leq \eps.
\end{align*}
If $S,T \subset V$ are subspaces, we write
\begin{align*}
    S \approxSub{\eps} T \quad \Leftrightarrow \quad  \forall x \in S, \exists y \in T \text{ s.t. } \norm{x-y} \leq \eps \norm{x},
\end{align*}
and we write $S \approxEq\eps T$ if $S \approxSub\eps T$ and $T \approxSub\eps S$.
Similarly, we write $x \approxIn{\eps} S$ if there exists $y \in T$ such that $\norm{x-y} \leq \eps\norm{x}$, i.e. if $\mathrm{span}(x) \approxSub{\eps} S$.
Note that the condition for $x \approxEq{\eps} y$ does not scale the error by the norm of $x$ or $y$, in contrast to the approximate conditions involving linear subspaces.
The triangle inequality implies that in our notation, for $x,y,z \in V$ or $S,T,U \subset V$,
\begin{align*}
    x \approxEq{\delta} y \approxEq{\eps} z \quad &\Rightarrow \quad x \approxEq{\delta + \eps} z \\
    S \approxSub{\delta} T \approxSub{\eps} U  \quad &\Rightarrow \quad S \approxSub{\delta + \eps + \eps \delta} U.
\end{align*}
For the second line, note that for $x \in S$, there exists $y \in T$ such that $\norm{x-y} \leq \delta\norm{x}$, and hence $\norm{y} \leq (1 + \delta)\norm{x}$.
Various properties of approximate inclusions appear in \cref{app:near-inclusion-lemmas}.

Let $\cB \subset \cA$ be a subalgebra.
We let $\cB'$ denote the commutant of $\cB$ in $\cA$.
(We write $\cB' \cap \cA$ if we want to be explicit about $\cA$.)
Then $\cB'' = \cB$ by the double commutant theorem.
We now introduce some relevant maps on algebras.
Let
\begin{align}\label{eq:P-notation}
    P_\cB : \cA \to \cA, \qquad P_\cB(\cA) = \cB
\end{align}
denote the \emph{Hilbert-Schmidt projection} onto~$\cB$, that is, the orthogonal projection with respect to the Hilbert-Schmidt inner product, and let
\begin{align} \label{eq:T-notation}
 T_{\cB} : \cA \to \cA, \qquad T_{\cB}(x) = \int_{U(\cB)} u x u^* \, \mathrm{d}u
\end{align}
denote the ``twirling'' map, using the Haar probability measure on the group of unitaries in~$\cB$.
These are completely positive and unital (CPU).
We have
\begin{align}
     T_{\cB} = P_{\cB' \cap \cA}
\quad\text{and}\quad
     P_{\cB} = T_{\cB' \cap \cA},
 \end{align}
we use both notations for varying emphasis.
For factors $\cB, \cC$ and $\cA = \cB \ot \cC$,
 \begin{align*}
     P_\cB = d_\cC^{-1} \tr_\cC, \qquad  d_\cB=\sqrt{\dim(\cB)}.
 \end{align*}
In the context of QCAs, when working with an ambient tensor product algebra $\cA = \cA_{\Omega}$, with subalgebra $\cA_S \subset \cA$ for $S \subset \Omega$, we abbreviate $P_{\cA_S} = P_S$.

If $f : \cA \to \cB$ is an arbitrary linear map between algebras, we write $\norm f = \norm f_{\infty\to\infty}$, and denote its stabilization (the completely bounded norm) by
\begin{align*}
    \norm f_{\cb} := \sup_{\cC} \norm{f \ot \id_{\cC}},
\end{align*}
where $\id_{\cC}: \cC \to \cC$ is the identity map on $\cC$, and the supremum is over arbitrary algebras.
If $f$ is CPU, $\norm{f} = \norm{f}_{\cb} = 1$.

\section{Locally rounding QCAs}\label{sec:locally-rounding}
Given an approximate QCA, we refer to the process of finding a nearby exact QCA as ``rounding'' the QCA.
In this section, we study how to round an approximate 1D QCA in a local fashion.
That is, we modify the approximate QCA on a small interval to enforce exact locality on that region.
The input data will be a homomorphism mapping a small interval to a slightly larger interval that is approximately locality-preserving (in the sense of \cref{def:eps-local-homomorphism}).
For instance, it may be obtained from an approximate QCA on a larger one-dimensional system, with domain restricted to an interval.
We then use this local rounding result to establish the results announced in \cref{sec:summary of results}.
As explained earlier, we can set the range $r=1$ without loss of generality.

For our case of approximate QCAs in one dimension, our analysis takes advantage of the structure of exact 1D QCAs~\cite{gross2012index}. For an exact 1D QCA of range 1, one can define ``boundary algebras''~\cite{freedman2020classification},
\begin{align}\label{eq:strict-boundary-algebra}
    \tilde{\cR} = \alpha(\cA_{[1,2]}) \cap \cA_{[2,3]},
    \qquad
    \tilde{\cL} = \alpha(\cA_{[1,2]}) \cap \cA_{[0,1]},
\end{align}
with the property that
\begin{align*}
    \alpha(\cA_{[1,2]}) = \tilde{\cL} \otimes \tilde{\cR},
    \qquad
    \cA_{[1,2]} = \cL \otimes \cR,
\end{align*}
where $\cL = \alpha^{-1}(\tilde \cL)$ and $\cR = \alpha^{-1}(\tilde \cR)$.
See \cite[Thm.~4.1]{ranard2022converse} for one direct proof.
We also offer a different proof in \cref{rem:warmup} below, crafted as a warm-up to the approximate case.
Note that \cref{eq:strict-boundary-algebra} mirrors the notion of boundary algebra in \cite[Def.~3.5]{haah2023invertible}, though using a finite interval rather than a half-space.

To round a 1D \emph{approximate} QCA, the key will be to identify boundary algebras in the approximate setting.
\Cref{thm:local-rounding}, given an approximate QCA $\alpha$, will allow us find a nearby $\tilde \alpha$ that not only has strictly localized image $\alpha(\cA_{[1,2]}) \subset \cA_{[0,3]}$, but moreover satisfies the above properties for some choice of~$\cL, \cR$.
This factorization will allow us to locally round the approximate QCA on multiple intervals \emph{in parallel}, then glue these into a strict QCA on a larger region.
In the statement of the following theorem, we use the definitions of \cref{sec:QCA-definitions}, with range $r=1$ and working on the interval $\Omega=[-2,5]$ of $8$ sites.

\begin{theorem}\label{thm:local-rounding}
Let $\alpha : \cA_{[-1,4]} \to \cA_{[-2,5]}$ be a homomorphism satisfying
\begin{align*}
    \alpha(\cA_{[n,n+1]}) \subset_\eps \cA_{[n-1,n+2]} \quad\text{for } n\in \{-1,1,3\}, \qquad
    \cA_{[0,1]} \subset_\eps \alpha(\cA_{[-1,2]}), \qquad
    \cA_{[2,3]} \subset_\eps \alpha(\cA_{[1,4]}).
\end{align*}
Then there exist subalgebras
\begin{align*}
    \cL, \cR \subset \cA_{[1,2]}, \qquad [\cL,\cR]=0, \qquad \braket{\cL, \cR} = \cA_{[1,2]}
\end{align*}
such that $\alpha(\cL) \subset_{O(\eps)} \cA_{[0,1]}$ and $\alpha(\cR) \subset_{O(\eps)} \cA_{[2,3]}$.
As a consequence, there exists a homomorphism $\tilde{\alpha} : \cA_{[-1,4]} \to \cA_{[-2,5]}$ with
\begin{align*}
    \norm{\tilde\alpha - \alpha}_{cb} \leq O(\eps), \qquad
    \tilde \alpha(\cL) \subset \cA_{[0,1]}, \qquad
    \tilde \alpha(\cR) \subset \cA_{[2,3]}.
\end{align*}
\end{theorem}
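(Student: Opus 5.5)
Following the introduction, we realize the boundary algebras as robust intersections. By analogy with \cref{eq:strict-boundary-algebra}, the candidate for $\alpha(\cR)$ is an approximate intersection of $\cX := \alpha(\cA_{[1,2]})$ with $\cY := \cA_{[2,3]}$, and similarly for $\alpha(\cL)$ with $\cA_{[0,1]}$. The plan is to show that the Hilbert--Schmidt projections $P_\cX$ and $P_{[2,3]}$ approximately commute, up to $O(\eps)$ in cb-norm. Then we apply \cref{lemma:intersections} to obtain an exact subalgebra $\tilde\cR \approxSub{O(\eps)} \cX$ that is approximately contained in $\cA_{[2,3]}$ and stably approximates the intersection. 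We treat $\tilde\cL$ in the same way, and finally pull both back via $\alpha^{-1}$ on $\cX$.

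\textbf{Step 1: approximate commutation.} Since $\cA_{[2,3]}$ is a tensor factor, $P_{[2,3]} = T_{\cA_{[-2,1]}\ot\cA_{[4,5]}}$ is a twirl over unitaries far from the middle. By the hypotheses and \cref{prop:extend-locality}, $\alpha(\cA_{[-1,0]})$ and $\alpha(\cA_{[3,4]})$ are $O(\eps)$-localized near the edges, and $\cA_{[0,1]} \approxSub\eps \alpha(\cA_{[-1,2]})$, $\cA_{[2,3]} \approxSub\eps \alpha(\cA_{[1,4]})$. I would express $P_\cX$ as a twirl over the commutant $\cX' = \alpha(\cA_{[-1,0]}\ot \cA_{[3,4]}) \vee (\text{commutant of } \alpha(\cA_{[-1,4]}))$ and compare it with $P_{[2,3]}$. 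On $\cA_{[2,3]}$, the twirl over $\alpha(\cA_{[3,4]})$ acts approximately like a partial trace onto $\alpha(\cA_{[1,2]})$-compatible pieces. Concretely, I would check that $P_\cX P_{[2,3]}$ and $P_{[2,3]}P_\cX$ are both $O(\eps)$-close to $P_{[2,3]} P_\cX P_{[2,3]}$, by using near-commutation of the relevant unitary groups with the opposite projection; the bounds come from \cref{app:near-inclusion-lemmas}.

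\textbf{Step 2: commutation and generation.} With $\tilde\cR$ and $\tilde\cL$ exact subalgebras of $\cX$, set $\cR=\alpha^{-1}(\tilde\cR)$ and $\cL = \alpha^{-1}(\tilde\cL)$, both inside $\cA_{[1,2]}$. We need $[\cL,\cR]=0$ and $\braket{\cL,\cR}=\cA_{[1,2]}$ exactly. Approximately, $\tilde\cL$ and $\tilde\cR$ live on disjoint regions, so they nearly commute. Moreover, by local surjectivity, $\cX$ is approximately generated by its overlaps with $\cA_{[0,1]}$ and $\cA_{[2,3]}$, since $\cA_{[0,3]}$ decomposes and the exact argument of \cref{rem:warmup} applies approximately. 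To make these properties exact, I would construct $\tilde\cL$ as the relative commutant $\tilde\cR' \cap \cX$ rather than by a separate intersection. Commutation is then automatic, and one checks $\tilde\cR' \cap\cX \approxSub{O(\eps)} \cA_{[0,1]}$. Generation holds provided $\tilde\cR$ is a factor, since then $\cX = \tilde\cR \ot (\tilde\cR'\cap\cX)$. Getting factoriality may require choosing a central summand or arguing via stability: a near-factor $\tilde\cR$ with nontrivial center would give a nontrivial central projection approximately in the center of $\cX$, which is impossible because $\cX$ is a factor.

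\textbf{Step 3: the rounded homomorphism.} We have $\alpha(\cL) \approxSub{O(\eps)} \cA_{[0,1]}$ and $\alpha(\cR) \approxSub{O(\eps)} \cA_{[2,3]}$. Applying \cref{thm:near-inclusion-unitary} to the commuting pair, viewed as the single inclusion $\alpha(\cL)\ot\alpha(\cR) \approxSub{O(\eps)} \cA_{[0,1]}\ot\cA_{[2,3]}$ (a near inclusion of a tensor product, via \cref{lemma:simultaneous-near-inclusions}), gives a unitary $u$ with $\norm{u-I}=O(\eps)$. We then set $\tilde\alpha = u\alpha(\cdot)u^*$, which satisfies $\norm{\tilde\alpha-\alpha}_{cb}\le 2\norm{u-I}$. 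The main obstacle is Step 1, establishing the cb-approximate commutation of the two projections from purely local hypotheses; the factoriality issue in Step 2 is secondary.
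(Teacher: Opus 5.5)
Your Step 1 has the same plan as the paper: show that $P_\cX P_{[2,3]}$ and $P_{[2,3]}P_\cX$ are both near $P_{[2,3]}P_\cX P_{[2,3]}$. However, the mechanism you name only does half the job. The twirl over $\alpha(\cA_{[3,4]})$ nearly commutes with $P_{[2,5]}=T_{\cA_{[-2,1]}}$, but it has no reason to nearly commute with the twirl over $\cA_{[4,5]}$. The paper uses two forms of the same map. The product form $P_\cX P_{[2,3]}$ has image near $\cA_{[0,3]}$. The twirl form $T_{\alpha(\cA_{[3,4]})}P_{[2,3]}$, which uses $\cA_{[2,3]}\subsetsim\alpha(\cA_{[1,4]})$, has image near $\cA_{[2,5]}$. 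Then $P_{[2,3]}=P_{[2,5]}P_{[0,3]}$ combines them.

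The genuine gap is in Step 2. Both the claim $\tilde\cR'\cap\cX\subsetsim\cA_{[0,1]}$ and your stability argument for factoriality of $\tilde\cR$ presuppose that $\cX$ is approximately generated by the two boundary pieces, and you only assert this ("the exact argument applies approximately"). That is the heart of the theorem, not a formality. For a general subalgebra of $\cA_{[0,3]}$ both intersections can be trivial, and \cref{lemma:intersections} by itself does not stop $\tilde\cR$ from being too small; if $\tilde\cR$ were trivial, then $\tilde\cR'\cap\cX=\cX$.

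The paper proves generation using conclusion (c) of \cref{lemma:intersections}, which your proposal never invokes. Together with Step 1 it gives $P_{\tilde\cR}\approx T_{\cB}P_{[2,3]}$, where $\cB\subset\cA_{[2,5]}$ is a conjugate of $\alpha(\cA_{[3,4]})$. All three maps preserve $\cA_{[2,5]}$, so one restricts to $\cA_{[2,5]}$ and tensors with $\id_{\cA_{[-2,1]}}$; this is exactly where the cb norm is indispensable. The result is $E_R:=\id_{\cA_{[-2,1]}}\ot P^{[2,5]}_{\tilde\cR}\approx T_\cB P_{[-2,3]}$. Hence $E_R\alpha|_{\cA_{[1,2]}}\approx\alpha|_{\cA_{[1,2]}}$, because $\cX\subsetsim\cA_{[0,3]}$ and $T_\cB\approx T_{\alpha(\cA_{[3,4]})}$ is a twirl over a subalgebra of $\cX'$. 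With the mirror statement, $E_LE_R=P_{\tilde\cL\ot\tilde\cR}$ nearly fixes $\cX$, so $\cX\approx\tilde\cL\ot\tilde\cR$. Your relative-commutant device is a legitimate way to force exact commutation once this is known, but it only relocates the difficulty.

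Step 3 as written also does not deliver the conclusion. Applying \cref{thm:near-inclusion-unitary} to $\alpha(\cL)\ot\alpha(\cR)\subsetsim\cA_{[0,1]}\ot\cA_{[2,3]}=\cA_{[0,3]}$ only gives $u^*\cX u\subset\cA_{[0,3]}$. That theorem does not respect tensor factors, so you do not get $\tilde\alpha(\cL)\subset\cA_{[0,1]}$ and $\tilde\alpha(\cR)\subset\cA_{[2,3]}$ separately. You would need to conjugate one piece into place, then conjugate the other inside the commutant of the first; this is possible because the unitary can be chosen in the generated algebra.

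The paper's ordering sidesteps both this and the factoriality question. It conjugates the outputs of \cref{lemma:intersections} into the sites rather than into $\cX$, obtaining exact $\tilde\cL\subset\cA_{[0,1]}$ and $\tilde\cR\subset\cA_{[2,3]}$. It proves $\cX\approx\tilde\cL\ot\tilde\cR$. It applies \cref{thm:near-inclusion-unitary} twice to get $\cX=u(\tilde\cL\ot\tilde\cR)u^*$ exactly. Finally it sets $\cL=\alpha^{-1}(u\tilde\cL u^*)$, $\cR=\alpha^{-1}(u\tilde\cR u^*)$ and $\tilde\alpha=u^*\alpha(\cdot)u$. Exact commutation, exact generation, factoriality and exact localization then all come for free.
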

\noindent
Again $O(\eps)$ refers to an upper bound by $c\epsilon$ for a universal constant $c$. We mention in \cref{cor:local-rounding} it is sufficient to assume $\alpha$ is an $\eps$-locality-preserving and $\eps$-locally-surjective homomorphism.

The central difficulty will be to make sense of the intersections of \cref{eq:strict-boundary-algebra} in the case of approximate QCAs.  Generically, if two subalgebras have some non-trivial intersection, but one of them is rotated slightly, the intersection becomes trivial. However, when two subalgebras $\cA, \cB$ have associated projections $P_\cA, P_\cB$ that approximately commute, we can form an exact algebra $\cC$ that acts as a robust version of the intersection $\cA \cap \cB$.
This is captured by \cref{lemma:intersections}.


\begin{remark}[Warm-up: exact case]\label{rem:warmup}
We first see how to prove \cref{thm:local-rounding} for the exact case~$\eps=0$.
This is essentially the ordinary structure theorem for 1D QCAs~\cite{gross2012index}, but the argument we give here will mirror the one used below for the approximate case.
We use the notation for projections~$P$ and twirling~$T$ from \cref{eq:P-notation,eq:T-notation}, with ambient algebra $\cA = \cA_{[-2,5]}$.
Motivated by \cref{eq:strict-boundary-algebra}, we define the following maps on~$\cA$:
\begin{align*}
    F_R := P_{\alpha(\cA_{[1,2]})} P_{[2,3]}, \qquad F_L := P_{\alpha(\cA_{[1,2]})}  P_{[0,1]}.
\end{align*}
We will first show that the projections making up $F_L,F_R$ commute, and hence $F_L, F_R$ are the orthogonal projections onto the algebras defined in \cref{eq:strict-boundary-algebra}.
To this end, we calculate
\begin{align}\label{eq:F_R-intro}
  F_R
= P_{\alpha(\cA_{[1,2]})} P_{[2,3]}
= P_{\alpha(\cA_{[1,2]})} P_{\alpha(\cA_{[1,4]})} P_{[2,3]}
= T_{\alpha(\cA_{[3,4]})} P_{\alpha(\cA_{[1,4]})} P_{[2,3]}
= T_{\alpha(\cA_{[3,4]})} P_{[2,3]},
\end{align}
where the last equality holds due to $\cA_{[2,3]} \subset \alpha(\cA_{[1,4]})$.
From the first equality we obtain $P_{[0,3]} F_R = F_R$, because $\alpha(\cA_{[1,2]}) \subset \cA_{[0,3]}$, and from the final expression we obtain $P_{[2,5]} F_R = F_R$, because $\alpha(\cA_{[3,4]}) \subset \cA_{[2,5]}$ and hence $P_{[2,5]} T_{\alpha(\cA_{[3,4]})} = T_{\alpha(\cA_{[3,4]})} P_{[2,5]}$.
Together, we see that $P_{[2,3]} F_R = F_R$.
That is:
\begin{align*}
  F_R
= P_{\alpha(\cA_{[1,2]})} P_{[2,3]}
= P_{[2,3]} P_{\alpha(\cA_{[1,2]})} P_{[2,3]}
= P_{[2,3]} P_{\alpha(\cA_{[1,2]})},
\end{align*}
where the last equality may be obtained from the first by taking adjoints.
Thus we see that $F_R$ is the product of \emph{commuting} projections onto $\cA_{[2,3]}$ and $\alpha(\cA_{[1,2]})$, hence it is in fact the projection on their intersection:
\begin{align*}
    F_R = P_{\tilde \cR}, \qquad \tilde \cR = \alpha(\cA_{[1,2]}) \cap \cA_{[2,3]} \subset \cA_{[2,3]}.
\end{align*}
Analogously, we see that
\begin{align*}
    F_L = P_{\tilde \cL}, \qquad \tilde \cL = \alpha(\cA_{[1,2]}) \cap \cA_{[0,1]} \subset \cA_{[0,1]},
\end{align*}
We have identified algebras $\tilde\cL \subset \cA_{[0,1]}$, $\tilde\cR \subset \cA_{[2,3]}$ such that $\tilde\cL \ot \tilde\cR \subset \alpha(\cA_{[1,2]})$, and will now show that the latter inclusion is in fact an equality.
To this end, from the final expression of \cref{eq:F_R-intro} and its left-hand analog, we see that
\begin{align*}
    P_{\tilde \cR}|_{\cA_{[2,3]}} = T_{\alpha(\cA_{[3,4]})}|_{\cA_{[2,3]}}, \qquad
    P_{\tilde \cL}|_{\cA_{[0,1]}} = T_{\alpha(\cA_{[-1,0]})}|_{\cA_{[0,1]}},
\end{align*}
Together we obtain
\begin{align*}
    P_{\tilde \cL \ot \tilde \cR}|_{\cA_{[0,3]}}
= T_{\alpha(\cA_{[-1,0]})} T_{\alpha(\cA_{[3,4]})}|_{\cA_{[0,3]}},
\end{align*}
and in particular
\begin{align*}
  P_{\tilde \cL \ot \tilde \cR} \alpha(\cA_{[1,2]})
= T_{\alpha(\cA_{[-1,0]})} T_{\alpha(\cA_{[3,4]})} \alpha(\cA_{[1,2]}) = \alpha(\cA_{[1,2]})
\end{align*}
because the twirls are over algebras in the commutant of $\alpha(\cA_{[1,2]})$.
Thus, $\alpha(\cA_{[1,2]}) \subset \tilde\cL \ot \tilde\cR$, hence
\begin{align*}
    \alpha(\cA_{[1,2]}) = \tilde \cL \otimes \tilde \cR.
\end{align*}
We can define $\cL = \alpha^{-1}(\tilde \cL)$ and $\cR = \alpha^{-1}(\tilde \cR)$ to conclude the proof.
\qed
\end{remark}

With this warm-up, we can prove \cref{thm:local-rounding}.
We repeatedly use the fact that for $\cB, \cC \subset \cA$,
\begin{align}\label{eq:PP-fact-proof}
 \cB \subset_\eps \cC  \implies P_{\cC} P_{\cB} =_{O(\eps)} P_{\cB}  =_{O(\eps)} P_{\cB} P_{\cC}
\end{align}
in the completely bounded norm, as in \cref{lemma:PP-subalgebra}.

\begin{proof}[Proof of \cref{thm:local-rounding}.]
We work within ambient algebra $\cA = \cA_{[-2,5]}$.
Throughout the proof, we will abbreviate $=_{O(\eps)}$ and $\subset_{O(\eps)}$ as $\approx$ and $\subsetsim$, respectively.
For maps between operator algebras these refer to completely bounded norm.

\emph{Step 1: Construction of boundary algebras.}
We will attempt to identify boundary algebras $\tilde{\cR}, \tilde{\cL}$ by using robust intersections in place of \cref{eq:strict-boundary-algebra}.
We focus first on $\tilde\cR$, and the construction of $\tilde\cL$ follows likewise.
To apply \cref{lemma:intersections}, we need
\begin{align}\label{eq:P23-commute}
    P_{\alpha(\cA_{[1,2]})}\,P_{[2,3]} \approx P_{[2,3]}\,P_{\alpha(\cA_{[1,2]})}.
\end{align}
We will achieve this by showing both sides are approximated by $P_{[2,3]}\,P_{\alpha(\cA_{[1,2]})} P_{[2,3]}$.
First we study
\begin{align}\label{eq:23-twirl}
    F_R
:= P_{\alpha(\cA_{[1,2]})} P_{[2,3]}
= P_{\alpha(\cA_{[1,2]})} P_{\alpha(\cA_{[1,4]})}  P_{[2,3]}
= T_{\alpha(\cA_{[3,4]})}  P_{\alpha(\cA_{[1,4]})} P_{[2,3]}
\approx T_{\alpha(\cA_{[3,4]})} P_{[2,3]},
\end{align}
where for the last step, we used the assumption $\cA_{[2,3]} \subsetsim \alpha(\cA_{[1,4]})$,
and hence $P_{\alpha(\cA_{[1,4]})} P_{[2,3]} \approx P_{[2,3]}$, using the first equality in \cref{eq:PP-fact-proof}.
From the first equality in \cref{eq:23-twirl} we obtain $P_{[0,3]}F_R \approx F_R$, because $\alpha(\cA_{[1,2]}) \subsetsim \cA_{[0,3]}$ and hence $P_{[0,3]} P_{\alpha(\cA_{[1,2]})} \approx P_{\alpha(\cA_{[1,2]})}$, again by the first equality in \cref{eq:PP-fact-proof}.
From the final expression in \cref{eq:23-twirl} we obtain $P_{[2,5]}F_R \approx F_R$, because $\alpha(\cA_{[3,4]}) \subsetsim \cA_{[2,5]}$, so we can use \cref{thm:near-inclusion-unitary} to identify an algebra $\cB \subset \cA_{[2,5]}$ with $\cB \approx \alpha(\cA_{[3,4]})$ and $T_{\alpha(\cA_{[3,4]})} \approx T_{\cB}$, hence $P_{[2,5]} T_{\cB} = T_{\cB} P_{[2,5]}$ and $P_{[2,5]} F_R \approx P_{[2,5]} T_{\cB} P_{[2,3]} = T_{\cB} P_{[2,3]} \approx F_R$.
Thus, we see that $P_{[2,3]} F_R = P_{[2,5]} P_{[0,3]} F_R \approx F_R$. That is:
\begin{align}\label{eq:F_R-1}
    F_R = P_{\alpha(\cA_{[1,2]})} P_{[2,3]} \approx P_{[2,3]} P_{\alpha(\cA_{[1,2]})} P_{[2,3]}.
\end{align}
We also want the same equation with the LHS reversed, so we can arrive at \cref{eq:P23-commute}.
We can argue similarly as above,%
\footnote{We cannot simply take the adjoint of \cref{eq:F_R-1} since this would result in a statement about the diamond norm, rather than the cb norm.}
but using the second equality in \cref{eq:PP-fact-proof} rather than the first, to obtain
\begin{align*}
    F_R'
:=  P_{[2,3]} P_{\alpha(\cA_{[1,2]})}
= P_{[2,3]} P_{\alpha(\cA_{[1,4]})} P_{\alpha(\cA_{[1,2]})}
=  P_{[2,3]}  P_{\alpha(\cA_{[1,4]})} T_{\alpha(\cA_{[3,4]})}
\approx  P_{[2,3]} T_{\alpha(\cA_{[3,4]})}
\end{align*}
and
\begin{align*}
F_R' = P_{[2,3]} P_{\alpha(\cA_{[1,2]})} \approx  P_{[2,3]} P_{\alpha(\cA_{[1,2]})} P_{[2,3]}.
\end{align*}
Combining this with \cref{eq:F_R-1} gives the approximate commutation in \cref{eq:P23-commute}.
We can thus apply \cref{lemma:intersections} to form a robust intersection of the subalgebras $\cA_{[2,3]} \subset \cA_{[-2,5]}$ and $\alpha(\cA_{[1,2]}) \subset \cA_{[-2,5]}$ to obtain an algebra $\hat \cR$ satisfying $\hat \cR \subsetsim \cA_{[2,3]}$,  $\hat \cR \subsetsim \alpha(\cA_{[1,2]})$, and
$P_{\hat \cR} \approx P_{\alpha(\cA_{[1,2]})} P_{[2,3]}$.
Using \cref{thm:near-inclusion-unitary} we can further modify $\hat \cR$ to yield an exact subalgebra $\tilde \cR \subset \cA_{[2,3]}$, while maintaining the above near-equalities to error $O(\eps)$.
That is, we have
\begin{align}\label{eq:tilde-R-inclusions}
    \tilde \cR \subset \cA_{[2,3]}, \qquad \tilde \cR \subsetsim \alpha(\cA_{[1,2]}),
    \qquad
    P_{\tilde \cR}
\approx P_{\alpha(\cA_{[1,2]})} P_{[2,3]}
\approx T_{\alpha(\cA_{[3,4]})} P_{[2,3]},
\end{align}
where the final expression follows from \cref{eq:23-twirl}.
By a symmetric argument, we can find
\begin{align*}
    \tilde \cL \subset \cA_{[0,1]}, \qquad \tilde \cL \subsetsim \alpha(\cA_{[1,2]}),
    \qquad
    P_{\tilde \cL}
\approx P_{\alpha(\cA_{[1,2]})} P_{[0,1]}
\approx T_{\alpha(\cA_{[-1,0]})} P_{[0,1]},
\end{align*}
Using \cref{lemma:simultaneous-near-inclusions}, we see that $\tilde \cL \ot \tilde \cR \subsetsim \alpha(\cA_{[1,2]})$.

\emph{Step 2: Approximate generation.}
We will now show that the converse inclusion $\alpha(\cA_{[1,2]}) \subsetsim \cL \ot \cR$ also holds, which will allow us to conclude that $\alpha(\cA_{[1,2]}) \approx \cL \ot \cR$.
Recall that above we used \cref{thm:near-inclusion-unitary} to obtain an algebra $\cB \subset \cA_{[2,5]}$ with $\cB \approx \alpha(\cA_{[3,4]})$ and $T_{\alpha(\cA_{[3,4]})} \approx T_{\cB}$.
Using this and the final expression in \cref{eq:tilde-R-inclusions}, we can write
\begin{align*}
    P_{\tilde \cR} & \approx  T_{\cB} P_{[2,3]}.
\end{align*}
All three are maps $\cA_{[-2,5]} \to \cA_{[-2,5]}$ that preserve the right half of the interval, $\cA_{[2,5]}$.
Thus we can restrict their domain and codomain to $\cA_{[2,5]}$ and obtain
 \begin{align*}
    P^{[2,5]}_{\tilde \cR} & \approx  T^{[2,5]}_{\cB}  P^{[2,5]}_{[2,3]},
\end{align*}
where the superscript $[2,5]$ indicates maps $\cA_{[2,5]} \to \cA_{[2,5]}$.
Because we are working with the completely bounded norm, we can tensor with the identity to obtain
\begin{align*}
    E_R
:= \id_{\cA_{[-2,1]}} \otimes P^{[2,5]}_{\tilde \cR}
\approx \id_{\cA_{[-2,1]}} \otimes  \left( T^{[2,5]}_{\cB}  P^{[2,5]}_{[2,3]} \right)
= T_{\cB} P_{[-2,3]}.
\end{align*}
Composing with $\alpha|_{\cA_{[1,2]}}$ on the right,
\begin{align}\label{eq:idPR}
    E_R \alpha|_{\cA_{[1,2]}}
    \approx T_{\cB} P_{[-2,3]} \alpha|_{\cA_{[1,2]}}
    \approx T_\cB \alpha|_{\cA_{[1,2]}}
    \approx T_{\alpha(\cA_{[3,4]})} \alpha|_{\cA_{[1,2]}}
    = \alpha|_{\cA_{[1,2]}}.
\end{align}
In the second step we used that $P_{[0,3]} P_{\alpha(\cA_{[1,2]})} \approx  P_{\alpha(\cA_{[1,2]})}$ and hence $P_{[0,3]} \alpha|_{\cA_{[1,2]}} \approx \alpha|_{\cA_{[1,2]}}$, due to $\alpha(\cA_{[1,2]}) \subsetsim \cA_{[0,3]}$ and \cref{eq:PP-fact-proof}.
By a symmetric argument one finds
 \begin{align} \label{eq:idPL}
   E_L \alpha|_{\cA_{[1,2]}} \approx \alpha|_{\cA_{[1,2]}}, \qquad
   E_L := P^{[-2,1]}_{\tilde \cL} \otimes \id_{\cA_{[2,5]}}.
\end{align}
Using \cref{eq:idPR,eq:idPL} we obtain
\begin{align*}
     E_L E_R \alpha|_{\cA_{[1,2]}} \approx \alpha|_{\cA_{[1,2]}}.
\end{align*}
Since $E_L E_R = P^{[-2,1]}_{\tilde \cL} \otimes P^{[2,5]}_{\tilde \cR} = P_{\tilde\cL \ot \tilde\cR}$ is a projection onto $\tilde \cL \otimes \tilde \cR$ and $\alpha$ is an isometry, it follows that
\begin{align*}
    \alpha(\cA_{[1,2]}) \subsetsim  \tilde \cL \otimes \tilde \cR.
\end{align*}
We already established the reverse near-inclusion, so we can conclude that
\begin{align*}
    \alpha(\cA_{[1,2]}) \approx \tilde{\cL} \otimes \tilde{\cR}.
\end{align*}
\emph{Step 3: Construction of $\cL,\cR,\tilde\alpha$.}
Using \cref{thm:near-inclusion-unitary} (twice), we see that there exists a unitary~$u \in \cA$ such that $\norm{u - I} \leq O(\eps)$ and
$\alpha(\cA_{[1,2]}) = u(\tilde{\cL} \otimes \tilde{\cR})u^*$.
As $\alpha$ is an isometric isomorphism onto its image, $\cL := \alpha^{-1}(u \tilde{\cL} u^*)$ and $\cR := \alpha^{-1}(u \tilde{\cR} u^*)$ are subalgebras with the desired properties:
they commute, generate $\cA_{[1,2]}$, and satisfy $\alpha(\cL) \approx \tilde{\cL} \subset \cA_{[0,1]}$ and $\alpha(\cR) \approx \tilde{\cR} \subset \cA_{[2,3]}$.
Furthermore, the homomorphism $\tilde{\alpha} : \cA_{[-1,4]} \to \cA_{[-2,5]}$ defined by $\tilde\alpha(x) := u^* \alpha(x) u$ satisfies $\tilde\alpha \approx \alpha$ as well as $\tilde\alpha(\cL) = \tilde\cL \subset \cA_{[0,1]}$ and $\tilde\alpha(\cR) = \tilde\cR \subset \cA_{[2,3]}$.
This concludes the proof.
\end{proof}

The following is a corollary to \cref{thm:local-rounding}.
\begin{corollary} \label{cor:local-rounding}
Let $\Omega$ be an interval or circle with $|\Omega| \geq 8$, with sites denoted $[-2,5] \subset \Omega$. Let $\alpha : \cA_{[-1,4]} \to \cA_\Omega$ be an $\eps$-locality-preserving and $\eps$-locally-surjective homomorphism.  Then the conclusions of \cref{thm:local-rounding} apply.
\end{corollary}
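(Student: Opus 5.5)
The plan is to check the hypotheses of \cref{thm:local-rounding} directly from the assumptions, using \cref{prop:extend-locality} and \cref{prop:extend-loc-surjectivity} with $r=1$. The ambient algebra in the corollary is $\cA_\Omega$, not $\cA_{[-2,5]}$, so the first step is to reduce to the codomain $\cA_{[-2,5]}$.

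For the locality hypotheses, fix $n\in\{-1,1,3\}$ and apply \cref{prop:extend-locality} to $X=[n,n+1]\subset[-1,4]$. Since $|X|=2$, this gives $\alpha(\cA_{[n,n+1]})\subset_{8\eps}\cA_{X^{+1}}$. Here $X^{+1}$ is the radius-one neighborhood in $\Omega$, which is $[n-1,n+2]$. On a circle with $|\Omega|\ge 8$ these four sites are distinct and lie inside $[-2,5]$, so the neighborhood does not wrap around. For the surjectivity hypotheses, take $X=[0,1]$, so $X^{+1}=[-1,2]\subset[-1,4]$, and $X=[2,3]$, so $X^{+1}=[1,4]\subset[-1,4]$. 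Then \cref{prop:extend-loc-surjectivity} gives $\cA_{[0,1]}\subset_{8\eps}\alpha(\cA_{[-1,2]})$ and $\cA_{[2,3]}\subset_{8\eps}\alpha(\cA_{[1,4]})$. All the hypotheses therefore hold with $\eps$ replaced by $8\eps=O(\eps)$.

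The main obstacle is that $\alpha$ maps into $\cA_\Omega$, while \cref{thm:local-rounding} requires a homomorphism into $\cA_{[-2,5]}$. I see two ways around this.

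\textbf{Option 1: round the image into $\cA_{[-2,5]}$.} By \cref{prop:extend-locality}, $\alpha(\cA_{[-1,4]})\subset_{O(\eps)}\cA_{[-2,5]}$, with a constant depending on $|[-1,4]|=6$, which is fixed. \Cref{thm:near-inclusion-unitary} then supplies a unitary $w$ with $\norm{w-I}=O(\eps)$ and $w\alpha(\cdot)w^*$ mapping into $\cA_{[-2,5]}$. Replace $\alpha$ by $\alpha'=w\alpha w^*$. The hypotheses are preserved up to $O(\eps)$, because conjugation by $w$ moves every image algebra by $O(\eps)$. Apply \cref{thm:local-rounding} to $\alpha'$ to obtain $\tilde\alpha'$ and $\cL,\cR$. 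Finally set $\tilde\alpha=\tilde\alpha'$, viewed as a map into $\cA_\Omega$; it satisfies $\norm{\tilde\alpha-\alpha}_{cb}\le\norm{\tilde\alpha'-\alpha'}_{cb}+O(\eps)$.

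\textbf{Option 2: rerun the proof in $\cA_\Omega$.} Alternatively, one can observe that the proof of \cref{thm:local-rounding} goes through verbatim with $\cA_\Omega$ as the ambient algebra. Every step uses only projections and twirls onto subalgebras together with near inclusions. The one place where the ambient algebra matters is the tensor splitting $\cA_{[-2,1]}\ot\cA_{[2,5]}$ in Step 2, and it becomes a splitting of $\cA_\Omega$ into complementary pieces containing $[0,1]$ and $[2,3]$ respectively, which is available once $|\Omega|\ge 8$. This route is a useful check, but I would present Option 1 as the proof since it uses the theorem as a black box.
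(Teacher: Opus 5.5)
Your Option~1 is the paper's proof: conjugate by a unitary $w$ with $\norm{w-I}=O(\eps)$ from \cref{thm:near-inclusion-unitary} so that the image lies in $\cA_{[-2,5]}$, get the hypotheses of \cref{thm:local-rounding} from \cref{prop:extend-locality,prop:extend-loc-surjectivity}, and transfer the conclusions back via $\norm{w\alpha(\cdot)w^*-\alpha}_{cb}\le 2\norm{w-I}$; the paper checks the hypotheses after conjugating rather than before, which makes no difference. For Option~2, the two complementary pieces in Step~2 must contain $[-2,1]$ and $[2,5]$ respectively, not merely $[0,1]$ and $[2,3]$, because the approximants $\cB\subset\cA_{[2,5]}$ and its left analogue must each lie inside one piece; such a split always exists, so that route also works.
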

We continue using notation from the proof of \cref{thm:local-rounding}.
\begin{proof}
    By \cref{prop:extend-locality}, we have $\alpha(\cA_{[-1,4]}) \subsetsim \cA_{[-2,5]}$.
    We modify $\alpha$ to nearby $\tilde \alpha$ where this inclusion is exact.
    In particular, by  \cref{thm:near-inclusion-unitary}, there exists $v$ with $\norm{v-I} = O(\eps)$ such that $v \alpha(\cA_{[-1,4]}) v^* \subset \cA_{[-2,5]}$.
    Define $\tilde \alpha(x) = v \alpha(x) v^*$.  Then $\tilde \alpha(\cA_{[-1,4]}) \subset \cA_{[-2,5]}$, with~$\alpha \approx \tilde \alpha$ because~$\norm{v-I}=O(\eps)$.

    Then $\tilde \alpha$ is $O(\eps)$-locality-preserving and $O(\eps)$-locally-surjective.
    Because $\tilde \alpha(\cA_{[-1,4]}) \subset \cA_{[-2,5]}$, we may restrict the codomain to $\cA_{[-2,5]}$.
    Then the assumptions of \cref{thm:local-rounding} (involving various near-inclusions) apply to $\tilde \alpha$ by \cref{prop:extend-locality,prop:extend-loc-surjectivity}.
    Again using $\alpha \approx \tilde \alpha$, the conclusions of \cref{thm:local-rounding} applied $\tilde \alpha$ then imply the same conclusions applied to $\alpha$.
\end{proof}

The local rounding in \cref{thm:local-rounding} also works for larger regions, by implementing the local rounding separately on neighboring patches, then sewing together the results.

\begin{theorem*}[Restatement of \cref{thm:round-homomorphism}]
    Let $\alpha : \cA_{[a,b]} \to \cA_\Omega$ be a homomorphism, for $\Omega$ an interval, that is $\epsilon$-locality-preserving and $\eps$-locally surjective, with $[a,b] \subset \Omega$ and $b-a \geq 8$.
    Then there exists an exactly locality-preserving and exactly locally surjective homomorphism~$\tilde{\alpha} : \cA_{[a+2,b-3]} \to \cA_\Omega$ of range 2 with $\distloc(\alpha|_{\cA_{[a+2,b-3]}},\tilde \alpha) = O(\eps).$
\end{theorem*}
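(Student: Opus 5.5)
The approach is to apply \cref{cor:local-rounding} on overlapping patches, one centred on each pair of sites $[n,n+1]$ (shifted copies of $[-2,5]$ with domain $[n-2,n+3]$). Each application yields a factorization $\cA_{[n,n+1]} = \cL_n \ot \cR_n$ with $\alpha(\cL_n) \subsetsim \cA_{[n-1,n]}$ and $\alpha(\cR_n) \subsetsim \cA_{[n+1,n+2]}$. Since $\alpha$ is only assumed on $[a,b]$, the patches must lie inside $[a,b]$. This is what trims the domain to $[a+2,b-3]$, and it is why $b-a\ge 8$ is needed for at least one patch to fit. To avoid conflicting choices on overlapping pairs, I will use only a disjoint tiling by pairs $[n,n+1]$ with $n$ of fixed parity. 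Then $\cA_{[a+2,b-3]}$ (up to padding at one end) is $\bigotimes_n \cL_n \ot \cR_n$, and every factor is an exact tensor factor of a pair.

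The key observation is that $\cR_n$ and $\cL_{n+2}$ are both pushed by $\alpha$ into the \emph{same} two-site region $\cA_{[n+1,n+2]}$. Their images $\alpha(\cR_n)$ and $\alpha(\cL_{n+2})$ commute, because $\alpha$ is a homomorphism on commuting algebras. So the algebra $\cM_n := \alpha(\cR_n \ot \cL_{n+2})$ satisfies $\cM_n \subsetsim \cA_{[n+1,n+2]}$ by \cref{lemma:simultaneous-near-inclusions}, with error $O(\eps)$. The images $\cM_n$ for different $n$ commute and together form $\alpha$ of the whole domain. I then build $\tilde\alpha$ block by block. By \cref{thm:near-inclusion-unitary}, pick unitaries $u_n$ with $\norm{u_n - I}=O(\eps)$ and $u_n \cM_n u_n^* \subset \cA_{[n+1,n+2]}$, and define
\[
\tilde\alpha|_{\cR_n\ot\cL_{n+2}} := \mathrm{Ad}_{u_n}\circ \alpha|_{\cR_n\ot\cL_{n+2}} .
\]
The blocks $\cA_{[n+1,n+2]}$ are disjoint for different $n$ of the fixed parity, so the pieces have commuting, indeed tensor-separated, images. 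They therefore assemble into a homomorphism $\tilde\alpha$ on the tensor product. Each single site $x$ lies in some pair $[n,n+1]$, so $\cA_x \subset \cL_n\ot\cR_n$ lands in $\cA_{[n-1,n+2]}$. This gives range $2$, and exact locality follows.

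Closeness: for a single-site $x\in\cA_{[n,n+1]}$, write $x$ in $\cL_n\ot\cR_n$. Then $\tilde\alpha(x) = (\mathrm{Ad}_{u_{n-2}}\ot \mathrm{Ad}_{u_n})\alpha(x)$ on the relevant factors, and the cb-norm estimate $\norm{\mathrm{Ad}_u-\id}_{\cb}=O(\eps)$ (tensoring two such) gives $\distloc=O(\eps)$. Here I am taking $u_n$ inside $\cA_{[n+1,n+2]}\vee\cM_n$ to keep errors local. For exact local surjectivity, I argue by dimension counting. Inside $\cA_{[n+1,n+2]}$, the algebra $\tilde\alpha(\cR_n\ot\cL_{n+2})$ is a subfactor whose commutant must be small. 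Approximate local surjectivity, via \cref{prop:extend-loc-surjectivity}, gives $\cA_{[n+1,n+2]} \subsetsim \alpha(\text{nearby})\approx\tilde\alpha(\text{nearby})$. A near inclusion of a factor into a subalgebra with error $<1$ forces containment of dimensions, so $\tilde\alpha(\cR_n\ot\cL_{n+2}) = \cA_{[n+1,n+2]}$ exactly. This is the block-diagonal structure of a GNVW-type QCA.

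The main obstacle is the interior blocks $\cM_n$ being asserted to be full blocks, and the handling of the two endpoints. At the ends of $[a+2,b-3]$, the leftover $\cL$ or $\cR$ factor has no partner. It should simply be conjugated into its two-site region alone, and local surjectivity is only required where $x^{+2}$ lies in the domain. I expect the delicate bookkeeping to be choosing the parity and padding so that the end pairs still have full patches inside $[a,b]$. That is where the specific offsets $a+2$ and $b-3$ come from.
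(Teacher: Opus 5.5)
Your proof is correct, but it glues the local data differently from the paper.

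\textbf{The paper's route.} It uses the full conclusion of \cref{thm:local-rounding} on each pair $B_i$: an exact conjugate $\beta_i$ of $\alpha|_{\cA_{B_i}}$ with $\beta_i(\cL_i)\subset\cA_{C_i}$ and $\beta_i(\cR_i)\subset\cA_{C_{i+1}}$. Because $\beta_i(\cR_i)$ and $\beta_{i+1}(\cL_{i+1})$ are conjugated by different unitaries, they only approximately commute. So the paper needs a second rotation $u_i\in\cA_{C_{i+1}}$ that moves $\beta_i(\cR_i)$ into the relative commutant of $\beta_{i+1}(\cL_{i+1})$.

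\textbf{Your route.} You use only the first conclusion (the approximate localization of $\cL_n,\cR_n$) and regroup along the shifted blocks. The algebra $\cM_n=\alpha(\cR_n\ot\cL_{n+2})$ is already exact, and its two pieces commute exactly. One conjugation per two-site region $\cA_{[n+1,n+2]}$ therefore suffices, and commutation between blocks is automatic from disjoint supports.

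\textbf{Where the cost shows up: closeness.} In the paper, $\hat\alpha|_{\cA_{B_i}}$ is a single conjugation of $\alpha$, so $\hat\alpha\approx\alpha$ on each pair is immediate. Your $\tilde\alpha|_{\cA_{[n,n+1]}}$ instead conjugates $\cL_n$ by $u_{n-2}$ and $\cR_n$ by $u_n$. Your ``$\mathrm{Ad}_{u_{n-2}}\ot\mathrm{Ad}_{u_n}$'' justification is not literally right, because $u_{n-2}\in C^*(\cM_{n-2},\cA_{[n-1,n]})$ need not commute with $\alpha(\cR_n)$. The correct tool is \cref{lemma:local-to-global-hom}, applied to the commuting generating pair $\cL_n,\cR_n$ of $\cA_{[n,n+1]}$, using $\norm{(\tilde\alpha-\alpha)|_{\cL_n}},\norm{(\tilde\alpha-\alpha)|_{\cR_n}}\le 2\max_k\norm{u_k-I}=O(\eps)$.

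\textbf{A step to make explicit: surjectivity.} You need to pass from $\cA_{[n+1,n+2]}\subsetsim\tilde\alpha(\cA_{[n,n+3]})$ to the middle block. Since $\tilde\alpha(\cL_n)\subset\cA_{[n-1,n]}$ and $\tilde\alpha(\cR_{n+2})\subset\cA_{[n+3,n+4]}$ sit on disjoint sites, the normalized partial trace $P_{[n+1,n+2]}$ fixes $\cA_{[n+1,n+2]}$. It also maps $\tilde\alpha(\cA_{[n,n+3]})$ onto $\tilde\alpha(\cR_n\ot\cL_{n+2})$. This gives $\cA_{[n+1,n+2]}\subsetsim\tilde\alpha(\cR_n\ot\cL_{n+2})\subset\cA_{[n+1,n+2]}$, and hence equality. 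This is the same step the paper uses to obtain $\braket{\hat\alpha(\cR_{i-1}),\hat\alpha(\cL_i)}=\cA_{C_i}$.
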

The intervals here are not optimally chosen.  If $[a+2,b-2]$ has even cardinality, then in fact the argument below produces a homomorphism $\hat{\alpha} : \cA_{[a+2,b-2]} \to \cA_\Omega$ satisfying the same properties.
\begin{proof}
It is enough to prove the following stronger statement: if $[a+2,b-2]$ has even cardinality, then there exists a homomorphism
\begin{align*}
    \hat{\alpha} : \cA_{[a+2,b-2]} \to \cA_\Omega
\end{align*}
which is exactly locality-preserving and locally surjective, and satisfies
\begin{align*}
    \distloc(\alpha|_{\cA_{[a+2,b-2]}},\hat{\alpha}) = O(\eps).
\end{align*}
Indeed, this immediately implies the stated theorem:
if $[a+2,b-2]$ has odd cardinality, then applying the stronger statement to the interval $[a,b-1]$ yields such a homomorphism on $\cA_{[a+2,b-3]} \subset \cA_{[a+3,b-3]}$.
Thus assume from now on that $[a+2,b-2]$ has even cardinality.  Write
\begin{align*}
    [a+2,b-2] = \bigcup_{i=1}^m B_i, \qquad B_i := [a+2i,a+2i+1], \qquad m = \frac{b-a-3}{2}
\end{align*}
as a disjoint union of adjacent pairs, and let
\begin{align*}
    C_i := [a+2i-1,a+2i], \qquad i=1,\ldots,m+1,
\end{align*}
denote the shifted adjacent pairs, where each $B_i$ sits between $C_i$ and $C_{i+1}$.

Fix $i \in \{1,\ldots,m\}$.  Apply \cref{thm:local-rounding} (via \cref{cor:local-rounding}) to the restriction of $\alpha$ to the six-site interval $[a+2i-2,a+2i+3]$.
We obtain commuting subalgebras
\begin{align*}
    \cL_i,\cR_i \subset \cA_{B_i}, \qquad [\cL_i,\cR_i] = 0, \qquad \braket{\cL_i,\cR_i}=\cA_{B_i},
\end{align*}
and a homomorphism $\beta_i : \cA_{B_i} \to \cA_\Omega$ such that
\begin{align*}
    \beta_i =_{O(\eps)} \alpha|_{\cA_{B_i}}, \qquad
    \beta_i(\cL_i) \subset \cA_{C_i}, \qquad
    \beta_i(\cR_i) \subset \cA_{C_{i+1}}
\end{align*}
in the completely bounded norm.
Denote
\begin{align*}
    \tilde \cL_i := \beta_i(\cL_i) \subset \cA_{C_i}, \qquad
    \tilde \cR_i := \beta_i(\cR_i) \subset \cA_{C_{i+1}}.
\end{align*}

We now make neighboring image algebras commute exactly.  For each $i=1,\ldots,m-1$, note that $\cR_i$ and $\cL_i$ commute, so we have the exact commutation
\begin{align*}
    \alpha(\cR_i) \subset \alpha(\cL_i)'.
\end{align*}
Next, $\tilde \cR_i, \tilde \cL_{i+1} \subset \cA_{C_{i+1}}$, and since $\tilde \cR_i =_{O(\eps)} \alpha(\cR_i)$ and $\tilde \cL_{i+1} =_{O(\eps)} \alpha(\cL_{i+1})$ and $\alpha(\cR_{i})$ and $\alpha(\cL_{i+1})$ commute, we have that $\norm{[x,y]} \leq O(\eps \norm{x}\norm{y})$ for all $x \in \tilde\cR_i$ and $y \in \tilde\cL_{i+1}$. 
By \cref{thm:near-inclusion-unitary}, this implies
\begin{align*}
    \tilde \cR_i \subset_{O(\eps)} \tilde \cL_{i+1}' \cap \cA_{C_{i+1}}.
\end{align*}
Hence again by \cref{thm:near-inclusion-unitary}, there exists a unitary $u_i \in \cA_{C_{i+1}}$ with $\norm{u_i-I}=O(\eps)$ such that
\begin{align*}
    \hat \cR_i := u_i \tilde \cR_i u_i^* \subset \tilde \cL_{i+1}' \cap \cA_{C_{i+1}}.
\end{align*}
Now replace $\beta_i$ by 
\begin{align*}
    \hat \beta_i := u_i \beta_i(\cdot) u_i^*.
\end{align*}
Since $u_i \in \cA_{C_{i+1}}$, this does not change the exact inclusion $\beta_i(\cL_i)\subset \cA_{C_i}$, while it sends~$\beta_i(\cR_i)$ to~$\hat \cR_i \subset \cA_{C_{i+1}}$.  Moreover,
\begin{align}\label{eq:hat beta close to alpha}
    \hat \beta_i =_{O(\eps)} \beta_i =_{O(\eps)} \alpha|_{\cA_{B_i}}.
\end{align}

The homomorphisms $\hat \beta_i$ may now be glued together.  Indeed, if $|i-j|>1$, then~$\hat \beta_i(\cA_{B_i})$ and~$\hat \beta_j(\cA_{B_j})$ have disjoint support, hence commute.  If $j=i+1$, then $\hat \beta_i(\cR_i)=\hat \cR_i$ commutes with $\hat \beta_{i+1}(\cL_{i+1})=\tilde \cL_{i+1}$ by construction, while the remaining pieces lie in the disjoint pairs $C_i$ and $C_{i+2}$.  Thus the algebras~$\hat \beta_i(\cA_{B_i})$ commute pairwise, and since
\begin{align*}
    \cA_{[a+2,b-2]} = \bigotimes_{i=1}^m \cA_{B_i},
\end{align*}
the maps $\hat \beta_i$ combine to a homomorphism
\begin{align*}
    \hat \alpha : \cA_{[a+2,b-2]} \to \cA_\Omega
\end{align*}
defined by its action on generators, $\hat \alpha(\cA_{B_i}) = \beta_i(\cA_{B_i})$.

Recall that by the definition of the local distance (\cref{eq:dist-local}), to bound $\distloc(\alpha|_{\cA_{[a+2,b-2]}},\hat \alpha)$, it suffices to bound the distance on single-site algebras.
Each site of $[a+2,b-2]$ lies in a pair $B_i$, and on each such pair we have $\hat \alpha =_{O(\eps)} \alpha$ in local distance by \cref{eq:hat beta close to alpha}. Therefore we obtain
\begin{align*}
    \distloc(\alpha|_{\cA_{[a+2,b-2]}},\hat \alpha)=O(\eps).
\end{align*}
Since $\alpha$ is $\eps$-locally surjective, this also implies that $\hat \alpha$ is $O(\eps)$-locally surjective.
In particular, we have
\begin{align*}
    \cA_{C_{i}} \subset_{O(\eps)} \hat \alpha(\cA_{B_{i-1} \cup B_i}),
\end{align*}
and since $\hat{\alpha}(\cL_{i-1}) \subseteq \cA_{C_{i-1}}$ and $\hat{\alpha}(\cR_{i}) \subseteq \cA_{C_{i+1}}$ this implies
\begin{align*}
    \cA_{C_{i}} \subset_{O(\eps)} \braket{\hat\alpha(\cR_{i-1}), \hat\alpha(\cL_i)}.
\end{align*}
Conversely, by construction $\braket{\hat\alpha(\cR_{i-1}), \hat\alpha(\cL_i)} \subset \cA_{C_i}$, and we conclude from  (for instance) \cref{thm:near-inclusion-unitary} that we must have 
\begin{align}\label{eq:hat alpha locally surjective}
    \braket{\hat\alpha(\cR_{i-1}), \hat\alpha(\cL_i)} = \cA_{C_i}.
\end{align}

When the pairs $B_i$ are considered as single sites $i$ with unit spacing,
then $\hat \alpha$ is exactly locality-preserving and locally surjective with range 1, with respect to this coarse-graining.
Indeed, the locality-preserving property holds due to
\begin{align*}
    \hat \alpha(\cA_{B_i}) = \hat \beta_i(\cA_{B_i}) \subset \cA_{C_i \cup C_{i+1}} \subset \cA_{B_{i-1} \cup B_i \cup B_{i+1}},
\end{align*}
and the local surjectivity holds due to \cref{eq:hat alpha locally surjective}.
When individual sites are considered, without coarse-graining into disjoint pairs~$B_i$, then we conclude the weaker desired statement: $\hat \alpha$ is locality-preserving and locally surjective with range 2.
\end{proof}

Finally we turn to circles.

\begin{theorem*}[Restatement of \cref{thm:circle-round}]
Let $\alpha$ be an $\epsilon$-QCA on a finite circle~$\Omega$ with $|\Omega| \ge 8$.
Then there exists an exact QCA~$\tilde{\alpha}$ of range $3$ with $\distloc(\alpha,\tilde \alpha) = O(\eps).$
\end{theorem*}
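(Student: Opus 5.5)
The plan is to mimic the proof of \cref{thm:round-homomorphism}, now gluing around the whole circle, with an extra step for odd $|\Omega|$. First suppose $|\Omega| = 2m$ is even. Partition the circle into adjacent pairs $B_1,\dots,B_m$, and let $C_i$ be the shifted pairs, indexed cyclically so that each $B_i$ sits between $C_i$ and $C_{i+1}$ with $C_{m+1}=C_1$. An $\eps$-QCA is $\eps$-locality-preserving. It is also $O(\eps)$-locally-surjective: $\alpha^{-1}$ is an $O(\eps)$-QCA, because for $x\in\cA_y$ with $y$ away from $z^{+1}$ we get $\alpha^{-1}$ of commutants, so $\alpha^{-1}(\cA_z)$ nearly commutes with all $\cA_y$ for $y\notin z^{+1}$, and then \cref{thm:near-inclusion-unitary} applies; alternatively one can use the commutant/complement argument given after \cref{def:local-homomorphism} in approximate form.

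Since $|\Omega|\ge 8$, each six-site window $B_i$ together with two sites on either side embeds as $[-1,4]$ inside an eight-site window $[-2,5]$. So \cref{cor:local-rounding} applies to $\alpha|_{\cA_{[-1,4]}}$ for each window. It yields $\cL_i,\cR_i\subset\cA_{B_i}$ and homomorphisms $\beta_i\approx\alpha|_{\cA_{B_i}}$ with $\beta_i(\cL_i)\subset\cA_{C_i}$ and $\beta_i(\cR_i)\subset\cA_{C_{i+1}}$.

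Next, for every $i$ (now cyclically, including $i=m$), I make $\beta_i(\cR_i)$ commute exactly with $\beta_{i+1}(\cL_{i+1})$. I conjugate $\beta_i$ by a unitary $u_i\in\cA_{C_{i+1}}$ with $\norm{u_i-I}=O(\eps)$, exactly as in the interval proof. These corrections are independent: $u_i$ only touches $\beta_i$ and leaves $\beta_i(\cL_i)\subset\cA_{C_i}$ fixed, while the target algebra $\tilde\cL_{i+1}$ is never modified. So the cyclic closure causes no inconsistency. Gluing then gives a homomorphism $\hat\alpha:\cA_\Omega\to\cA_\Omega$ with $\hat\alpha(\cA_{B_i})\subset\cA_{C_i\cup C_{i+1}}$, which has range $2$, and with $\distloc(\alpha,\hat\alpha)=O(\eps)$. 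Because $\cA_\Omega$ is simple and finite-dimensional, $\hat\alpha$ is automatically an automorphism. One caveat is that for $|\Omega|=8$ the windows wrap around, but $C_i$ and $C_{i+2}$ are still disjoint when $m\ge 4$, so the commutation argument survives.

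For odd $|\Omega|$, I partition into pairs plus one triple $B_m$ of three sites. I still invoke \cref{thm:local-rounding}, after coarse-graining: I group sites so that one "site" is a pair, and apply the theorem to the coarse-grained block containing the triple. Equivalently, I apply the proof of \cref{thm:local-rounding} with $[1,2]$ replaced by a three-site interval; nothing in that proof used the size of the middle region beyond the range-$1$ neighbor structure. This gives $\cL_m,\cR_m\subset\cA_{B_m}$ with images in $C_m$ and $C_{m+1}$, where these boundary pairs are unchanged. The glued automorphism now maps the triple into a window of length up to $7$, so single sites move by at most $3$, giving range $3$ as claimed.

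I expect the main obstacle to be this odd case. In particular I need to justify applying the local rounding with a three-site middle block while keeping the $O(\eps)$ constants universal. If direct adaptation is awkward, I would first try applying \cref{thm:local-rounding} to a coarse-graining where the triple and one neighbor form blocks of size two. That requires checking that the hypotheses (the near-inclusions for $n\in\{-1,1,3\}$ and the local surjectivity conditions) still hold at range $1$ in coarse-grained units, via \cref{prop:extend-locality,prop:extend-loc-surjectivity}, which cost only constant factors.
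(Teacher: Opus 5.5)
Your even case is the paper's argument: pair up the circle, apply \cref{cor:local-rounding} on each window, correct each $\beta_i$ by a unitary in $\cA_{C_{i+1}}$, and glue cyclically. The gap is in the odd case, where the construction you describe cannot work. Suppose $B_m=[p,p+2]$ is the triple and the boundary pairs $C_m=[p-1,p]$, $C_{m+1}=[p+2,p+3]$ are unchanged. Then the $m$ distinct pairs $C_1,\ldots,C_m$ cover only $2m=|\Omega|-1$ sites and miss $p+1$. So the glued map would take values in $\cA_{\Omega\setminus\{p+1\}}$, which (for $\cA_{p+1}\neq\mathbb{C}$) is too small to be the image of an automorphism. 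Equivalently, a two-part splitting of $\cA_{B_m}$ with images in those pairs contradicts $\cA_{p+1}\subset_\eps\alpha(\cA_{[p,p+2]})$. The proof of \cref{thm:local-rounding} does use that the middle block has two sites. Step~2 splits the ambient algebra as $\cA_{[-2,1]}\otimes\cA_{[2,5]}$ exactly between the two middle sites, so that $E_LE_R$ projects onto $\tilde\cL\otimes\tilde\cR$. For a three-site block the same computation gives the projection onto $\tilde\cL\otimes\cA_{p+1}\otimes\tilde\cR$, a three-part splitting, not the one you glue. Coarse-graining is also not ``equivalent'' to what you wrote. Merging $q=\{p+1,p+2\}$ turns the block into $[p,q]$, whose right boundary pair is $\{q,p+3\}$, i.e.\ the three original sites $[p+1,p+3]$; this pair contains the middle site.

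The missing step is the paper's one-line reduction: merge two adjacent sites into a single site \emph{globally}. The coarse circle then has $|\Omega|-1\ge 8$ sites, an even number. By \cref{prop:extend-locality,prop:extend-loc-surjectivity} the coarse-grained $\alpha$ is still an $O(\eps)$-QCA, and the even case gives an exact QCA of coarse range $2$. This is range $3$ in original units, because a coarse ball of radius $2$ contains the merged site at most once. Merging globally is what lets adjacent blocks share the same enlarged pair, both for the unitary commutation fix and for $\langle\hat\alpha(\cR_m),\hat\alpha(\cL_1)\rangle=\cA_{C_1}$. Separately, your justification of $O(\eps)$-local surjectivity does not give a volume-independent constant. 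Knowing that $\alpha^{-1}(\cA_z)$ nearly commutes with each single-site $\cA_y$ yields near-commutation with $\cA_{\Omega\setminus z^{+1}}$ only via \cref{lemma:simultaneous-near-inclusions}, at a cost of a factor $|\Omega|$. The approximate complement argument carries the same factor through \cref{prop:extend-locality}. The paper's proof simply takes $\eps$-local surjectivity of the windows as given, so this is not where you depart from it, but the reason you offer does not establish it.
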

\begin{proof}
We will prove that when $|\Omega|$ is even, there exists an exact QCA~$\tilde{\alpha}$ of range $2$ with $\distloc(\alpha,\tilde \alpha) = O(\eps)$
The case for $|\Omega|$ odd can be reduced to the even case, by fixing two adjacent sites and coarse-graining them into a single site.

For even $|\Omega|$, partition $\Omega$ into pairs of adjacent sites.
Choose a pair of sites and label them $[1,2]$.
Because $|\Omega| \ge 8$, they have neighborhoods $[1,2] \subset [-1,5] \subset [-2,5] \subset \Omega$ such that $\alpha|_{[-1,5]}$ is $\eps$-locality-preserving and $\eps$-locally-surjective.
Then we can apply \cref{cor:local-rounding} to the homomorphism $\alpha|_{\cA_{[-1,5]}}$ to obtain subalgebras $\cL, \cR \subset \cA_{[1,2]}$.
We repeat this separately for each disjoint pair of sites partitioning $\Omega$.
Then the proof proceeds nearly identically to that of \cref{thm:round-homomorphism}. Note that the objects that appear in the proof (such as the algebras $\cL_i$, $\cR_i$ and the maps $\hat \beta_i$) are all constructed in a local way.  The circle geometry does not matter, because the modified algebras  $\hat \cL_i$, $\hat \cR_i$ are constructed via independent modifications in each pair $i$.
Similarly, the argument for the properties of the glued map $\hat \alpha$ is purely local.
\end{proof}

\section{Approximate Algebras and Intersections}\label{sec:approx algebras}

\subsection{Approximate \texorpdfstring{$C^*$}{C*}-algebras}
Kitaev~\cite{kitaev2024almost} introduces an approximate version of a unital $C^*$-algebra, where the multiplication is only associative up to a small error.
We first review this notion.

\begin{definition}\label{dfn:approx algebra}
    An $\eps$-$C^*$-algebra for $0 \leq \eps < 1$ is a Banach space $\cA$, with a bilinear multiplication map $x , y \mapsto xy$, a conjugate linear involution $x \mapsto x^*$ and an approximate unit element $e \in \cA$. The multiplication satisfies
    \begin{align*}
        \norm{xy}            \leq (1 + \eps) \norm{x}\norm{y} \qquad\text{ and }\qquad
        \norm{(xy)z - x(yz)} \leq \eps \norm{x} \norm{y} \norm{z},
    \end{align*}
    for all $x, y, z \in \cA$.
    The involution satisfies
    \begin{align*}
        \norm{x^*} = \norm{x} \qquad \text{ and } \qquad \norm{x^* x} \geq (1 - \eps) \norm{x}^2
    \end{align*}
    for all $x \in \cA$, and $(xy)^* = y^* x^*$ for all $x,y \in \cA$.
    Finally, the unit satisfies
    \begin{align*}
        \norm{xe - x} \leq \eps\norm{x}, \qquad \norm{ex - x} \leq \eps\norm{x}, \qquad \abs{\norm{e} - 1} \leq \eps
    \end{align*}
    for all $x \in \cA$, and $e^* = e$.
    We say that $\cA$ is \emph{exactly unital} if $xe = x = ex$ and $\norm{e} = 1$.
\end{definition}

We will mostly consider \emph{finite dimensional} $\eps$-$C^*$-algebras.
Note that an $\eps$-$C^*$-algebra for $\eps = 0$ is the same as a $C^*$-algebra, which we will occasionally call an \emph{exact} $C^*$-algebra to avoid confusion.

We can also define approximate homomorphisms between approximate or exact algebras.

\begin{definition}\label{dfn:approx homomorphism}
    Let $\cA_i$ be an $\eps_i$-$C^*$-algebra for $i = 1,2$ and let $0 \leq \delta  < 1$. A \emph{$\delta$-homomorphism} is a bounded linear map
    \begin{align*}
        \phi : \cA_1 \to \cA_2
    \end{align*}
    such that $\phi(x^*) = \phi(x)^*$ for all $x \in \cA_1$ and
    \begin{align*}
        \norm{\phi(xy) - \phi(x)\phi(y)} \leq \delta \norm{x}\norm{y} \qquad \text{ and } \qquad \norm{\phi(e) - e} \leq \delta
    \end{align*}
    for all $x,y \in A_1$.
    A \emph{$\delta$-isomorphism} is a bijective $\delta$-homomorphism.
\end{definition}
Note that in this definition, the $\eps_i$ can equal zero, so we can have an approximate homomorphism between exact algebras.
In this case, it is well-known that for many $C^*$-algebras, approximate homomorphisms are close to exact ones~\cite{johnson1988approximately}, see
Lemma 8.2 of~Ref.~\cite{kitaev2024almost} for a proof in the finite-dimensional setting.

\begin{theorem}\label{thm:make homomorphism exact}
    If $\phi : \cA \to \cB$ is a $\delta$-homomorphism and $\cA$ and $\cB$ are exact $C^*$-algebras, there exists an exact homomorphism $\psi : \cA \to \cB$ with $\norm{\psi - \phi} = \bigO(\delta)$.
\end{theorem}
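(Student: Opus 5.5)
The plan is to reduce to the case where $\cA$ is a single matrix block and $\phi$ is unital, and then perturb $\phi$ into an exact homomorphism by an averaging (Johnson-type) argument built on amenability of the unitary group of $\cA$. The key point is that every finite-dimensional $C^*$-algebra has a compact unitary group with Haar measure; this is what makes all constants independent of dimension.

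\emph{Step 1: make $\phi$ (almost) unital and $*$-preserving.}
Let $p=\phi(e)$. Then $p=p^*$ and $\norm{p^2-p}\le\delta$, so $p$ lies within $O(\delta)$ of its spectral projection $q=\chi_{[1/2,\infty)}(p)$. Because $\norm{p-e}\le\delta$ and $\delta<1$, in fact $q=e$, so $\phi$ is already close to unital. Nothing more is needed here; a non-unital target would be handled by compressing with $q$.

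\emph{Step 2: make $\phi$ multiplicative on unitaries, then build a homomorphism.}
For unitaries $u\in U(\cA)$, the element $\phi(u)$ is $O(\delta)$-close to a unitary, because $\phi(u)^*\phi(u)\approx\phi(u^*u)=\phi(e)\approx e$. Take its polar part $w(u)$. The map $u\mapsto w(u)$ is an $O(\delta)$-representation of the compact group $U(\cA)$: it satisfies $\norm{w(uv)-w(u)w(v)}=O(\delta)$. This map is not measurable a priori, so I would work with $\phi$ itself, which is continuous.

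Now apply the standard Kazhdan/Grove–Karcher–Ruh iteration for near-representations of compact groups. Define
\[
\phi_1(x)=\int_{U(\cA)} w(u)^*\,\phi(ux)\,\mathrm{d}u .
\]
Each such step squares the defect, $\delta\mapsto O(\delta^2)$, and moves $\phi$ by only $O(\delta)$. Iterating therefore converges to a map $\psi$ with $\psi(ux)=\psi(u)\psi(x)$ for all $u$ and $x$, and $\norm{\psi-\phi}=O(\delta)$. Since unitaries span $\cA$, $\psi$ is multiplicative. Finally, symmetrize, replacing $\psi$ by $(\psi+\psi^{*})/2$ where $\psi^{*}(x):=\psi(x^*)^*$, and redo one more correction step so that $\psi$ is $*$-preserving as well.

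An alternative for Step 2 is to treat blocks separately. Write $\cA=\bigoplus_k M_{n_k}$ and let $e_k$ be the minimal central projections. Their images $\phi(e_k)$ are near-projections that are nearly orthogonal, and one rounds them to exact orthogonal projections summing to $e$. On each block, the near-matrix-units $\phi(e_{ij})$ are rounded by taking $f_{11}$ to be the spectral projection of $\phi(e_{11})$ and $f_{i1}$ the polar part of $\phi(e_{i1})f_{11}$. The catch is that the errors then depend on $n_k$, which is why I prefer the averaging route.

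The main obstacle is keeping all constants independent of dimension. The averaging argument achieves this because Haar integration is contractive and each step's defect bound uses only the $\delta$-homomorphism inequalities. Convergence of the iteration requires $\delta$ below a universal threshold; for $\delta$ above that threshold, the statement is trivial with a large enough constant in $O(\delta)$, since any exact unital homomorphism has norm $1$ and $\norm{\phi}\le 1+O(1)$. That last claim needs care: a $\delta$-homomorphism bounds $\norm{\phi}$ only via $\norm{\phi(x)}^2\approx\norm{\phi(x^*x)}$, so I would first establish $\norm{\phi}\le 1+O(\delta)$ using the $C^*$-identity in $\cB$.
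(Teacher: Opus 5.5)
The paper does not prove this statement itself; it defers to Johnson's theorem and to Lemma~8.2 of Kitaev's paper. Averaging over the compact group $U(\cA)$ is the right kind of idea, but the step that carries all the weight in your proposal fails. The map $\phi\mapsto\phi_1$, $\phi_1(x)=\int w(u)^*\phi(ux)\,\mathrm{d}u$, does \emph{not} square the defect.

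Take any exact unital $*$-homomorphism $\pi:\cA\to\cB$ and set $\phi=(1+\delta)\pi$. It is $*$-preserving, $\norm{\phi(e)-e}=\delta$, and $\phi(xy)-\phi(x)\phi(y)=-\delta(1+\delta)\pi(xy)$, so it is a $2\delta$-homomorphism. But $w(u)=\pi(u)$, hence $\phi_1=\phi$. So your iteration has non-multiplicative fixed points, and the claimed limit with $\psi(ux)=\psi(u)\psi(x)$ need not exist.

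The mechanism is visible at first order. Write $\phi=\pi+\eta$. Any step $\phi_1(x)=\int\omega(u)\phi(ux)\,\mathrm{d}u$ with $\omega(u)\approx\pi(u)^*$ has defect $\phi_1(ab)-\phi_1(a)\phi_1(b)=-\pi(a)\bigl(\phi_1(e)-e\bigr)\pi(b)$ up to second-order terms. In other words, the step removes everything except the unit error of $\phi_1$ itself. With your choice, $w(u)^*\phi(u)=\abs{\phi(u)}$, so $\phi_1(e)=\int\abs{\phi(u)}\,\mathrm{d}u$. This is off from $e$ by order $\delta$ even when $\phi(e)=e$ exactly. For example, take $\cA=\cB=\mathbb{C}^2$ and $\phi(p_1)=(1+t)p_1-tp_2$, $\phi(p_2)=(1+t)p_2-tp_1$; one step gives $\phi_1=(1+t)\id+O(t^2)$, whose defect is still of order $t$. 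So normalizing the unit once in your Step~1 does not help.

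There are two ways to repair the step. Using $\phi(u)^{-1}$ in place of $w(u)^*$ (Kazhdan's form $\int\rho(h)^{-1}\rho(hg)\,\mathrm{d}h$) gives $\phi_1(e)=e$ exactly and kills the first-order defect. Renormalizing by $\phi_1(e)^{-1/2}\phi_1(\cdot)\phi_1(e)^{-1/2}$ after every step also works at first order. Even then, the genuine $O(\delta^2)$ bound with a universal constant is the real content of the theorem and has to be proved, not asserted.

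Two further steps need repair.

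First, replacing an exact homomorphism $\psi$ by $(\psi+\psi^*)/2$ destroys multiplicativity, and one more correction step only brings the defect to $O(\delta^2)$, not to zero. The standard finish is the unitarian trick. For an exact unital multiplicative $\psi$, set $s=\int\psi(u)^*\psi(u)\,\mathrm{d}u$. Then $\psi(u)^*s\psi(u)=s$, so $s^{1/2}\psi(\cdot)s^{-1/2}$ maps unitaries to unitaries and is therefore a $*$-homomorphism. Moreover $\norm{s-e}=O(\delta)$, because $\phi(u)^*\phi(u)=\phi(u^*)\phi(u)\approx e$.

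Second, your reduction for $\delta$ above the threshold presupposes that some exact unital homomorphism $\cA\to\cB$ exists, and this can fail. The map $x\mapsto x\oplus\tfrac12\tau(x)$ from $2\times2$ to $3\times3$ matrices ($\tau$ the normalized trace) is a $\tfrac34$-homomorphism, yet no unital homomorphism between these algebras exists. So the statement can only hold for $\delta$ below a universal constant, and that is the only regime you should try to prove.
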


Here we collect a few basic properties of approximate algebras, see \cite{kitaev2024almost} for proofs.

\begin{lemma}
    Let $\cA$ be a $\eps$-$C^*$-algebra.
    \begin{enumerate}
        \item We have $\norm{x^*x} \leq (1 + \eps)\norm{x}^2$ for all $x \in \cA$.
              \item\label{it:nearby exact unit} There exists a $\bigO(\eps)$-isomorphism of $\cA$ with an exactly unital $\bigO(\eps)$-$C^*$-algebra.
        \item If $\phi : \cA \to \cB$ is a $\delta$-isomorphism, then $\phi^{-1}$ is a $\bigO(\delta)$-isomorphism.
    \end{enumerate}
\end{lemma}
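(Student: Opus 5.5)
The three items follow directly from the axioms of \cref{dfn:approx algebra,dfn:approx homomorphism}. I would prove them in order, with item (2) as the only substantial step.

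\emph{Item (1).} I would use only submultiplicativity and the involution. From $\norm{xy}\le(1+\eps)\norm{x}\norm{y}$ with $y=x$ in place of the second factor, together with $\norm{x^*}=\norm{x}$, we get $\norm{x^*x}\le(1+\eps)\norm{x^*}\norm{x}=(1+\eps)\norm{x}^2$. No further work is needed.

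\emph{Item (2).} The plan is to replace the multiplication so that $e$, suitably modified, becomes an exact unit.
\begin{itemize}
\item First I would note that $L_e:x\mapsto ex$ and $R_e:x\mapsto xe$ satisfy $\norm{L_e-\id},\norm{R_e-\id}\le\eps$. Hence both are invertible by Neumann series, with inverses within $O(\eps)$ of $\id$.
\item Next I would define a new product $x\cdot y := L_e^{-1}(x)\,R_e^{-1}(y)$. Then $e\cdot y = L_e^{-1}(e)\,R_e^{-1}(y)$. This is not immediately $y$, because $L_e^{-1}(e)\neq e$ in general.
\item The cleaner route is to take $u := $ the exact unit obtained as follows. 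Set $f:=L_e^{-1}(e)$, and use the standard trick that $x\mapsto L_f^{-1}$-type corrections, or equivalently the product $x\cdot y := (L_e^{-1}x)\,y$, has left unit $e$ exactly: $e\cdot y=(L_e^{-1}e)\,y$. Here $L_e^{-1}e$ satisfies $e(L_e^{-1}e)=e$, so it is again only approximately a unit.
\item I would therefore follow Kitaev's approach of symmetrizing: define $x\cdot y := L_e^{-1}(x)\,e\,R_e^{-1}(y)$, suitably parenthesized. Near-associativity then gives $e\cdot y = y + O(\eps)$. One then iterates, or solves by a fixed-point/Neumann argument, for a product with an exact two-sided unit.
\item After fixing the product, I would replace the norm by the equivalent norm $\norm{x}' := \norm{\,y\mapsto x\cdot y\,}$ (the left-regular norm), rescaled so that $\norm{e}'=1$.
\item Finally, I would check that every axiom of \cref{dfn:approx algebra} degrades by only $O(\eps)$. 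The involution must also be made compatible with the new unit, for instance by symmetrizing $x\cdot y$ and $(y^*\cdot x^*)^*$.
\end{itemize}
The identity map, viewed as a map from $\cA$ to the new structure, then satisfies $\norm{\phi(xy)-\phi(x)\cdot\phi(y)}=O(\eps)\norm{x}\norm{y}$, so it is the required $O(\eps)$-isomorphism. This is the main obstacle: arranging an exact two-sided unit while keeping $*$-compatibility and the $C^*$-type inequality. I would rely on the cited construction in~\cite{kitaev2024almost} for the details of this step.

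\emph{Item (3).} The first task is to show $\phi^{-1}$ is bounded with $\norm{\phi^{-1}}=O(1)$. For this I would use $\norm{\phi(x)^*\phi(x)}\ge(1-\eps_2)\norm{\phi(x)}^2$ combined with the $C^*$-type lower bound in $\cA_1$. Writing $\norm{x}^2\lesssim\norm{x^*x}$ and $\phi(x^*x)\approx\phi(x)^*\phi(x)$ gives $\norm{x}^2\le C\norm{\phi(x^*x)}$. Bounding $\norm{x^*x}$ from below by its image requires some form of isometry, which I would obtain from the finite-dimensional spectral structure as in~\cite{kitaev2024almost}. Given this bound, set $a=\phi^{-1}(x)$ and $b=\phi^{-1}(y)$. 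Then
\[
\phi^{-1}(xy)-\phi^{-1}(x)\phi^{-1}(y)=\phi^{-1}\bigl(\phi(a)\phi(b)-\phi(ab)\bigr),
\]
whose norm is at most $\norm{\phi^{-1}}\,\delta\,\norm{a}\norm{b}=O(\delta)\norm{x}\norm{y}$. The unit and $*$ conditions transfer in the same way.
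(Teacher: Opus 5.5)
Item (1) is correct. The paper proves none of the three items itself; it only refers to \cite{kitaev2024almost}, so deferring details is not the problem. The problem is that the step you defer in item (3) cannot be carried out. With \cref{dfn:approx homomorphism} as stated, a bijective $\delta$-homomorphism need not satisfy $\|\phi^{-1}\|=O(1)$. Take $\cA_1=\cA_2=\mathbb{C}^2$ with the sup norm and $\phi(a,b)=(a,(1-\lambda)a+\lambda b)$ with $\lambda=\delta/4$. This map is $*$-preserving, exactly unital and bijective. For $x=(a,b)$ and $y=(a',b')$,
\[
\phi(xy)-\phi(x)\phi(y)=\bigl(0,\ \lambda(1-\lambda)(a-b)(a'-b')\bigr),
\]
so $\phi$ is a $\delta$-isomorphism. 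Yet $\phi^{-1}(1,0)=(1,-(1-\lambda)/\lambda)$ has norm of order $1/\delta$. For $x=(1,0)$ one gets $\|\phi^{-1}(x^2)-\phi^{-1}(x)^2\|=(1-\lambda)/\lambda^2$, of order $\delta^{-2}$. The map $\phi$ is a tiny perturbation of the non-injective homomorphism $(a,b)\mapsto(a,a)$. Bijectivity therefore carries no quantitative information, and no argument from finite-dimensional spectral structure can produce a lower bound $\|\phi(x)\|\ge c\|x\|$. Item (3) needs an additional near-isometry hypothesis, such as $\|\phi^{-1}\|=O(1)$ or $\|\phi(x)\|\ge(1-O(\delta))\|x\|$; this is presumably built into what \cite{kitaev2024almost} calls an isomorphism. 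Once that is assumed, your identity $\phi^{-1}(xy)-\phi^{-1}(x)\phi^{-1}(y)=\phi^{-1}\bigl(\phi(a)\phi(b)-\phi(ab)\bigr)$, together with $\phi^{-1}(e)-e=\phi^{-1}\bigl(e-\phi(e)\bigr)$, finishes item (3) immediately. No spectral theory is needed.

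Your sketch of item (2) also never reaches a construction. The symmetrized product only gives $e\cdot y=y+O(\eps)$, which the original product already satisfies, so the proposed iteration has nothing to improve. An elementary route exists. Use Hahn--Banach, symmetrized with the help of $e^*=e$, to pick a functional $\omega$ with $\omega(e)=1$, $\omega(x^*)=\overline{\omega(x)}$ and $\|\omega\|\le 1/\|e\|$. Put $x_0:=x-\omega(x)e$ and define
\[
x\cdot y:=\omega(x)\omega(y)\,e+\omega(x)\,y_0+\omega(y)\,x_0+x_0y_0 .
\]
Then $e\cdot y=y$ and $x\cdot e=x$ exactly, and $(x\cdot y)^*=y^*\cdot x^*$. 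The difference $xy-x\cdot y=\omega(x)\omega(y)(ee-e)+\omega(x)(ey_0-y_0)+\omega(y)(x_0e-x_0)$ has norm $O(\eps)\|x\|\|y\|$. Now renorm by $\|x\|':=\max\bigl(\sup_{\|y\|\le 1}\|x\cdot y\|,\ \sup_{\|y\|\le 1}\|y\cdot x\|\bigr)$. This gives $\|e\|'=1$ and $\|x^*\|'=\|x\|'$, with $\|x\|'$ within a factor $1+O(\eps)$ of $\|x\|$. Every axiom of \cref{dfn:approx algebra} then holds with $O(\eps)$ constants. The identity map is the desired $O(\eps)$-isomorphism, and since it is nearly isometric its inverse is one as well.
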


In light of \cref{it:nearby exact unit} we will from now on only consider exactly unital $\eps$-$C^*$-algebras.
The main result of Ref.~\cite{kitaev2024almost} is that every $\eps$-$C^*$-algebra is close to an exact $C^*$-algebra:

\begin{theorem}[Theorem 2.3 in Ref.~\cite{kitaev2024almost}] \label{thm:algebras-kitaev}
    There exists $\eps_0 >0$ such that for any finite dimensional $\eps$-$C^*$-algebra $\cA$ with $\eps < \eps_0$, there exists a $C^*$-algebra $\cB$ and a $\bigO(\eps)$-isomorphism
    \begin{align*}
        \phi : \cB \to \cA.
    \end{align*}
\end{theorem}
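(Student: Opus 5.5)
The plan is to reduce the abstract problem to a concrete one inside an exact finite-dimensional $C^*$-algebra, and then round a nearly multiplicatively closed $*$-subspace to an exact algebra. Throughout, $\cA$ is assumed exactly unital, as permitted by the preceding lemma.

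\emph{Step 1: an approximate faithful representation.} I would represent $\cA$ by left multiplication, $L_x(y) = xy$, acting on the finite-dimensional space $\cA$. First I would replace the given Banach norm by an equivalent norm under which the $L_x$ behave like operators on a Hilbert space. For instance, I would try to build a positive functional from the approximate $C^*$-identity $\norm{x^*x}\geq(1-\eps)\norm{x}^2$, via a GNS-type construction.

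Approximate associativity gives $\norm{L_{xy}-L_xL_y}\leq \eps\norm{x}\norm{y}$. Approximate unitality gives $\norm{L_x}\geq \norm{x}/(1+\eps)$, so $L$ is injective and nearly isometric. The $*$-compatibility $L_{x^*}\approx L_x^*$ should follow from the GNS inner product, up to $O(\eps)$. The outcome of this step is an $O(\eps)$-homomorphism $L:\cA\to \cM := B(\cH)$ that is nearly isometric. Its image $V=L(\cA)$ is then a $*$-subspace, up to $O(\eps)$, satisfying
\begin{align*}
    V\cdot V \subset_{O(\eps)} V,
\end{align*}
where products are taken in the exact algebra $\cM$.

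\emph{Step 2: rounding a nearly closed subspace.} The heart of the argument is to show that a self-adjoint, unital subspace $V\subset\cM$ with $VV\subset_{\delta}V$ lies within distance $O(\delta)$ of an exact $C^*$-subalgebra $\cB$, with constants independent of $\dim\cM$. This is the step I expect to be the main obstacle; naive approaches, such as taking the algebra generated by $V$, lose all control.

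My first attempt would go through operator systems. I would construct a unital completely positive map $E:\cM\to\cM$ with
\begin{align*}
    E|_V\approx \id_V, \qquad \image E\subset_{O(\delta)} V, \qquad \norm{E^2-E}_{\cb}=O(\delta).
\end{align*}
A natural candidate is a Haar average over unitaries of the commutant of an approximate algebra, or a compression $x\mapsto \sum_i v_i x v_i^*$ built from an approximate orthonormal frame of $V$. I would then replace $E$ by an exact UCP idempotent $E_\infty$ with $\norm{E_\infty-E}_{\cb}=O(\delta)$, using holomorphic functional calculus for the near-idempotent $E$ together with a fixed-point argument in the compact convex set of UCP maps to restore complete positivity.

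By Choi--Effros, $\image E_\infty$ is an exact $C^*$-algebra $\cB$ under the product $a\circ b=E_\infty(ab)$. Moreover, $\cB$ is $O(\delta)$-close to $V$, and its product is $O(\delta)$-close to the ambient product on elements of $V$. If the complete positivity of the stabilized idempotent cannot be achieved directly, my fallback is a Newton-type iteration: correct the multiplication on $V$ by a term solving an approximate Hochschild-type coboundary equation, averaging over unitary groups, with errors shrinking quadratically.

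\emph{Step 3: assembling the isomorphism.} Let $Q:\cB\to V$ be a near-identity linear bijection, for example $P_V$ restricted to $\cB$. Define
\begin{align*}
    \phi := L^{-1}\circ Q:\cB\to\cA.
\end{align*}
This map is bijective, and it is $*$-preserving after symmetrization. Chaining the estimates, it satisfies $\phi(a\circ b)\approx\phi(a)\phi(b)$ and $\phi(e)\approx e$ up to $O(\eps)$, because $L$ is an $O(\eps)$-homomorphism with nearly isometric inverse on $V$. Hence $\phi$ is an $O(\eps)$-isomorphism.

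All of the above requires $\eps<\eps_0$ so that the perturbative steps are well-defined. Those are the functional-calculus step and the invertibility of $Q$.
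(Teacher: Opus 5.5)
\textbf{There is a genuine gap: your outline leaves unproved exactly the two steps that carry the content of the theorem.}

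\emph{Step~1 is not available as stated.} An $\eps$-$C^*$-algebra has no positive cone and no functional calculus. Take a functional with $f(e)=1=\norm{f}$. The GNS ingredients are positivity of $f(x^*x)$ and the bound $f(x^*y^*yx)\le\norm{y}^2 f(x^*x)$, and in the exact case the latter uses a square root of $\norm{y}^2e-y^*y$. Here these hold at best up to additive errors of order $\eps\norm{x}^2\norm{y}^2$ in the original norm. Such errors are not small relative to the GNS norm $f(x^*x)^{1/2}$, so neither $\norm{L_x}_{B(\cH)}\le(1+O(\eps))\norm{x}$ nor $L_{x^*}\approx L_x^*$ follows. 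Choosing some other Hilbertian renorming can distort operator norms by dimension-dependent factors. A nearly isometric, nearly $*$-preserving $O(\eps)$-homomorphism into some $B(\cH)$ is a concrete-embedding statement of essentially the same strength as the theorem, not a preliminary.

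\emph{Step~2, which you flag as the main obstacle, is likewise only asserted.}
The exact commutant of a merely near-closed $V$ is generically trivial, so twirling over it is the identity map. The frame compression comes with no dimension-free estimate. Rounding a near-idempotent $E$ by holomorphic functional calculus (as in \cref{lemma:round}) keeps $*$-preservation but destroys complete positivity.

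The proposed fixed-point repair is unspecified. Idempotent UCP maps do not form a convex set. The natural UCP idempotent attached to $E$, the limit of $\frac1N\sum_{k\le N}E^k$, can be far from $E$. For example, $E=(1-\delta)\id+\delta\tau$ with $\tau(x)=\tr(x)I/\tr(I)$ has $\norm{E^2-E}_{\cb}\le 2\delta$, yet this limit is $\tau$.

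The Newton/Hochschild fallback fails structurally. The averaging arguments behind results like \cref{thm:make homomorphism exact} average over the unitary group of an \emph{exact} domain algebra. You have no such algebra until $\cB$ has already been found.

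\textbf{The missing idea is a dimension-free mechanism that actually \emph{finds} $\cB$, i.e.\ its block structure.} In the route the paper sketches (following Kitaev), a topological argument replaces spectral calculus. It produces approximate projections as fixed points of a continuous map. These give an approximate maximal abelian subalgebra and approximate matrix units for the nearly one-dimensional spaces $p_i\cA p_j$, hence a $\delta$-isomorphism from an explicit exact $\cB$. Naive iteration lets $\delta$ grow with dimension. So an error-reduction step is then needed to recover $O(\eps)$: a $\delta$-homomorphism from an exact $\cB$ into an $\eps$-$C^*$-algebra can be replaced by an $O(\eps)$-homomorphism.

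Note also that the paper's logic runs opposite to yours. \Cref{lemma:eps-subalgebra} deduces the concrete statement \emph{from} \cref{thm:algebras-kitaev}: a near-closed subspace equipped with a $*$-preserving idempotent $Q$ with $\norm{Q}\le1+\eps$ is near an exact subalgebra. The paper also remarks that without such a $Q$ it is unclear even how to define an approximate product. Your Step~2 asks for the concrete statement without $Q$, which is stronger.

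Finally, in Step~3 the Hilbert--Schmidt projection onto a subspace $V$ that is not an algebra need not have operator norm $1+O(\delta)$. So $P_V|_{\cB}$ is not automatically a near-identity. Inverting $E_\infty|_V$ would serve instead, but only once Step~2 is in hand.
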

\noindent
In particular, the map $\phi$ is a linear bijection.

We sketch some ideas in the proof for interested readers.
It loosely follows the classification of finite dimensional $C^*$-algebras as matrix algebras.
Suppose $\cA$ is an exact finite dimensional $C^*$-algebra.
One can way to show that $A$ is isomorphic to a matrix algebra, is by starting from a maximal commutative subalgebra, which is spanned by a collection of projections $p_i$.
One can then show that $p_i \cA p_j$ is a subspace of dimension either 0 or 1. In the second case, let $P_{ij}$ be a basis vector for $p_i \cA p_j$.
One can then show that the $P_{ij}$ form a collection of matrix units, explicitly reconstructing $\cA$ as a matrix algebra.
For $\eps$-$C^*$-algebras, Ref.~\cite{kitaev2024almost} uses a similar strategy, with additional subtleties.
In particular, in order to construct the approximate version of the maximal commutative subalgebra, one needs to find nontrivial approximate projections.
In the exact case, one can do so through spectral calculus, but this is not quite available in the approximate case.
This is solved in Ref.~\cite{kitaev2024almost} through a topological argument, constructing approximate projections as the fixed point of a certain continuous map.
This allows one in principle to iteratively construct a collection of approximate projections $p_i$, and subspaces $p_i \cA p_j$ as in the exact case.
However, doing so naively accumulates error, leading to an error bound which is dimension dependent.
However, this can be resolved using an error-reduction technique: given a $\delta$-homomorphism from an exact algebra $\cB$ to an $\eps$-$C^*$-algebra $\cA$, there also exists a $\bigO(\eps)$-homomorphism from $\cB$ to $\cA$.

We gave an abstract definition of an $\eps$-$C^*$-algebra.
One concrete way in which such algebras may appear is as subspaces $\cA \subset B(\cH)$ which are approximately closed under multiplication.
\begin{definition}\label{def:eps-subalgebra}
    A subspace $S \subset B(\cH)$ is \emph{$\eps$-closed under multiplication} if for any $x, y \in S$ there exists $z \in S$ such that $\norm{xy-z} \leq \eps \norm{x}\norm{y}.$
    Let $Q : B(\cH) \to B(\cH)$ be a linear map.
    We call $(S, Q)$ an \emph{$\eps$-subalgebra} if moreover $Q^2 = Q$, $Q(x^*) = Q(x)^*$ for all $x \in B(\cH)$ and
    \begin{align*}
        \mathrm{Image}(Q) = S \qquad \text{ and } \qquad \norm{Q} \leq 1 + \eps.
    \end{align*}
\end{definition}

In this case, the subspace $S$ is contained in an ambient space $B(\cH)$ with an exact multiplication. Using only the property that $xy$ is approximately in  $S$, without using $Q$,
one might try to construct an $\eps$-$C^*$-algebra structure on $S$ by defining a new bilinear product $x, y \mapsto x \cdot y \in S$ such that $\norm{x \cdot y - xy} \leq \eps \norm{x}\norm{y}$.
However, it is not obvious how to do so for an arbitrary subspace which is approximately closed under multiplication.
However, if  $S$ is an $\eps$-subalgebra, defined as above and thus equipped with $Q$, then $S$ can be given the structure of an $\eps$-$C^*$-algebra by defining a product on $S$ through $x \cdot y = Q(xy)$.

\begin{lemma}\label{lemma:eps-subalgebra}
    Suppose $(S \subset B(\cH), Q)$ is an $\eps$-subalgebra (\cref{def:eps-subalgebra}).
    Then the following holds:
    \begin{enumerate}
        \item\label{it:approx product} The product $x \cdot y = Q(xy)$ gives $S$ the structure of an $\eps$-$C^*$-approximate algebra.
        \item\label{it:close subalgebra} If $S$ is finite dimensional, there exists an exact subalgebra $\cA \subset B(\cH)$ such that $\cA \approxEq{\bigO(\eps)} S$.
    \end{enumerate}
\end{lemma}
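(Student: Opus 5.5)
We verify the axioms of \cref{dfn:approx algebra} for $x\cdot y := Q(xy)$ on $S$, using the norm inherited from $B(\cH)$, the adjoint of $B(\cH)$ (which preserves $S$, since $Q(x^*)=Q(x)^*$ and $Q$ is idempotent with image $S$), and unit $e=I$. The key preliminary observation is that $Q$ nearly fixes near-elements of $S$. If $w\in B(\cH)$ and $s\in S$ with $\norm{w-s}\le\eta$, then $Q(s)=s$ gives
\begin{align*}
\norm{Q(w)-w}\le \norm{Q(w-s)}+\norm{s-w}\le (2+\eps)\eta .
\end{align*}
Applied to $w=xy$ with $\eta=\eps\norm{x}\norm{y}$, this yields $\norm{x\cdot y - xy}\le (2+\eps)\eps\norm x\norm y=O(\eps)\norm x\norm y$. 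All axioms will follow from this estimate, so strictly we obtain an $O(\eps)$-$C^*$-algebra; the statement's ``$\eps$'' should be read up to a constant, consistent with the paper's conventions.

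For part~\ref{it:approx product}, the estimates go as follows.
\begin{itemize}
\item Norm of the product: $\norm{x\cdot y}\le \norm Q\,\norm{xy}\le(1+\eps)\norm x\norm y$.
\item Associativity: $(x\cdot y)\cdot z$ and $x\cdot(y\cdot z)$ are each within $O(\eps)\norm x\norm y\norm z$ of $xyz$, by applying the estimate twice and using $\norm{x\cdot y}\le 2\norm x\norm y$.
\item Involution: $(x\cdot y)^*=Q(y^*x^*)=y^*\cdot x^*$, and $\norm{x^*}=\norm x$.
\item $C^*$-type bound: $\norm{x^*\cdot x}\ge \norm{x^*x}-O(\eps)\norm x^2=(1-O(\eps))\norm x^2$.
\end{itemize}
For the unit we must have $I\in S$. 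This holds if $Q$ is unital, which is the case in our intended use; otherwise one takes $e=Q(I)$. Since $I\cdot x=Q(x)=x$, the algebra is even exactly unital.

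For part~\ref{it:close subalgebra}, we first apply \cref{thm:algebras-kitaev} to obtain an exact $C^*$-algebra $\cB$ and an $O(\eps)$-isomorphism $\phi:\cB\to(S,\cdot)$. Composing with the inclusion $\iota:S\to B(\cH)$ gives a map $\iota\phi:\cB\to B(\cH)$ which is an $O(\eps)$-homomorphism into the exact algebra $B(\cH)$, because $x\cdot y$ and $xy$ differ by $O(\eps)$. We also need $\phi$ to be bounded, $\norm\phi\le 1+O(\eps)$. This should follow from the approximate $C^*$-identity: $\norm{\phi(b)}^2\approx\norm{\phi(b^*b)}$, and comparing with spectral or positivity arguments in $\cB$, as in Kitaev. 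This point requires care.

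By \cref{thm:make homomorphism exact}, which needs only finite-dimensionality, there is an exact homomorphism $\psi:\cB\to B(\cH)$ with $\norm{\psi-\iota\phi}=O(\eps)$. Set $\cA:=\psi(\cB)$. If $\cH$ is infinite-dimensional, we may restrict to a finite-dimensional algebra containing $S$, or appeal to the finite-dimensional domain version. Both near-inclusions then follow.
\begin{itemize}
\item $S\subset_{O(\eps)}\cA$: every $s\in S$ equals $\phi(b)$ with $\norm b\le(1+O(\eps))\norm s$, by the bounded inverse from the lemma on $\delta$-isomorphisms. Hence $\norm{s-\psi(b)}=O(\eps)\norm s$.
\item $\cA\subset_{O(\eps)}S$: $\psi$ is isometric on $\cB$ after quotienting its kernel. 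The kernel is trivial because $\phi$ is a bijection and $\psi\approx\iota\phi$ is injective for small $\eps$. So $\psi(b)$ is within $O(\eps)\norm b=O(\eps)\norm{\psi(b)}$ of $\phi(b)\in S$.
\end{itemize}

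The main obstacle is the norm control on $\phi$ and $\phi^{-1}$: ensuring that the $O(\eps)$-isomorphism of \cref{thm:algebras-kitaev} is nearly isometric, so that the errors from \cref{thm:make homomorphism exact} scale with $\norm s$. I would first try to extract this from Kitaev's construction. Failing that, I would derive it from the approximate $C^*$-identity together with the multiplicativity defect.
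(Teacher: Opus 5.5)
Your proposal follows the paper's proof essentially step by step: the estimate $\norm{xy-x\cdot y}\le(2+\eps)\eps\norm{x}\norm{y}$ gives part~(1), and part~(2) is Kitaev's theorem, composition with the inclusion, rounding via \cref{thm:make homomorphism exact}, and taking the image. The paper deduces the near-equality from the rounded homomorphism being isometric, which tacitly needs the injectivity you check. The norm control you flag as the main obstacle is a one-liner that the paper also leaves implicit: since $\phi$ is $*$-preserving, $(1-O(\eps))\norm{\phi(b)}^2\le\norm{\phi(b)^*\cdot\phi(b)}\le\norm{\phi(b^*b)}+O(\eps)\norm{b}^2$, so $(1-O(\eps))\norm{\phi}^2\le\norm{\phi}+O(\eps)$ and hence $\norm{\phi}\le 1+O(\eps)$, and the same argument bounds $\norm{\phi^{-1}}$.
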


\begin{proof}
    Let $x, y \in S$, and let $z \in S$ such that $\norm{xy - z} \leq \eps \norm{x}\norm{y}$.
    Then
    \begin{align*}
        \norm{xy - x \cdot y} & = \norm{xy - Q(xy)} \leq \norm{xy - Q(z)} + \norm{Q(z) - Q(xy)}                 \\
                              & \leq \norm{xy - z} + \norm{Q}\norm{xy - z} \leq (2 + \eps)\eps \norm{x}\norm{y}
    \end{align*}
    using $Q(z) = z$ as $z \in S$.
    We now check the conditions in \cref{dfn:approx algebra} in order to prove \ref{it:approx product}.
    Firstly, for any $x,y \in S$,
    \begin{align*}
        \norm{x \cdot y} = \norm{Q(xy)} \leq \norm{Q}\norm{xy} \leq (1 + \eps)\norm{x}\norm{y}.
    \end{align*}
    Next, from the fact that $\norm{x \cdot y - xy} = \bigO(\eps)\norm{x}\norm{y}$ it follows that for $x, y, z \in S$
    \begin{align*}
        \norm{(x \cdot y)\cdot z - x \cdot (y \cdot z)} & = \bigO(\eps)\norm{x}\norm{y}\norm{z}.
    \end{align*}
    Since $Q(x^*) = Q(x)^*$ for all $x \in B(\cH)$, we have $(x \cdot y)^* = y^* \cdot x^*$.
    We have
    \begin{align*}
        \norm{x^* \cdot x} \geq \norm{x^*x} - \norm{x^* x - x^* \cdot x} = (1 - \bigO(\eps))\norm{x}^2.
    \end{align*}
    Finally, it is clear that $I$ is an exact unit for $S$.
    Next, we show \ref{it:close subalgebra}.
    By \cref{thm:algebras-kitaev} there exists a $\bigO(\eps)$-isomorphism
    $\phi : \cB \to S$ for a $C^*$-algebra $\cB$ (with respect to the product structure of \ref{it:approx product}).
    Let $\psi$ be the composition of $\phi$ with the inclusion of $S$ into $B(\cH)$.
    Since $\norm{x \cdot y - xy} = \bigO(\eps)$, it is immediate that $\psi$ is also a $\bigO(\eps)$-homomorphism.
    By \cref{thm:make homomorphism exact} there exists an exact homomorphism $\theta : \cB \to \cB(\cH)$ with
    \begin{align}\label{eq:theta close to psi}
        \norm{\theta - \psi} = \bigO(\eps).
    \end{align}
    We let $\cA = \mathrm{Image}(\theta)$.
    Note that since $\theta$ is an isometry, by \cref{eq:theta close to psi} $\psi$ is approximately an isometry in the induced operator norm.
    This, together with \cref{eq:theta close to psi} implies that $\cA \approxEq{\bigO(\eps)} S$.  
\end{proof}

\subsection{Approximately idempotent maps}

\begin{lemma}[Rounding nearly idempotent maps]\label{lemma:round}
    Suppose that $F : L(\cH) \to L(\cH)$ linear satisfies $\norm{F - F^2} \leq \eps \leq \frac18$ in some submultiplicative norm~$\norm{\cdot}$.
    Then there exists $\round(F) := Q$ such $Q^2 = Q$, and
    \begin{align*}
        \norm{F - Q} \leq O(\eps) (\norm{F} + 1)
    \end{align*}
    Moreover, if $F(x^*)=F(x)^* \; \forall x \in  L(\cH)$, then $Q$ can be chosen with the same property.
\end{lemma}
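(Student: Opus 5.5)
We will construct $Q$ by the holomorphic functional calculus applied to $F$, viewed as an element of the Banach algebra of linear maps on $L(\cH)$ equipped with the given submultiplicative norm. The spectral mapping theorem shows that $\lambda - \lambda^2$ ranges over the spectrum of $F - F^2$ as $\lambda$ ranges over the spectrum of $F$. Since the spectral radius is bounded by the norm, every $\lambda \in \sigma(F)$ satisfies $\abs{\lambda(1-\lambda)} \le \eps \le \frac18$. Hence $\lambda$ lies within distance $O(\eps)$ of either $0$ or $1$: if $\abs{\lambda} \ge \frac14$ and $\abs{1-\lambda}\geq \frac14$, then $\abs{\lambda(1-\lambda)} \ge \frac1{16}$, so both alternatives cannot fail, and the small factor is then at most $4\eps$. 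In particular the spectrum avoids the line $\mathrm{Re}\,\lambda = \frac12$. We define
\begin{align*}
    Q := \frac{1}{2\pi i}\oint_{\Gamma} (z - F)^{-1}\, \mathrm{d}z,
\end{align*}
where $\Gamma$ is the circle $\abs{z-1} = \frac12$. This is the Riesz projection for the part of the spectrum near $1$, so $Q^2=Q$ holds automatically and $Q$ is a polynomial-limit in $F$, hence commutes with $F$.

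To bound $\norm{F-Q}$, we use the identity
\begin{align*}
    F - Q = \frac{1}{2\pi i}\oint_{\Gamma_0\cup\Gamma_1} \bigl(z\,\chi(z) \bigr)^{\!\!}\,(z-F)^{-1}\mathrm{d}z - \ldots,
\end{align*}
or more cleanly the resolvent expression $F - Q = \frac1{2\pi i}\oint (z - \mathbbm{1}_{\abs{z-1}=1/2}(z))(z-F)^{-1}\mathrm{d}z$. Here the contour consists of the circles of radius $\frac12$ around $0$ and around $1$, and the bracketed function equals $z$ on the first circle and $z-1$ on the second. A more robust route is the algebraic one. Write $D := F - F^2$, and note that for $z$ on these circles,
\begin{align*}
    (z-F)\bigl((1-z) - F\bigr)...
\end{align*}
we instead use $(z - F)(1 - z - F)\cdot(-1) = z(1-z)\mathbbm 1 - ... $; explicitly, $(z-F)(z-1+F) = z^2 - z + D$. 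Hence $(z-F)^{-1} = (z-1+F)\,(z^2-z+D)^{-1}$. On the contours, $\abs{z^2-z} = \frac14$ while $\norm{D}\le\frac18$, so a Neumann series bounds $\norm{(z^2-z+D)^{-1}} \le 8$. This gives $\norm{(z-F)^{-1}} \le 8(\norm F + 2)$ uniformly. It also gives a first-order expansion in $D$: replacing $(z^2-z+D)^{-1}$ by $(z^2-z)^{-1}$ makes the integrals computable by residues, yielding exactly $F$ in place of the $F-Q$ combination. The error is $O(\norm{D})$ times $(\norm F+1)$, which gives $\norm{F-Q}\le O(\eps)(\norm F+1)$. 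We expect this bookkeeping, choosing the expansion so that the zeroth-order term reproduces $F$ exactly and so that the dependence stays linear in $\norm F$, to be the main obstacle. If it misbehaves, we fall back on the explicit formula for $Q$ as the projection $(1+4(F-F^2)\cdot(-1))^{-1/2}$-type expression $Q=\tfrac12\bigl(1+(2F-1)(1-4D)^{-1/2}\bigr)$, where $(1-4D)^{-1/2}$ is a convergent power series since $4\norm D\le \frac12$. This formula gives $Q^2=Q$ directly, because $(2F-1)^2 = 1-4D$ and the factors commute, and it gives $\norm{F-Q} = \tfrac12\norm{(2F-1)((1-4D)^{-1/2}-1)} \le O(\eps)(2\norm F+1)$ immediately. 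This formula is in fact the cleanest choice, and we will present it as the main construction.

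For the $*$-property, define the conjugate-linear involution on maps by $F^\dagger(x) := F(x^*)^*$. It is linear, satisfies $(FG)^\dagger = F^\dagger G^\dagger$, and fixes the identity. Therefore if $F^\dagger = F$, then $D^\dagger = D$, and every power-series expression in $F$ with real coefficients is also fixed. The series for $(1-4D)^{-1/2}$ has real coefficients, so $Q^\dagger = Q$, which is the claimed property.
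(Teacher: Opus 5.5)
Your final construction $Q=\tfrac12\bigl(1+(2F-1)(1-4D)^{-1/2}\bigr)$ is exactly the paper's $Q=\tfrac12(I+\mathrm{sign}(X))$ with $X=2F-1$, since $X^2=1-4D$; your norm bound and your $F\mapsto F^\dagger$ argument for the $*$-property match the paper's reasoning, and the latter spells out in more detail what the paper calls clear from the power series. You should simply delete the Riesz-projection detour, which is unfinished and contains small errors: for instance, the $\tfrac14$-threshold argument for locating $\sigma(F)$ only works for $\eps<\tfrac1{16}$.
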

\begin{proof}
    Define $X := 2F-1$.
    Note that the assumption implies that
    \begin{align*}
        \norm{X^2 - I}
        = \norm{4F^2 - 2F - 2F + I - I}
        = 4 \norm{F^2 - F}
        \leq 4\eps \leq \frac12
    \end{align*}
    and hence (now we use submultiplicativity of the norm)
    \begin{align*}
        (X^2)^{-1/2}
    \end{align*}
    can be defined via its power series, which is absolutely convergent around~1; from this we also see that
    \begin{align*}
        \norm{(X^2)^{-1/2} - I } = O(\eps).
    \end{align*}
    Then we can define
    \begin{align*}
        \mathrm{sign}(X) := X (X^2)^{-1/2}
    \end{align*}
    and get
    \begin{align*}
        \norm{X - \mathrm{sign}(X)} \leq \norm X \norm{I - (X^2)^{-1/2}} = O(\eps) \norm X
    \end{align*}
    From the power series it is also clear that $\mathrm{sign}(X)^2 = I$.  Also, in the case $F(x^*)=F(x)^*$, then  $\mathrm{sign}(X)(x^*)=\mathrm{sign}(X)(x)^*$.
    Finally, we define
    \begin{align*}
        Q := \frac12 (I + \mathrm{sign}(X))
    \end{align*}
    Then we have $Q^2 = Q$ and
    \begin{align*}
        \norm{F - Q}
        = \norm{\frac12 (I + X) - \frac12 (I + \mathrm{sign}(X))}
        = \frac12 \norm{X - \mathrm{sign}(X)}
        = O(\eps)\norm{X}
        = O(\eps) (\norm{F} + 1).
    \end{align*}
\end{proof}

\subsection{Approximate intersections}

When two subalgebras $\cA, \cB$ have associated Hilbert-Schmidt projections $P_\cA, P_\cB$ that approximately commute, we can form an exact algebra $\cC$ that acts as a robust version of the intersection $\cA \cap \cB$.
This is captured by \cref{lemma:intersections} below. Note the exact intersection $\cA \cap \cB$ can vary wildly under perturbations of $\cA, \cB$; indeed, it is empty under generic small rotations of either algebra.
In contrast, the ``intersection'' $\cC$ of  \cref{lemma:intersections} is robust to perturbations of $\cA$ and $\cB$, as described below.

One open question not addressed by the lemma is whether one can find nearby algebras $\tilde{\cA}, \tilde{\cB}$ such that their conditional expectations exactly commute.

For finite-dimensional $C^*$-algebra $\cM$ with unital subalgebra $\cA \subset \cM$ we use $P_\cA : \cM \to \cM$ to denote the conditional expectation given by Hilbert-Schmidt projection onto $\cA$. The theorem and proof below may be interpreted with either (1) the assumptions and conclusions both using the operator norm for the linear maps, or (2) the assumptions and conclusions both using the $cb$-norm for the linear maps.

\begin{theorem}[Approximate intersections from approximately commuting expectations]\label{lemma:robust-intersection} \label{lemma:intersections}
Let $\cM$ be a finite-dimensional $C^*$-algebra and let $P_{\cA},P_{\cB}:\cM\to\cM$ be the Hilbert-Schmidt projections onto unital subalgebras $\cA, \cB \subset \cM$, so that
$\cA=\image(P_{\cA})$ and $\cB=\image(P_{\cB})$. Assume
\[
\|P_{\cA}P_{\cB}-P_{\cB}P_{\cA}\|\le \eps.
\]
Then there exist a unital subalgebra $\cC\subset \cM$ such that
\begin{enumerate}[label=(\alph*)]
\item $\cC\subset_{O(\eps)}\cA$ and $\cC\subset_{O(\eps)}\cB$.
\item For all $\delta\ge 0$ and all $x\in\cM$,
\[
x\in_\delta \cA \ \text{ and }\ x\in_\delta \cB \ \Longrightarrow\ x\in_{O(\delta)+O(\eps)} \cC.
\]
\item  $\norm{P_{\cC}- P_{\cA}P_{\cB}} = O(\eps)$
for Hilbert-Schmidt projection $P_\cC$.
\end{enumerate}
\end{theorem}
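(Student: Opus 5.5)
The plan is to round the approximately idempotent map $F := P_\cA P_\cB$ to an exact idempotent, take its image as an approximate subalgebra, and then apply \cref{lemma:eps-subalgebra} to obtain an exact algebra $\cC$.

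\emph{Step 1: approximate idempotence.} Using $\|P_\cA P_\cB - P_\cB P_\cA\| \le \eps$ we get
\[
F^2 = P_\cA P_\cB P_\cA P_\cB =_{\eps} P_\cA P_\cA P_\cB P_\cB = F ,
\]
since $\norm{P_\cA}=\norm{P_\cB}=1$ (both are CPU, in either norm). Also $F(x^*) = F(x)^*$. By \cref{lemma:round}, with the operator or cb norm (both submultiplicative), there is an idempotent $Q = \round(F)$ that is $*$-preserving and satisfies $\norm{Q - F} = O(\eps)$; in particular $\norm{Q} \le 1 + O(\eps)$. Let $S = \image(Q)$.

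\emph{Step 2: properties of $S$.} For $s \in S$ we have $s = Q(s) =_{O(\eps)\norm{s}} P_\cA P_\cB s$, so $S \subset_{O(\eps)} \cA$. Similarly $P_\cB P_\cA s =_{O(\eps)\norm{s}} s$, so $S \subset_{O(\eps)} \cB$. For property (b) at the level of $S$: if $x \in_\delta \cA$ and $x\in_\delta\cB$, then $P_\cB x =_{O(\delta)\norm x} x$ (because $x - b$ is small for some $b \in \cB$ and $P_\cB$ is a contraction fixing $b$), and likewise for $P_\cA$. Hence $Qx \approx P_\cA P_\cB x \approx x$ with error $O(\delta+\eps)\norm x$, so $x \in_{O(\delta+\eps)} S$.

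The key point is approximate closure under multiplication. Take $x,y\in S$ with norm $1$. Approximate them by $a_x, a_y \in \cA$ and by $b_x, b_y \in \cB$, all within $O(\eps)$. Then $xy$ is $O(\eps)$-close both to $a_x a_y \in \cA$ and to $b_x b_y \in \cB$. By the previous paragraph with $\delta = O(\eps)$, it follows that $xy \in_{O(\eps)} S$. Thus $(S, Q)$ is an $O(\eps)$-subalgebra in the sense of \cref{def:eps-subalgebra}. It is unital because $Q(I) \approx P_\cA P_\cB I = I$; if needed, one corrects $Q$ so that $I \in S$, or uses the exact-unit reduction. Unitality of $\cC$ can in any case be arranged afterwards, since a subalgebra whose unit is $O(\eps)$-close to $I$ has unit equal to $I$ once the error is below $1$.

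\emph{Step 3: the exact algebra and its projection.} \Cref{lemma:eps-subalgebra} gives an exact subalgebra $\cC \approxEq{O(\eps)} S$. Then (a) and (b) transfer from $S$ to $\cC$ by composing near-inclusions. For (c), the main obstacle is comparing the Hilbert-Schmidt projection $P_\cC$ with $Q$, whose image $S$ is only close to $\cC$ and which is not itself an orthogonal projection. To handle this, I would first use \cref{thm:near-inclusion-unitary} to find a unitary $u$ with $\norm{u-I}=O(\eps)$ and $u\cC u^* \subset \cA$, so that (after such corrections) $\cC$ is nearly contained in both algebras. I would then argue that $P_\cC$ and $P_\cA P_\cB$ are both idempotents, each approximately fixing the image of the other, and each approximately annihilating the relevant complement. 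On one side, $P_\cA P_\cB$ fixes $\cC$ up to $O(\eps)$, giving $P_\cA P_\cB P_\cC \approx P_\cC$; this follows from \cref{lemma:PP-subalgebra} applied to $\cC\subset_{O(\eps)}\cA,\cB$, which gives $P_\cB P_\cC\approx P_\cC$ and $P_\cA P_\cC \approx P_\cC$ in cb norm. On the other side, $P_\cC P_\cA P_\cB \approx P_\cA P_\cB$, because the image of $P_\cA P_\cB$ is $O(\eps)$-close to $S$ and hence to $\cC$; to make this a norm statement I would write $P_\cA P_\cB \approx Q$ and use $P_{\cC}Q \approx Q$, which holds since $\image Q = S \subset_{O(\eps)} \cC$, via the second estimate of \cref{lemma:PP-subalgebra}-type reasoning applied to the subspace $S$ (valid in cb norm after tensoring, since $Q\ot\id$ has image $S\ot\cM'$). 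Combining the two, and writing
\[
P_\cC = P_\cC P_\cC \approx P_\cC P_\cA P_\cB P_\cC ,
\]
the remaining step is to justify $P_\cC P_\cA P_\cB \approx P_\cA P_\cB P_\cC$. I expect this step, showing that the two idempotents with nearly equal images coincide in norm, to need the most care. I would try to use that both maps are $*$-preserving and self-adjoint in the Hilbert-Schmidt sense up to $O(\eps)$, so that nearly equal images force nearly equal kernels.
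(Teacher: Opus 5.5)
\textbf{Verdict: genuine gap in part (c).} Your Steps 1 and 2, and the transfer of (a) and (b) to $\cC$, coincide with the paper's proof: round $F=P_\cA P_\cB$ via \cref{lemma:round}, take $S=\image(Q)$, check $O(\eps)$-closure under multiplication, and apply \cref{lemma:eps-subalgebra}. Incidentally, $Q$ fixes the unit exactly, because every term of the power series defining $Q$ does so once $F$ does, so no unitality correction is needed.

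\textbf{Why your two estimates are not enough.} You establish $P_\cA P_\cB P_\cC\approx P_\cC$ and $P_\cC P_\cA P_\cB\approx P_\cA P_\cB$. These only say that the two maps have nearly the same range, and the range does not determine an idempotent. For example, $E_1=\bigl(\begin{smallmatrix}1&0\\0&0\end{smallmatrix}\bigr)$ and $E_2=\bigl(\begin{smallmatrix}1&1\\0&0\end{smallmatrix}\bigr)$ satisfy $E_2E_1=E_1$ and $E_1E_2=E_2$ exactly, yet $E_1\neq E_2$. So, given your two estimates, your ``remaining step'' $P_\cC P_\cA P_\cB\approx P_\cA P_\cB P_\cC$ is equivalent to (c) itself, and it is left unproved.

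\textbf{Why the suggested route does not close it.} Hilbert--Schmidt adjoints exchange the $\infty\to\infty$ (resp.\ cb) norm with the $1\to 1$ (resp.\ diamond) norm. This is exactly the obstruction noted in the footnote in the proof of \cref{thm:local-rounding}. From your estimates, the self-adjointness argument would only give $P_\cC\approx P_\cA P_\cB$ in the dual norm. To avoid this you would have to prove your range estimates in the dual norm as well, which amounts to the kernel-side input below.

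\textbf{The missing ingredient.} You need a kernel-side estimate, and you already have the tool. \Cref{lemma:PP-subalgebra} gives both orderings, so $\cC\subset_{O(\eps)}\cA$ and $\cC\subset_{O(\eps)}\cB$ also yield $P_\cC P_\cA\approx P_\cC$ and $P_\cC P_\cB\approx P_\cC$. To see this, conjugate $\cC$ into $\cA$ by the near-identity unitary $u$ of \cref{thm:near-inclusion-unitary}. Then use the exact identity $P_{\cC_0}P_\cA=P_{\cC_0}$ for a subalgebra $\cC_0\subset\cA$. Finally note that $P_{u^*\cC u}=\mathrm{Ad}_{u^*}\circ P_\cC\circ \mathrm{Ad}_{u}$ is $O(\eps)$-close to $P_\cC$ in cb norm.

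\textbf{Completing (c).} The kernel-side estimate gives $P_\cC P_\cA P_\cB\approx P_\cC P_\cB\approx P_\cC$. Together with your estimate $P_\cC Q\approx Q$, this yields the paper's chain
\[
P_\cC\approx P_\cC P_\cA P_\cB\approx P_\cC Q\approx Q\approx P_\cA P_\cB ,
\]
with every step $O(\eps)$. No comparison of kernels via adjoints is needed.
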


\noindent
We refer to the algebra $\cC$ produced above as an ``approximate intersection'' of $\cA$ and $\cB$.
It is not uniquely specified by the conditions $(a,b,c)$.
However, it is approximately unique in the sense that if $\cC$ and $\tilde \cC$ satisfy the conditions $(a,b)$ above, then $\cC =_{O(\epsilon)} \tilde \cC$.
Moreover, $\cC$ is robust to perturbations of the inputs: if $(\cA_1, \cB_1)$ and $( \cA_2,  \cB_2)$ separately satisfy the assumptions of \cref{lemma:robust-intersection}, with $\cA_1=_\eps \cA_2$ and $\cB_1 =_\eps \cB_2$, then the outputs $\cC_1, \cC_2$ of \cref{lemma:robust-intersection} satisfy $\cC_1 =_{O(\eps)}  \cC_2$.
\begin{proof}
To summarize the argument, $P_\cA P_\cB$ is approximately idempotent, so the image of $\mathrm{round}(P_\cA P_\cB)$ is approximately closed under multiplication, allowing us to find a nearby exact algebra $\cC$ by \cref{lemma:eps-subalgebra}.

Denote
\begin{align}
    F:=P_{\cA}P_{\cB}.
\end{align}
Since $P_{\cA}^2=P_{\cA}$ and $P_{\cB}^2=P_{\cB}$,
\[\
\norm{F^2-F} = \norm{P_{\cA}(P_{\cB}P_{\cA}-P_{\cA}P_{\cB})P_{\cB}} \le \eps.
\]
Also $F$ is CPU and $*$-preserving, hence $\|F\|=1$.

Assume $\eps\le 1/8$.  (Otherwise, the conclusions involving $O(\eps)$ bounds become vacuous.) Apply \cref{lemma:round} to obtain an idempotent $*$-map
$Q=\round(F)$ with $Q^2=Q$, $Q(x^*)=Q(x)^*$, and
\[
\|Q-F\|\le O(\eps),\qquad \|Q\|\le 1+O(\eps).
\]
In particular,
\begin{align} \label{eq:FQetc}
    Q  =_{O(\eps)} F =  P_\cA P_\cB =_{O(\eps)} P_\cB P_\cA.
\end{align}
Let $S:=\image(Q)$. From the above, it follows $S \subset_{O(\eps)} \cA, \cB$.

Next, $S$ is $O(\eps)$-closed under multiplication. Let $s,t\in S$. From $s\in_{O(\eps)}\cA$ and $t\in_{O(\eps)}\cA$ we get $st\in_{O(\eps)}\cA$, and similarly $st\in_{O(\eps)}\cB$.  From \cref{eq:FQetc}, it follows
\[
\|st-Q(st)\|\le O(\eps)\,\|s\|\,\|t\|.
\]

Thus $(S,Q)$ is an $O(\eps)$-subalgebra in the sense of \cref{def:eps-subalgebra}. By \cref{lemma:eps-subalgebra}, there exists an exact unital subalgebra $\cC\subset \cM$ with
$\cC\approxEq{\bigO(\eps)}S$.  Then from $S \subset_{O(\eps)} \cA, \cB$ we have conclusion (a),
\begin{equation}
    \cC\subset_{O(\eps)}\cA, \qquad \cC\subset_{O(\eps)}\cB.
\end{equation}

To show (b), take $\norm{x} = 1$ without loss of generality and assume $x\in_\delta \cA$ and $x\in_\delta \cB$. (Recall that for elements of normed vector spaces we denote $\norm{a-b} \le \eps$ as $a=_\eps b$.)
Then $x =_{O(\delta)} P_\cA(x)$ and  $x =_{O(\delta)} P_\cB(x)$, so  $x =_{O(\delta)} Q(x) \in S =_{O(\eps)} \cC$. It follows $x \in_{O(\eps) + O(\delta)} \cC.$

To show (c), note $\cC \subset_{O(\eps)} \cA$ implies $P_\cC P_\cA =_{O(\eps)} P_\cC$, and likewise $P_\cC P_\cB =_{O(\eps)} P_\cC$, by \cref{lemma:PP-subalgebra}.  Also note $P_\cC Q =_{O(\eps)} Q$, using $S =_{O(\eps)}  \cC$.  Then
\begin{align}
    P_\cC =_{O(\eps)} P_\cC P_\cA P_\cB =_{O(\eps)} P_\cC Q =_{O(\eps)} Q =_{O(\eps)} F,
\end{align}
as desired.
\end{proof}

\appendix
\crefalias{section}{appendix}

\section{Near-inclusion lemmas}\label{app:near-inclusion-lemmas}
We record several facts about near inclusions of algebras. These are essentially finite-dimensional specializations of results in Ref.~\cite{ranard2022converse}.
Recall that for normed vector spaces $\mathcal{A}, \mathcal{B}$, we write $\mathcal{A} \subset_\epsilon \mathcal{B}$ to mean that for every $a \in \mathcal{A}$ there exists $b \in \mathcal{B}$ such that $\|a-b\| \le \epsilon \|a\|.$
We also write $a \in_\epsilon \mathcal{B}$ when there exists $b \in \mathcal{B}$ such that $\|a-b\| \le \epsilon \|a\|.$  For finite-dimensional $C^*$-subalgebras $\cA, \cB$, we use the notation $C^*(\cA,\cB)$ to denote the algebra they generate.  For linear maps on normed vector spaces, we use operator norm.

\begin{lemma}[Near inclusions and commutators (Ref.~\cite{ranard2022converse}, Lemma 2.4)] \label{lemma:near-inclusion-commutators}
Let $\mathcal{A},\mathcal{B} \subset \mathcal{M}$ be unital subalgebras of a finite-dimensional $C^*$-algebra $\mathcal{M}$.
If $\mathcal{A} \subset_\epsilon \mathcal{B}'$, then for all $a \in \mathcal{A}$ and $b \in \mathcal{B}$,
\[
\|[a,b]\| \le 2\epsilon \|a\|\,\|b\|.
\]
Conversely, if for all $a \in \mathcal{A}$ and $b \in \mathcal{B}$ one has
\[
\|[a,b]\| \le \epsilon \|a\|\,\|b\|,
\]
then $\mathcal{A} \subset_\epsilon \mathcal{B}'$.
\end{lemma}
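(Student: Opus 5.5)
The first claim follows from a short estimate. Fix $a\in\cA$ and $b\in\cB$; by homogeneity we may take $\norm{a}=\norm{b}=1$. The hypothesis $\cA\subset_\eps\cB'$ gives some $c\in\cB'$ with $\norm{a-c}\le\eps$. Since $[c,b]=0$, we can write $[a,b]=[a-c,b]$. Then
\[
\norm{[a,b]} = \norm{[a-c,b]} \le 2\norm{a-c}\norm{b} \le 2\eps .
\]
Rescaling gives the bound $2\eps\norm{a}\norm{b}$.

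For the converse, we need an element of $\cB'$ close to a given $a\in\cA$, and the natural choice is the twirl $T_\cB(a)=\int_{U(\cB)} u a u^*\,\mathrm{d}u$. It lies in $\cB'$ because the Haar measure on the unitary group of $\cB$ is invariant: for any unitary $v\in\cB$,
\[
v\,T_\cB(a)\,v^* = \int_{U(\cB)} (vu)\,a\,(vu)^*\,\mathrm{d}u = T_\cB(a),
\]
and the unitaries of a finite-dimensional $C^*$-algebra span it. (Equivalently, $T_\cB=P_{\cB'\cap\cM}$, as recorded in \cref{sec:notation}.)

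It remains to bound the distance from $a$ to its twirl. Write
\[
a - T_\cB(a) = \int_{U(\cB)} (a - u a u^*)\,\mathrm{d}u = \int_{U(\cB)} [a,u]\,u^*\,\mathrm{d}u .
\]
Each unitary $u\in\cB$ has norm $1$, so the hypothesis gives $\norm{[a,u]u^*}\le\norm{[a,u]}\le\eps\norm{a}$. Averaging these pointwise bounds yields $\norm{a-T_\cB(a)}\le\eps\norm{a}$, and hence $\cA\subset_\eps\cB'$.

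No step here is a genuine obstacle. The only point to check is that the constant in the converse comes out as exactly $\eps$, with no factor of $2$. That holds because each integrand is $[a,u]u^*$, whose norm is at most $\norm{[a,u]}$; we never need to bound a difference of two commutators.
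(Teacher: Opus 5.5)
Your proof is correct, and it is essentially the standard argument behind the cited Lemma 2.4 (the paper gives no proof of its own, only the citation). The forward direction subtracts a nearby element of the commutant, and the converse averages $a$ over the unitary group of $\cB$; in finite dimensions that average is the twirl $T_\cB = P_{\cB'\cap\cM}$, which plays the role of the averaging conditional expectation onto $\cB'$ used in the reference.
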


\begin{lemma}[Near inclusion of commutants (Ref.~\cite{ranard2022converse}, Lemma 2.5)]\label{lemma:near-inclusion-commutants}
Let $\cA, \cB \subset \cM$ be unital subalgebras of a finite-dimensional $C^*$-algebra $\cM$. If $\cA \subset_\epsilon \cB$, then
\begin{align}
\cB' \subset_{2\epsilon} \cA'.
\end{align}
\end{lemma}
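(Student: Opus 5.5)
The plan is to reduce the statement to \cref{lemma:near-inclusion-commutators}, using it once as a commutator estimate and once in its converse direction. The roles of the algebras are swapped between the two uses.

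\emph{Step 1 (commutator bound).} Fix $b' \in \cB'$ and $a \in \cA$. By the hypothesis $\cA \subset_\eps \cB$, choose $b \in \cB$ with $\norm{a-b} \le \eps\norm{a}$. Since $[b,b']=0$, we get
\begin{align*}
  \norm{[a,b']} = \norm{[a-b,b']} \le 2\norm{a-b}\,\norm{b'} \le 2\eps \norm{a}\,\norm{b'}.
\end{align*}
So every element of $\cB'$ almost commutes with every element of $\cA$, with constant $2\eps$.

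\emph{Step 2 (converse direction).} Next I apply the converse half of \cref{lemma:near-inclusion-commutators}, taking $\cB'$ in the role of ``$\cA$'' and $\cA$ in the role of ``$\cB$''. Since $\cB'$ is a unital subalgebra of $\cM$, the lemma applies and yields $\cB' \subset_{2\eps} \cA'$, which is the claim.

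If one wanted to see the converse step explicitly, it is a twirl. For $x \in \cB'$, the element $T_\cA(x) = \int_{U(\cA)} u x u^*\,\mathrm{d}u$ lies in $\cA'$. Moreover
\begin{align*}
  \norm{x - T_\cA(x)} \le \sup_{u \in U(\cA)} \norm{x - u x u^*} = \sup_{u \in U(\cA)} \norm{[x,u]} \le 2\eps\norm{x},
\end{align*}
using Step 1 with $\norm{u}=1$. I expect no real obstacle here. The only care needed is keeping the roles of the algebras straight when invoking the converse, and noting that the commutant $\cB'$ is itself a unital finite-dimensional $C^*$-subalgebra.
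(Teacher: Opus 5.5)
Your proposal is correct and is essentially the standard argument behind this lemma. The paper does not reprove it here but cites Lemma~2.5 of Ref.~\cite{ranard2022converse}. That argument also first bounds $\norm{[a,b']}\le 2\eps\norm{a}\norm{b'}$ for $a\in\cA$, $b'\in\cB'$ using the near inclusion, and then applies the converse half of \cref{lemma:near-inclusion-commutators}. In finite dimensions that converse step is exactly the twirl estimate $\norm{x-T_\cA(x)}\le\sup_{u}\norm{[x,u]}$ that you wrote out.
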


The below also appears in Ref.~\cite{ranard2022converse}, Theorem 2.6, where the proof exposition is slightly corrected.
\begin{theorem}[Near inclusions to exact inclusions (Ref.~\cite{christensen1980near}, Theorem 4.1)]\label{thm:near-inclusion-unitary}
Let $\mathcal{A},\mathcal{B} \subset \mathcal{M}$ be unital subalgebras of a finite-dimensional $C^*$-algebra $\mathcal{M}$, and suppose $\mathcal{A} \subset_\epsilon \mathcal{B}$ with $\epsilon \le 1/64$. Then there exists a unitary $u \in C^*(\mathcal{A},\mathcal{B})$ such that
\[
u^* \mathcal{A} u \subset \mathcal{B},
\qquad
\|u-I\| \le 12\epsilon.
\]
\end{theorem}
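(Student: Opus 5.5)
The statement is \cref{thm:near-inclusion-unitary}, Christensen's near-inclusion theorem. I will follow the standard averaging argument.

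\textbf{Step 1: an approximate embedding.} Let $E = P_\cB$ be the conditional expectation onto $\cB$. It is completely positive, unital, and a $\cB$-bimodule map. The near inclusion $\cA \subset_\eps \cB$ gives $\norm{a - E(a)} \le 2\eps\norm{a}$ for all $a\in\cA$. Since $E$ is contractive, this follows by comparing $a$ with its nearby element $b\in\cB$ and writing $a-E(a) = (a-b) - E(a-b)$.

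Then $\phi := E|_\cA$ is a unital CP map from $\cA$ into $\cB$. It is an $O(\eps)$-homomorphism:
\[
\phi(xy) - \phi(x)\phi(y) =_{O(\eps)} xy - xy = 0 .
\]

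\textbf{Step 2: averaging to an intertwiner.} Define
\[
T := \int_{U(\cA)} \phi(u)\, u^* \, \mathrm{d}u \in C^*(\cA,\cB).
\]
Since $\phi(u) =_{2\eps} u$, we get $\norm{T - I} \le 2\eps$, so $T$ is invertible. I then check that $T$ almost intertwines $\cA$ with $\phi(\cA)$. For $v\in U(\cA)$, the invariance of Haar measure gives
\[
\phi(v)\, T
= \int \phi(v)\phi(u)\,u^*\,\mathrm{d}u
=_{O(\eps)} \int \phi(vu)\,u^*\,\mathrm{d}u
= T v .
\]
This only gives an approximate intertwining $\phi(v) T \approx T v$, and approximate is not enough for an exact inclusion.

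\textbf{Step 3: the main obstacle, exactness.} The averaging trick yields exact intertwining only when $\phi$ is an exact homomorphism. So I will first replace $\phi$ by an exact homomorphism $\psi:\cA\to\cB$ with $\norm{\psi-\phi}=O(\eps)$, using \cref{thm:make homomorphism exact} (Johnson/Kitaev) applied with target $\cB$. I expect this to be the delicate point, since it is where dimension-free constants must come from.

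With $\psi$ exact, set
\[
T := \int \psi(u)\,u^*\,\mathrm{d}u .
\]
Now $\psi(v) T = T v$ holds exactly for all $v\in U(\cA)$, and $\norm{T-I}=O(\eps)$. Let $u$ be the unitary in the polar decomposition $T = u|T|$. From $\psi(v)T = Tv$ we get $T^*T v = v T^*T$, so $|T|$ commutes with $\cA$. Hence $\psi(v) = u v u^*$, which gives $u\cA u^* = \psi(\cA)\subset\cB$; replacing $u$ by $u^*$ matches the stated form. Polar decomposition also gives $\norm{u-I} = O(\norm{T-I}) = O(\eps)$.

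Finally, $u$ lies in $C^*(\cA,\cB)$ because $T$ does and $|T|^{-1}$ is a continuous function of $T^*T$. Getting the explicit constant $12\eps$ under the assumption $\eps\le 1/64$ would require tracking the constants through \cref{thm:make homomorphism exact}, or else following Christensen's direct perturbation argument, which iterates the averaging step to contract the error. I would cite \cite{christensen1980near} for the sharp constant.
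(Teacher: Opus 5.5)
The paper gives no proof of its own here. It imports the statement, including the explicit range $\eps\le 1/64$ and the bound $12\eps$, from \cite{christensen1980near}, with the exposition of \cite{ranard2022converse}. Measured against that, your argument is sound but proves a weaker statement.

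Everything you write checks out:
\begin{enumerate}
\item $E|_{\cA}$ is a unital $*$-preserving $O(\eps)$-homomorphism into $\cB$.
\item \cref{thm:make homomorphism exact} then gives an exact $\psi:\cA\to\cB$ with $\sup_{v\in U(\cA)}\|\psi(v)-v\|=O(\eps)$. Unitality is automatic, since a projection at distance less than $1$ from $I$ equals $I$.
\item The exact intertwining $\psi(v)T=Tv$, the fact that $T^*T$ commutes with $\cA$, and the polar decomposition produce a unitary in $C^*(\cA,\cB)$ implementing $\psi$, exactly as you say.
\end{enumerate}
The averaging/polar step costs only explicit, harmless factors, since $\|T-I\|\le\sup_{v}\|\psi(v)-v\|$.

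The gap is the quantitative part of the statement. \cref{thm:make homomorphism exact} comes with an unspecified constant and an unspecified smallness threshold. What your argument actually establishes is this: there are universal $c,\eps_0>0$ such that $\|u-I\|\le c\eps$ whenever $\eps\le\eps_0$.

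That gives neither the range $\eps\le 1/64$ (your $\eps_0$ could be much smaller) nor the constant $12$. Deferring to \cite{christensen1980near} for these amounts to citing the theorem rather than proving it. All of the loss sits in one step: upgrading $E|_{\cA}$ to an exact homomorphism into $\cB$.

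The cited proof keeps every constant explicit. You instead outsource the hardest step to the Johnson/Kitaev stability theorem. That buys brevity, and it reuses a tool the paper already imports, but it costs you the constants.

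For the paper's own purposes your version would suffice. Every application of \cref{thm:near-inclusion-unitary} in the paper is only used in the form $\|u-I\|=O(\eps)$, for $\eps$ below some universal threshold.
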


\begin{lemma}[Projections for near-inclusions] \label{lemma:PP-subalgebra}
There exists a universal constant $c$ such that the following holds. Let $\cA, \cB \subset \cM$ be unital subalgebras of a finite-dimensional $C^*$-algebra $\cM$.  If $\cB \subset_\eps \cA$, then
     \begin{align}
      \norm{P_{\cA} P_{\cB} - P_{\cB} }_{cb} \le c\eps , \qquad \norm{P_{\cB} P_{\cA} - P_{\cB}}_{cb}   \le c\eps.
 \end{align}
 in the completely bounded operator norm.
\end{lemma}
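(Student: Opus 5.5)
The plan is to reduce to the exact case with \cref{thm:near-inclusion-unitary}, where both identities hold on the nose, and then pay for the unitary perturbation in cb norm.

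\emph{Large $\eps$.} First dispose of $\eps > 1/64$. Each $P_{\cA}, P_{\cB}$ is the trace-preserving conditional expectation onto a unital subalgebra, so it is CPU and $\norm{P_\cA}_{cb} = \norm{P_\cB}_{cb} = 1$. Hence both differences have cb norm at most $2$, and this is at most $c\eps$ once $c \ge 128$. So assume $\eps \le 1/64$.

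\emph{Reduction to an exact inclusion.} Apply \cref{thm:near-inclusion-unitary} to $\cB \subset_\eps \cA$. This gives a unitary $u$ with $\norm{u - I} \le 12\eps$ and $\tilde\cB := u^*\cB u \subset \cA$. Conjugation $\mathrm{Ad}_{u^*}(x) = u^* x u$ is a unitary map for the Hilbert--Schmidt inner product and carries $\cB$ onto $\tilde\cB$. Therefore
\[
P_{\tilde\cB} = \mathrm{Ad}_{u^*} \circ P_\cB \circ \mathrm{Ad}_u .
\]
Next I bound $\norm{\mathrm{Ad}_u - \id}_{cb} \le 2\norm{u-I}$. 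This holds because $(\mathrm{Ad}_u - \id)\ot \id_\cC = \mathrm{Ad}_{u\ot I} - \id$, and $u x u^* - x = (u-I)xu^* + x(u^*-I)$. Writing $P_{\tilde\cB} - P_\cB$ as a telescoping difference and using $\norm{P_\cB}_{cb}=1$ then gives
\[
\norm{P_{\tilde\cB} - P_\cB}_{cb} \le 4\norm{u-I} \le 48\eps .
\]

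\emph{The exact case.} Since $\tilde\cB \subset \cA$ exactly, the image of $P_{\tilde\cB}$ lies in $\cA$, so $P_\cA P_{\tilde\cB} = P_{\tilde\cB}$. Both maps are orthogonal projections in Hilbert--Schmidt geometry onto nested subspaces, so they commute. Equivalently, take HS-adjoints of the previous identity, both projections being self-adjoint. This gives $P_{\tilde\cB}P_\cA = P_{\tilde\cB}$ exactly. These identities are between linear maps, so they persist after tensoring with $\id_\cC$; no norm is involved.

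\emph{Conclusion.} Write
\[
P_\cA P_\cB - P_\cB = P_\cA(P_\cB - P_{\tilde\cB}) + (P_{\tilde\cB} - P_\cB).
\]
Using $\norm{P_\cA}_{cb}=1$, this has cb norm at most $96\eps$. Similarly,
\[
P_\cB P_\cA - P_\cB = (P_\cB - P_{\tilde\cB})P_\cA + (P_{\tilde\cB} - P_\cB),
\]
with the same bound. Hence $c = 128$ suffices.

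The only point needing care is the cb-norm control of the conjugation perturbation. The HS adjoint is used only for an exact identity, never for a norm estimate, which avoids the cb-versus-diamond issue flagged in the paper's footnote.
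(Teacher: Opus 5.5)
Your proof is correct and follows the paper's route. The paper's proof is the one-line remark that the lemma follows from applying \cref{thm:near-inclusion-unitary} to $\cB \subset_\eps \cA$, and you have supplied the details: conjugating $P_\cB$ by the near-identity unitary, the bound $\norm{\mathrm{Ad}_u - \id}_{cb} \le 2\norm{u-I}$, the exact nested-projection identities, and the trivial bound via $\norm{P}_{cb}=1$ for $\eps > 1/64$.
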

\begin{proof}
    While there is a more elementary proof, this follows directly from application of \cref{thm:near-inclusion-unitary} to  $\cB \subset_\eps \cA$
\end{proof}

\begin{lemma}[Local error to global error for homomorphisms (Ref.~\cite{ranard2022converse}, Lemma 2.7)]\label{lemma:local-to-global-hom}
Let $\alpha_1,\alpha_2 : \mathcal{A} \to \mathcal{B}$ be injective unital $*$-homomorphisms between finite-dimensional unital $C^*$-algebras. Suppose $\mathcal{A}_1,\ldots,\mathcal{A}_n \subset \mathcal{A}$ are pairwise commuting unital subalgebras generating $\mathcal{A}$, and define
\[
\epsilon = \sum_{i=1}^n \|(\alpha_1-\alpha_2)|_{\mathcal{A}_i}\|.
\]
If $\epsilon<1$, then
\[
\|\alpha_1-\alpha_2\|
\le
2\sqrt{2}\,\epsilon \left(1+\sqrt{1-\epsilon^2}\right)^{-1/2}
\le
2\sqrt{2}\,\epsilon.
\]
\end{lemma}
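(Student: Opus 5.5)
The plan is to prove \cref{lemma:local-to-global-hom} by a telescoping (hybrid) argument, using the fact that the images of commuting generating subalgebras commute. Since $\alpha_1,\alpha_2$ are injective homomorphisms, each $\alpha_k(\cA)$ is isomorphic to $\cA$, and each is generated by the pairwise commuting subalgebras $\alpha_k(\cA_i)$.

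First I would set up the reduction. Since $\cA_1,\dots,\cA_n$ pairwise commute and generate $\cA$, and all algebras are finite-dimensional, the span of products $a_1\cdots a_n$ with $a_i \in \cA_i$ is all of $\cA$. The first step is to control the difference on such products. Telescoping gives
\[
\alpha_1(a_1\cdots a_n)-\alpha_2(a_1\cdots a_n)=\sum_i \alpha_2(a_1)\cdots\alpha_2(a_{i-1})\,(\alpha_1-\alpha_2)(a_i)\,\alpha_1(a_{i+1})\cdots\alpha_1(a_n),
\]
which bounds the difference on products of norm-one factors by $\eps$. This bound does not directly transfer to a general $x\in\cA$, because $x$ is a sum of many such products with no norm control.

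To get a dimension-free bound, I would instead work unitarily. In finite dimensions there are unitaries $u_i$ with $\alpha_1|_{\cA_i}=\mathrm{Ad}(u_i)\circ\alpha_2|_{\cA_i}$ on the relevant representation, and the aim is $\norm{u_i-I}\lesssim\norm{(\alpha_1-\alpha_2)|_{\cA_i}}$. The standard choice is
\[
u_i=\int_{U(\cA_i)}\alpha_1(v)\alpha_2(v)^*\,dv
\]
averaged over the Haar measure. This $u_i$ intertwines the two maps, $u_i\alpha_2(a)=\alpha_1(a)u_i$ for $a\in\cA_i$, and satisfies $\norm{u_i-I}\le\eps_i$, where $\eps_i=\norm{(\alpha_1-\alpha_2)|_{\cA_i}}$. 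Taking the polar part then gives a unitary $w_i$ with $\norm{w_i-I}=O(\eps_i)$, and the sharp constant $\sqrt2\,\eps_i(1+\sqrt{1-\eps_i^2})^{-1/2}$ comes from the polar decomposition estimate.

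The main obstacle is making these local unitaries compose. If $w_i$ lies in the relative commutant of $\alpha_2(\cA_j)$ for all $j\ne i$, then $W=w_1\cdots w_n$ satisfies $\alpha_1=\mathrm{Ad}(W)\circ\alpha_2$ and $\norm{W-I}\le\sum_i\norm{w_i-I}$, so
\[
\norm{\alpha_1-\alpha_2}\le 2\norm{W-I}.
\]
To arrange the commutation, I would replace the single average by the twirl over the commuting algebras. Concretely, I would construct the unitaries sequentially, intertwining $\alpha_2$ with hybrids $\beta_k$ that agree with $\alpha_1$ on $\cA_1,\dots,\cA_k$ and with $\alpha_2$ on the rest. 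The averaged intertwiner at step $k$ can be taken in $C^*(\beta_{k-1}(\cA),\beta_k(\cA))$, commuting with the images of the untouched factors, and the errors then add. This step is where I expect the real difficulty, and I would defer to the argument of Ref.~\cite{ranard2022converse}, Lemma 2.7, for the precise bookkeeping.
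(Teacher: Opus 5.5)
The paper does not prove this lemma; it imports it from Ref.~\cite{ranard2022converse}. Judged on its own terms, your proposal has a genuine gap, and it sits exactly at the step you defer. Your route needs local intertwiners $w_i$, with $\alpha_1|_{\cA_i}=\mathrm{Ad}(w_i)\circ\alpha_2|_{\cA_i}$, that commute with $\alpha_2(\cA_j)$ for $j\neq i$. Such $w_i$ need not exist.

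Take $\cA=\cB=M_2\ot M_2$, $\cA_1=M_2\ot I$, $\cA_2=I\ot M_2$, $\alpha_2=\id$ and $\alpha_1=\mathrm{Ad}(U)$ with $U=e^{it\sigma_z\ot\sigma_z}$ and $t>0$ small. Any $w_1$ commuting with $I\ot M_2$ lies in $M_2\ot I$, so $\mathrm{Ad}(w_1)$ preserves $M_2\ot I$. But $\alpha_1(\sigma_x\ot I)=\cos(2t)\,\sigma_x\ot I-\sin(2t)\,\sigma_y\ot\sigma_z\notin M_2\ot I$.

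The same example rules out your hybrids. A map equal to $\alpha_1$ on $\cA_1$ and to $\alpha_2$ on $\cA_2$ is not a homomorphism, because $\alpha_1(\cA_1)$ and $\alpha_2(\cA_2)$ do not commute. So the sequential construction has nothing to intertwine, and what you left as ``bookkeeping'' is the whole difficulty.

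The missing idea is to average once, globally, rather than factor by factor. Since the $\cA_i$ pairwise commute, $(v_1,\dots,v_n)\mapsto v_1\cdots v_n$ is a group homomorphism from $G=U(\cA_1)\times\cdots\times U(\cA_n)$ into $U(\cA)$. So set
\[
x:=\int_G \alpha_1(v_1\cdots v_n)\,\alpha_2(v_1\cdots v_n)^*\,\mathrm{d}v_1\cdots\mathrm{d}v_n\in\cB .
\]
Haar invariance gives $\alpha_1(v)x=x\alpha_2(v)$ for all $v\in U(\cA_i)$ and all $i$, hence $\alpha_1(a)x=x\alpha_2(a)$ for all $a\in\cA$.

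Your telescoping estimate, which you set aside, is exactly what bounds the integrand. Its factors are unitaries of norm one, so $\norm{\alpha_1(v_1\cdots v_n)-\alpha_2(v_1\cdots v_n)}\le\eps$, and thus $\norm{x-I}\le\eps<1$. Then $x$ is invertible and $x^*x$ commutes with $\alpha_2(\cA)$. Hence $w:=x(x^*x)^{-1/2}\in\cB$ is a unitary with $\alpha_1=\mathrm{Ad}(w)\circ\alpha_2$.

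To bound $w$, work in a faithful representation of $\cB$ and let $w\eta=e^{i\theta}\eta$ with $\norm{\eta}=1$. Then $e^{-i\theta}\langle\eta,x\eta\rangle=\langle\eta,\abs{x}\eta\rangle>0$ and $\abs{\langle\eta,x\eta\rangle-1}\le\eps$. So $\abs{\theta}\le\arcsin\eps$, and $\norm{w-I}\le\sqrt{2(1-\sqrt{1-\eps^2})}=\sqrt2\,\eps(1+\sqrt{1-\eps^2})^{-1/2}$.

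Finally, $\norm{\alpha_1(a)-\alpha_2(a)}=\norm{[w-I,\alpha_2(a)]}\le2\norm{w-I}\norm{a}$, which is the claimed bound. Because there is a single global intertwiner, no compatibility between local unitaries is ever needed.
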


\begin{lemma}[Simultaneous near inclusions (Ref.~\cite{ranard2022converse}, Lemma B.3]\label{lemma:simultaneous-near-inclusions}
Let $\mathcal{A}_1,\ldots,\mathcal{A}_n,\mathcal{B} \subset \mathcal{M}$ be unital subalgebras of a finite-dimensional $C^*$-algebra $\mathcal{M}$. Suppose the algebras $\mathcal{A}_i$ pairwise commute and satisfy $\mathcal{A}_i \subset_{\epsilon_i} \mathcal{B}$ for each $i$. Writing $\epsilon = \sum_i \epsilon_i$, one has
\[
\mathcal{B}' \subset_{2\epsilon} C^*(\mathcal{A}_1,\ldots,\mathcal{A}_n)', \qquad C^*(\mathcal{A}_1,\ldots,\mathcal{A}_n) \subset_{4\epsilon} \mathcal{B}.
\]
\end{lemma}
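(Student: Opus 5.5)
The plan is to prove the first near-inclusion directly by twirling, and then to get the second one from the first via \cref{lemma:near-inclusion-commutants}. Write $\cC := C^*(\cA_1,\ldots,\cA_n)$, and let $T_i := T_{\cA_i}$ be the twirl over the unitary group of $\cA_i$, as in \cref{eq:T-notation}.

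\emph{Step 1: small commutators.} Fix $b' \in \cB'$ and a unitary $u \in \cA_i$. Since $\cA_i \subset_{\eps_i} \cB$, there is $b \in \cB$ with $\norm{u-b} \le \eps_i$. Because $b$ commutes with $b'$,
\begin{align*}
\norm{u b' u^* - b'} = \norm{[u,b']} = \norm{[u-b,b']} \le 2\eps_i \norm{b'}.
\end{align*}
Averaging over the Haar measure gives $\norm{T_i(b') - b'} \le 2\eps_i\norm{b'}$. Each $T_i$ is CPU, hence contractive.

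\emph{Step 2: combine the twirls.} Since the $\cA_i$ pairwise commute, unitaries from different $\cA_i$ commute, so the maps $T_i$ pairwise commute. Each $T_i$ is also idempotent, with image $\cA_i'$. Set $y := T_n \cdots T_1(b')$. Because the $T_i$ commute and are idempotent, $T_k y = y$ for every $k$. Hence $y \in \bigcap_k \cA_k' = \cC'$.

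To bound the distance from $b'$, telescope with $y_k := T_k \cdots T_1(b')$ and $y_0 := b'$. Using commutativity,
\begin{align*}
y_k - y_{k-1} = T_{k-1}\cdots T_1\bigl(T_k b' - b'\bigr).
\end{align*}
By contractivity and Step 1, this has norm at most $2\eps_k \norm{b'}$. Summing over $k$ gives $\norm{y - b'} \le 2\eps\norm{b'}$, which proves $\cB' \subset_{2\eps} \cC'$.

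\emph{Step 3: second inclusion.} Apply \cref{lemma:near-inclusion-commutants} to the near inclusion $\cB' \subset_{2\eps} \cC'$. It gives $\cC'' \subset_{4\eps} \cB''$. By the double commutant theorem, $\cC'' = \cC$ and $\cB'' = \cB$, which yields $\cC \subset_{4\eps} \cB$.

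The only point needing care is Step 2. A naive sequential argument would have to control the commutator of an already-twirled element with $\cA_k$. The commutativity of the twirls avoids this by moving each increment $T_k b' - b'$ back onto the original element $b'$.
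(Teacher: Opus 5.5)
Your proof is correct. The paper gives no argument of its own here and only cites Lemma~B.3 of the earlier work, but your route (commuting twirls over the $\cA_i$ with a telescoping bound giving $\cB' \subset_{2\eps} C^*(\cA_1,\ldots,\cA_n)'$, then \cref{lemma:near-inclusion-commutants} and the bicommutant theorem giving $4\eps$) is essentially the expected one and reproduces exactly the stated constants. One small caveat concerns $\cB''=\cB$ for relative commutants in $\cM$: this holds when $\cM$ is a factor, as in all of the paper's applications, but can fail for a general finite-dimensional $\cM$, because the relative bicommutant always contains the center of $\cM$. For full generality, first embed $\cM$ faithfully into some $B(\cH)$ and take commutants there; this changes neither the hypotheses nor the conclusion $C^*(\cA_1,\ldots,\cA_n)\subset_{4\eps}\cB$.
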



\begin{proposition}[Almost-unitaries are close to unitaries]\label{prop:almost-unitary-polar}
Let $\cB$ be a unital finite-dimensional $C^*$-algebra, and let $x \in \cB$ satisfy
\[
\|x^*x-I\| \le \epsilon < \frac12.
\]
Then $x^*x$ is invertible, so the polar part
\[
u := x(x^*x)^{-1/2}
\]
is a unitary in $\cB$, and
\begin{align}
\|x-u\| \le 2\epsilon.
\end{align}
\end{proposition}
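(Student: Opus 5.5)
We prove \cref{prop:almost-unitary-polar} with the functional calculus, applied to the positive element $h := x^*x \in \cB$.

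\emph{Step 1: invertibility and the square root.} The hypothesis $\|h - I\| \le \epsilon$ places the spectrum of $h$ inside $[1-\epsilon, 1+\epsilon] \subset [\tfrac12, \tfrac32]$, using that $h$ is self-adjoint. Hence $h$ is invertible. Since $\cB$ is finite-dimensional and closed under the continuous functional calculus, $h^{-1/2} \in \cB$, and therefore $u = x h^{-1/2} \in \cB$.

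\emph{Step 2: $u$ is unitary.} First, $u^*u = h^{-1/2} x^* x h^{-1/2} = h^{-1/2} h\, h^{-1/2} = I$, because $h^{-1/2}$ commutes with $h$. In finite dimensions a left inverse is automatically a two-sided inverse, so $uu^* = I$ as well. Equivalently, $x$ is invertible because $x^*x$ is.

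\emph{Step 3: the norm bound.} Write
\[
x - u = u h^{1/2} - u = u\,(h^{1/2} - I).
\]
Since $u$ is unitary, $\|x-u\| = \|h^{1/2} - I\| = \sup_{t \in \mathrm{spec}(h)} |\sqrt t - 1|$. For $t \ge 0$ we have
\[
|\sqrt t - 1| = \frac{|t-1|}{\sqrt t + 1} \le |t-1|,
\]
so $\|x-u\| \le \|h - I\| \le \epsilon \le 2\epsilon$. This bound even improves on the claimed $2\epsilon$.

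There is no real obstacle here. The only points needing care are that the functional calculus stays inside $\cB$, which holds because finite-dimensional $C^*$-algebras are closed under continuous functions of self-adjoint elements, and that the one-sided unitarity $u^*u = I$ upgrades to full unitarity, which follows from finite-dimensionality.
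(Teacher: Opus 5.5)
Your proof is correct. It shares its ingredients with the paper's proof: the spectrum of $h=x^*x$ lies in $[1-\epsilon,1+\epsilon]$, functional calculus keeps $h^{-1/2}$ in $\cB$, and finite-dimensionality upgrades $u^*u=I$ to unitarity (the paper phrases this as invertibility of $xx^*$).

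The real difference is in the norm estimate. The paper factors out $x$ on the left, $\|x-u\| \le \|x\|\,\|I-h^{-1/2}\| \le (1+\epsilon)^{1/2}\bigl((1-\epsilon)^{-1/2}-1\bigr)$. This incurs the factor $\|x\|$ together with the blow-up of $h^{-1/2}$ at the bottom of the spectrum, and it needs $\epsilon<1/2$ to close out at $2\epsilon$. You instead write $x=uh^{1/2}$ and factor out the unitary, which turns the estimate into the exact identity $\|x-u\|=\|h^{1/2}-I\|$. That gives the sharper bound $\|x-u\|\le \epsilon/(1+\sqrt{1-\epsilon})\le\epsilon$, valid for every $\epsilon<1$.

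So your route costs nothing extra and yields a better constant and a wider range of $\epsilon$. The paper's bound is a cruder but sufficient version of the same computation.
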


\begin{proof}
Since $\|x^*x-I\|<1$, the spectrum of $x^*x$ is contained in $[1-\epsilon,1+\epsilon] \subset (0,\infty)$, so $x^*x$ is invertible. Hence $u=x(x^*x)^{-1/2} \in \cB$ is well-defined and satisfies
\[
u^*u=(x^*x)^{-1/2}x^*x(x^*x)^{-1/2}=I.
\]
Since $x^*x$ is invertible, also $xx^*$ is invertible, so $u$ is unitary. Note
\[
\|x-u\|
=
\|x-x(x^*x)^{-1/2}\|
\le
\|x\|\,\|I-(x^*x)^{-1/2}\|.
\]
Also,
\[
\|x\|^2=\|x^*x\| \le 1+\epsilon,
\]
so $\|x\| \le (1+\epsilon)^{1/2}$. Finally, since $\sigma(x^*x)\subset [1-\epsilon,1+\epsilon]$,
\[
\|(x^*x)^{-1/2}-I\|
\le
\max_{t\in[1-\epsilon,1+\epsilon]} |t^{-1/2}-1|
=
(1-\epsilon)^{-1/2}-1.
\]
Therefore
\[
\|x-u\|
\le
(1+\epsilon)^{1/2}\left((1-\epsilon)^{-1/2}-1\right)
=
\frac{(1+\epsilon)^{1/2}-(1-\epsilon)^{1/2}}{(1-\epsilon)^{1/2}}.
\]
With $\epsilon<1/2$,  we obtain the desired $\|x-u\|\le 2\epsilon$.
\end{proof}

\begin{proposition}[A unitary near a subalgebra is near a unitary in the subalgebra]
\label{prop:unitary-near-subalgebra}
Let $\cB \subset \cA$ be unital finite-dimensional $C^*$-algebras, and let $u \in \cA$ be unitary. If $u \in_\delta \cB$ and $(2+\delta)\delta < 1/2$, then there exists a unitary $v \in \cB$ such that
\[
\|u-v\| \le (5+2\delta)\delta.
\]
In particular, if $\delta \le 1/8$, then
\[
\|u-v\| \le 6\delta.
\]
\end{proposition}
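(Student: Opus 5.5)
The plan is to project $u$ onto $\cB$ and then take the unitary polar part of the result, using \cref{prop:almost-unitary-polar}.

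\emph{Step 1: a nearby element of $\cB$.} Since $u \in_\delta \cB$ and $\norm{u}=1$, there is some $x \in \cB$ with $\norm{u-x} \le \delta$. It follows that $\norm{x} \le 1+\delta$.

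\emph{Step 2: $x$ is almost unitary.} We estimate $x^*x - I$ using $u^*u=I$:
\begin{align*}
x^*x - I = x^*x - u^*u = x^*(x-u) + (x-u)^*u .
\end{align*}
This gives
\begin{align*}
\norm{x^*x-I} \le \norm{x}\,\delta + \delta \le (1+\delta)\delta + \delta = (2+\delta)\delta =: \epsilon .
\end{align*}
The hypothesis $(2+\delta)\delta < 1/2$ is exactly the condition $\epsilon<1/2$ needed in \cref{prop:almost-unitary-polar}.

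\emph{Step 3: polar part.} Apply \cref{prop:almost-unitary-polar} inside $\cB$. This yields a unitary $v = x(x^*x)^{-1/2} \in \cB$ with $\norm{x-v} \le 2\epsilon$. Membership in $\cB$ holds because $\cB$ is a unital $C^*$-algebra, so it is closed under continuous functional calculus.

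\emph{Step 4: combine.} By the triangle inequality,
\begin{align*}
\norm{u-v} \le \norm{u-x} + \norm{x-v} \le \delta + 2(2+\delta)\delta = (5+2\delta)\delta .
\end{align*}
When $\delta \le 1/8$ we have $(2+\delta)\delta \le 17/64 < 1/2$, so the hypothesis is satisfied, and $5+2\delta \le 21/4 \le 6$, which gives the bound $\norm{u-v}\le 6\delta$.

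There is no real obstacle here. The only point needing care is Step 2, where the two factors must be arranged so that $\norm{x}\le 1+\delta$ appears only once; this is what makes the resulting constant match $(2+\delta)\delta$ exactly.
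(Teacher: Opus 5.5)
Your proposal is correct and follows the paper's proof essentially verbatim. You pick $x\in\cB$ with $\norm{u-x}\le\delta$, bound $\norm{x^*x-I}\le(2+\delta)\delta$ by the same decomposition, take the polar part via \cref{prop:almost-unitary-polar}, and finish with the triangle inequality. Your explicit check that $\delta\le 1/8$ implies the hypothesis $(2+\delta)\delta<1/2$ is a small detail the paper leaves implicit.
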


\begin{proof}
Choose $x \in \cB$ with $\|u-x\|\le \delta$. Since $u$ is unitary,
\[
x^*x-I = x^*x-u^*u = x^*(x-u)+(x^*-u^*)u.
\]
Therefore,
\[
\|x^*x-I\|
\le
\|x\|\,\|x-u\|+\|x^*-u^*\|\,\|u\|
\le
\|x\|\delta+\delta.
\]
Also $\|x\| \le \|u\|+\|x-u\| \le 1+\delta$, so
\[
\|x^*x-I\| \le (2+\delta)\delta.
\]
By assumption, $(2+\delta)\delta<1/2$, so \cref{prop:almost-unitary-polar}
applies to $x$, giving a unitary
\[
v := x(x^*x)^{-1/2} \in \cB
\]
such that
\[
\|x-v\| \le 2\|(x^*x)-I\| \le 2(2+\delta)\delta.
\]
Hence
\[
\|u-v\| \le \|u-x\|+\|x-v\|
\le
\delta + 2(2+\delta)\delta
=
(5+2\delta)\delta.
\]
If $\delta \le 1/8$, we obtain the desired $\|u-v\| \le 6\delta.$
\end{proof}

\bibliographystyle{plain}
\bibliography{references}
\end{document}